\def\Z2{\mathbb Z_2}
\def\r{\textrm{r}}
\def\ir{\textrm{ir}}
\def\I{\mathrm i}
\def\unity{\mathbbm 1}
\def\pr{{\rm prob}}
\theoremstyle{definition}
\newtheorem{Lemma}{Lemma}
\newtheorem{Definition}[Lemma]{Definition}
\newtheorem{Theorem}[Lemma]{Theorem}
\newtheorem{Corollary}[Lemma]{Corollary}
\begin{document}
\title{Absence of localization in two-dimensional Clifford circuits}

\author{Tom Farshi\,\orcidlink{0000-0003-0504-7142}}
\affiliation{Department of Computer Science, University College London, 
United Kingdom}
\affiliation{Department of Physics and Astronomy, University College London, 
United Kingdom}
\affiliation{School of Mathematics, University of Bristol, United Kingdom}

\author{Jonas Richter\,\orcidlink{0000-0003-2184-5275}}
\affiliation{Department of Physics and Astronomy, University College London, 
United Kingdom}
\affiliation{Department of Physics, Stanford University, USA}
\affiliation{Institut f\"ur Theoretische Physik, Leibniz Universit\"at 
Hannover, Germany}

\author{Daniele Toniolo\,\orcidlink{0000-0003-2517-0770}}
\affiliation{Department of Computer Science, University College London, 
United Kingdom}
\affiliation{Department of Physics and Astronomy, University College London, 
United Kingdom}

\author{Arijeet Pal\,\orcidlink{0000-0001-8540-0748}}
\affiliation{Department of Physics and Astronomy, University College London, 
United Kingdom}

\author{Lluis Masanes\,\orcidlink{0000-0002-1476-2327}} 
\email[]{l.masanes@ucl.ac.uk}
\affiliation{Department of Computer Science, University College London, United 
Kingdom}
\affiliation{London Centre for Nanotechnology, University College London, 
United Kingdom}

\begin{abstract}
We analyze a Floquet circuit with random Clifford gates in one and two spatial dimensions. 
By using random graphs and methods from percolation theory, 
we prove in the two dimensional setting that some local operators grow at 
ballistic rate, which implies the 
absence of localization. 
In contrast, the one-dimensional model displays a 
strong form of localization characterized by the emergence of left and 
right-blocking walls in random locations.
We provide additional insights by complementing our 
analytical results with numerical 
simulations of operator spreading and entanglement growth, 
which show the absence (presence) of localization in 
 two-dimension (one-dimension). 
Furthermore, we unveil that the spectral form factor of the Floquet unitary in two-dimensional 
circuits behaves like that of quasi-free fermions with chaotic single 
particle dynamics, 
with an exponential ramp that persists till times scaling 
linearly with the size of the system. Our work sheds light on 
the nature of disordered, Floquet Clifford dynamics and its relationship to fully chaotic quantum 
dynamics. 
\end{abstract}

\maketitle

\section{Introduction}

Understanding the dynamics of quantum many-body systems far from equilibrium is 
of fundamental importance for preparing and controlling quantum states of matter 
\cite{Polkovnikov_2011}. The universal dynamical behavior provide signatures of 
novel quantum phase of matter and their underlying patterns of quantum 
information. Studying the dynamics of quantum many-body systems is notoriously 
challenging due to the exponential growth of the 
Hilbert space. In recent years, simulating dynamics in a quantum circuit  
architecture has opened new directions for probing quantum chaos, hydrodynamics, 
and non-equilibrium phases of matter \cite{Nahum_2017, Nahum_2018, 
von_Keyserlingk_2018, Rakovszky_2018, Khemani_2018a, Zhou_2019, Brandao_2019, 
Potter2021}. They are particularly suitable for quantum simulation on noisy 
intermediate-scale quantum (NISQ) devices and have been studied on the 
state-of-the-art physical platforms. Novel quantum many-body phenomena can be 
realised in these experiments providing a test bed for the theoretical 
ideas with potential applications in protecting and processing quantum 
information.

The dynamics in circuit models are encoded in the form of k-local unitary gates  
acting on qubits. The geometry of the gates and their symmetries provide access 
to a wide range of model phenomena which are analytically tractable and can be 
approximated efficiently. Due to their tunability and control, circuit models 
provide minimal models for complex quantum phenomena,  including systems with 
kinetic constraints 
\cite{Moudgalya_2021, Gopalakrishnan_2018}, dual-unitary structure 
\cite{Bertini_2019_2, Claeys_2021}, periodic dynamics \cite{Farshi_2022}, 
or long-range interactions \cite{Richter_2022}. Notably, quantum circuits with  
random gates have 
provided a powerful framework to strive for a quantum computational advantage 
\cite{Arute_2019} as well as to study quantum information scrambling on 
existing quantum hardware \cite{Mi_2021}.

The vast majority of many-body quantum systems are 
ergodic and relax to thermal equilibrium under unitary time evolution 
\cite{D'Alessio_review_2016, Gogolin_2016}.
Exceptions to this generic thermalizing behavior 
include integrable models \cite{Vidmar_2016}, which possess an extensive set of 
conservation laws, as well as models exhibiting localization 
\cite{Nandkishore_2015, Abanin_2019}. Localization can arise in systems with 
sufficiently strong disorder, and one of its signatures is the slow growth of 
any initially-local operator when evolving in the Heisenberg picture. 
In a quantum circuit, localization in terms of operator growth can be of  
single-particle type, where the support of any initially local operator 
remains confined in a finite region for all times \cite{Stolz_2011, 
Aizenman_2015}, as realized for instance in lattice models of non-interacting 
fermions with random on-site potential \cite{Anderson_1958} and for the corresponding mapping to spin-systems \cite{Sims_Stolz_2012, Nach_2016}.   
For a typical realization of our circuit in one dimension, this definition takes into account the 
appearance of walls that block the spread of any operator, like illustrated in 
Fig.\ \ref{fig:lightcone}~(a). 
On the other hand, 
many-body localization exhibits the growth of support of a local operator as 
the logarithm of time 
\cite{Das_Sarma_2017}. In particular the log-like growth of entanglement 
entropy in time \cite{Znidaric_2008, Bardarson_2012,Serbyn_2013} tells these 
systems apart from the Anderson-type localized as defined below 
in definition \ref{def_Anderson_type}. 
In one-dimensional systems, a many-body localized phase with an 
extensive set of exponentially localized integrals of motion might exist at 
sufficiently strong disorder \cite{Imbrie_2016}. However, the 
asymptotic existence of this MBL phase in the thermodynamic limit is still 
under active debate as localization has recently been found to be unstable  
even at rather large values of disorder \cite{De_Roeck_2017_1, _untajs_2020, 
Prosen_chal_2020, Sierant_2020, Morningstar_2022}.
Moreover, in higher dimensions, analytical arguments and 
numerical calculations suggest that MBL is 
unstable \cite{deRoeck_Imbrie_stable_2017, Wahl_2018, 
Sels_2021, Kiefer_Emmanouilidis_2020, Roop_2022, Richter_2022_2}.

Disorder in quantum circuit models can be introduced in space and time where  
the gates are chosen at random. Random unitary dynamics without any symmetries 
or constraints lead to complete mixing \cite{Nahum_2018, von_Keyserlingk_2018, 
D_Alessio_2014, Lazarides_2014} 
while certain time-periodic circuits can 
exhibit non-ergodic dynamics \cite{Farshi_2022, Sunderhauf_2018}. 
There are also various forms of kinetically 
constrained random circuits, akin to fractonic models, which exhibit 
localization \cite{Khemani_2020, Pai_2019}. In this work we 
shed light on the role of 
dimensionality on 
localization in random Floquet Clifford circuits. 
Floquet circuits have 
been extensively studied 
\cite{Prosen_2018_1,Prosen_2018_2,Bertini_2019, Chalker_2018, 
Chalker_PRX_2018, Sunderhauf_2018} and have provided key insights into the 
properties of periodically-driven quantum systems \cite{Bukov_review_2015}. They 
have been essential to 
rigorously demonstrate the occurrence of chaos and random-matrix behavior in 
isolated quantum systems \cite{Prosen_2018_1, Chalker_2018} and, in combination with disorder 
and many-body 
localization, they can host exotic non-equilibrium phases of matter with no 
equilibrium counterpart \cite{Khemani_2016, Else_2016}. Moreover, certain 
circuits with dual-unitary 
structure allow for a controlled tuning between ergodic and 
non-ergodic dynamics \cite{Claeys_2021, Bertini_2019_2}. In this work, we analytically and numerically study 
absence of localization in two-dimensional Clifford circuits and characterise 
the integrable nature of chaos using measures of operator growth and spectral 
form factor. Moreover we numerically contrast the two-dimensional against the one-dimensional case showing for the latter localization at the dynamical level. 

From a quantum information perspective, the Clifford group plays a key role in
fault-tolerant quantum computing and randomized benchmarking 
\cite{Knill_2008, Magesan_2011, Onorati_2019}. Recently, random circuits consisting 
of Clifford gates have provided useful insights into 
quantum many-body physics \cite{Nahum_2017, Li_2018, Farshi_2022, Ludwig_2021, Lunt_2021} due to their 
efficient 
simulability on classical computers despite the generation of extensive 
amounts of entanglement. This efficient simulability of Clifford dynamics can 
also be understood in terms of a phase space 
representation analogous to that of quasi-free bosons and fermions, with the 
dimension of the phase space being exponentially smaller than the Hilbert 
space, 
see appendix A of \cite{Farshi_2022} and  \cite{Gottesman_1998, Aaronson_2004}. 
Yet, Clifford circuits 
form unitary designs \cite{Harrow_2009, 
Webb_2016, Zhu_2017} such that circuit averages of certain relevant quantities 
can exactly reproduce Haar averages over the full unitary group. 
Despite their simplicity, Clifford circuits can thus prove useful to gain 
insights into some aspects of the dynamics of more generic quantum systems, 
including the buildup of out-of-time-ordered correlators or the growth of 
entanglement \cite{Nahum_2018, von_Keyserlingk_2018}.

In addition, as we will demonstrate, the Clifford circuits studied in this paper 
are to some extent 
tractable analytically by a suitable mapping to directed graphs. It is known 
that random time-dependent (i.e.~annealed disorder) cellular automata can be 
analysed with directed-percolation theory \cite{Derrida_1986, Liu_2021}, and 
that Clifford circuits can be represented as cellular automata. However, the 
circuits considered by us are not random in 
time, they have quench disorder, and hence, they cannot in general be 
solved with directed graphs. Nevertheless, as we 
will show below, in order to probe their behaviour, we only 
need to analyse the dynamics at the edge of the light-cone. Moreover, 
at the light-cone edge, spatial disorder is equivalent to time disorder, 
producing an effectively annealed dynamics, which can be analysed with 
percolation theory.

The hybrid nature of Clifford circuits between integrable and chaotic 
systems also reflects itself in the emergence of ergodicity and 
localization. On one hand, it was shown in Ref. \cite{Farshi_2022}, 
that Floquet Clifford circuits exhibit Anderson-type localization in  
one 
dimension (1D), cf.\ Fig.\ \ref{fig:lightcone}~(a) and 
definition \ref{def_Anderson_type}. On the other hand, we 
prove here, and numerically show cf.\ Fig.\ 
\ref{fig:lightcone}~(b),  as 
a main result, that in two dimensions (2D) some operators always grow at 
ballistic rate such that the model 
does not localize, despite having strong disorder, contrasting 
the phenomenon in one dimension. 
We also provide numerical evidence that the ballistic growth happens for almost 
all local operators.
Moreover, the absence of localization in our time-periodic 2D circuit 
model differs from the behavior of disordered free-fermion 
models which show Anderson localization for arbitrarily weak disorder in 2D 
\cite{Abrahams_1979, Lee_1985, Evers_2008}.

We elucidate further aspects of the Floquet Clifford dynamics by complementing 
our analytical 
results with numerical simulations, where we 
demonstrate that operator spreading is exponentially suppressed in 
1D, reminiscent of the exponential dynamical-localization of the wave function 
in Anderson insulators. On the contrary, we find that operator 
spreading in 2D occurs ballistically with light-speed velocity and that 
the interior of the light-cone becomes fully scrambled and featureless. These 
findings are substantiated by the temporal buildup of entanglement, 
which is bounded and system-size independent in 1D, whereas 
it grows linearly and saturates towards extensive values in 2D. Finally, we also 
study the spectral form factor (SFF) of the Floquet Clifford circuit,
which is a key quantity to diagnose the emergence of quantum 
chaos. 
We show that the SFF is similar to that of free fermions whose associated 
single-particle dynamics is chaotic.
Specifically, we find that, in the 2D model, the SFF exhibits an 
exponential 
ramp at early times that persists up to a time that scales linearly with the 
size of the system,
suggesting that ergodicity in 
the case of Clifford dynamics should be understood with respect to the 
exponentially smaller phase space.
\begin{figure}[tb]
  \center
  \includegraphics[width=\columnwidth]{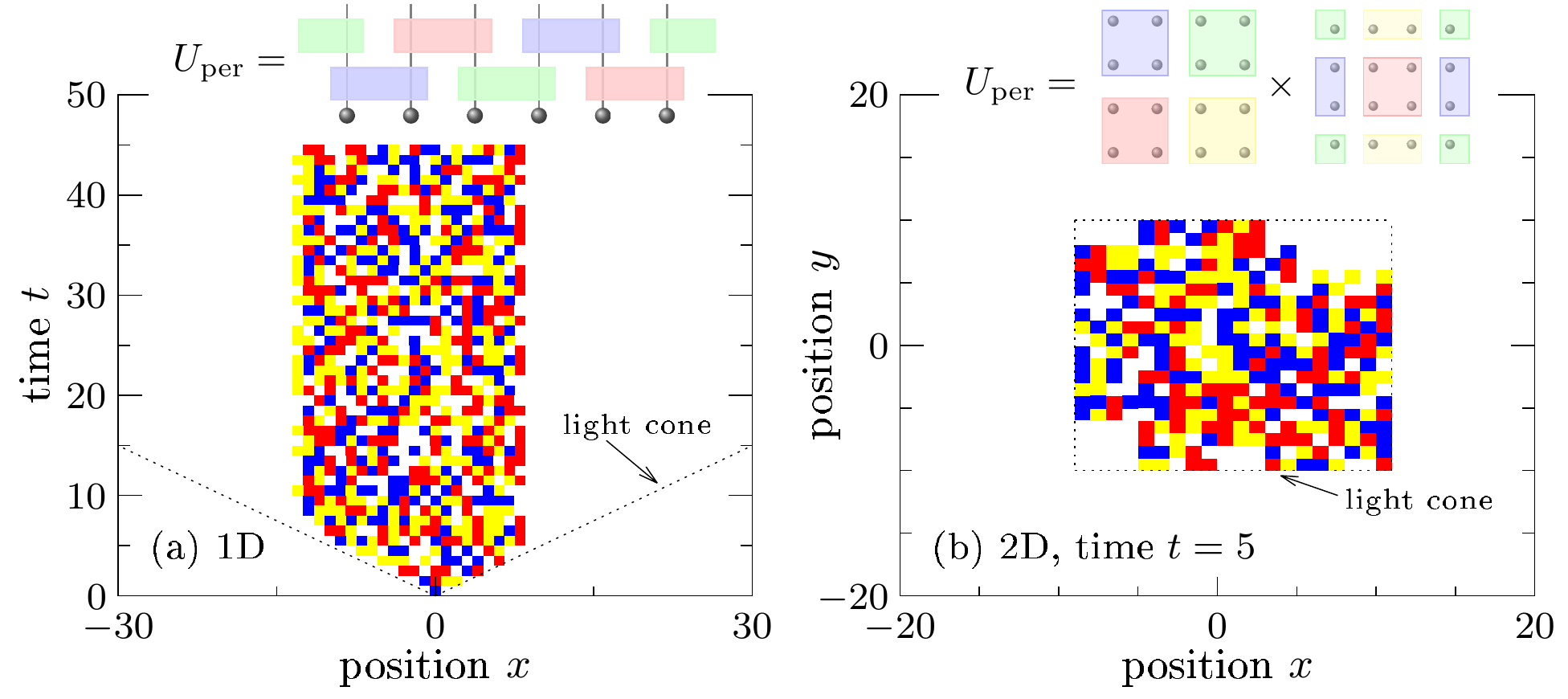}  
  \caption{{\bf Localization and its absence in Floquet 
Clifford circuits.} 
A local operator $\sigma_x$ acting non-trivially only on 
a single site evolves according to a time-periodic Clifford circuit in (a) 
one dimension and (b) 
two dimensions. In 1D, the Floquet unitary $U_\text{per}$ consists of a 
brickwork pattern of 2-qubit gates, while in 2D it consists of two layers of 
4-qubit gates, cf.\ Fig.\ \ref{fig:Localization_2D_Absence_ModelTimeEvolution}. 
The color plot encodes the local matrices of the time-evolved operator string 
$U(t)\sigma_x U^\dagger(t)$ according to $\unity$ 
(white), $\sigma_x$ (blue), $\sigma_y$ (red), $\sigma_z$ (yellow). Data is 
shown for a single random realization of the Clifford gates. For better 
visualization of the 2D data, we focus in 
panel (b) on a single point in time, $t = 5$, 
where the light cone boundary (dotted square) is of size $20 \times 20$, cf.\ 
definition \ref{Def:lightcone}.
As proven in Ref.\ \cite{Farshi_2022}, the 1D circuit exhibits localization 
due to the emergence of left- and right-sided walls that confine the evolution 
at all times. This confinement affects all (not necessarily Pauli) operators 
with support between the two walls. Inside the confined region the evolution 
looks ergodic. In contrast, in 2D, localization is absent as can be seen from 
the fact that parts of the operator grow with light-cone speed. 
}
\label{fig:lightcone}
\end{figure}

Our work provides a 
comprehensive picture of localization and chaos in disorderd, Floquet Clifford 
quantum-circuits, in terms of directed percolation, at the light-cone, and 
information spreading in classical cellular automata. Understanding of 
non-equilibrium quantum many-body states in Clifford circuits provides an 
important starting point for studying the phenomena in generic conditions. It 
can serve as a useful benchmark for identifying quantum effects of simulation in 
noisy quantum devices. The relevance of Clifford circuits for quantum 
error-correction can also provide valuable insights into combining 
error-correction protocols with quantum simulation.

The rest of this paper is structured as follows.
Section \ref{sec:model} contains a description of the model, including the 
definition of Clifford unitaries, which are the building blocks of this quantum circuit. 
Section \ref{sec:results} contains the main result of our work, theorem 
\ref{thm:Localization_2D_Absence_AlternativeModel_AbsenceResult}, together with 
a discussion of its significance, as well as its numerical 
demonstration in Sec.\ \ref{Sec::Numerics}. 
In Sec.\ \ref{Sec::SFF}, we numerically study the spectral 
form factor of Floquet Clifford unitaries which 
supports our observation of localization in 1D, see also \cite{Farshi_2022}, 
and ergodic 
dynamics in 2D. We also provide a rigorous lower bound on the time-averaged 
spectral form factor.
Section \ref{sec:proof} contains the proof of theorem 
\ref{thm:Localization_2D_Absence_AlternativeModel_AbsenceResult}, which uses 
random graphs and methods reminiscent of those of percolation theory. 
 Eventually, in Sec.\  
\ref{sec:int/chao}, we provide a discussion about the fact that 
Clifford dynamics appears to share properties of integrable systems and chaotic 
systems, Sec.\ \ref{sec:conclusions} summarizes the results of this work and 
provides some outlook. Appendix \ref{App} provides the relation among 
localization length and the average position of a blocking wall in the dynamics 
of the one-dimension model.

\section{Results}

\subsection{Description of the model}
\label{sec:model}

Consider an infinite two-dimensional square lattice with sites labeled by 
$(x,y)\in \mathbb Z^2$. In each site there is a spin-1/2 particle, or qubit, 
which has Hilbert space $\mathbb C^2$. The dynamics of the system is 
time-periodic, with the period being one unit of time. Hence, at integer times 
$t\in \mathbb Z$, the evolution operator can be written as $U(t) = 
(U_{\text{per}})^t$. The unitary $U_{\text{per}}$ that describes the dynamics of a single period has the form
\begin{equation}
  \label{eq:Localization_2D_Absence_Depth2_Uperiod}
  U_{\text{per} } = \left( \bigotimes_{ x,y \text{ odd}} W_{(x,y)} \right)  
\left( \bigotimes_{ x,y\text{ even}} W_{(x,y)} \right) ,
\end{equation}
where, for any $(x,y)\in \mathbb Z^2$, the local unitary $W_{(x,y)}$ acts on  
the four sites $(x,y)$, $(x+1,y)$, $(x,y+1)$ and $(x+1,y+1)$. In 
\eqref{eq:Localization_2D_Absence_Depth2_Uperiod}, $ ( x,y \text{ odd} ) $ means 
that both $ x $ and $ y $ are odd, analogously $ ( x,y \text{ even} ) $ means 
that both $ x $ and $ y $ are even. In the dynamics described by 
\eqref{eq:Localization_2D_Absence_Depth2_Uperiod} each time period decomposes 
into two half time-steps, which are illustrated in Fig.\ 
\ref{fig:Localization_2D_Absence_ModelTimeEvolution}. 
This quantum circuit with local interactions is an example of 
a quantum cellular  
automaton \cite{Farrelly_2019, Gutschow_2010, 
Schlingemann_2008, Gopalakrishnan_2018}.
The evolution operator $U(t)$ can also be generated by a time-periodic  
Hamiltonian $H(t)$ with local interactions
\begin{equation}
  U(t) = \mathcal T e^{-\I \int_0^t d\tau H(\tau)}\ ,
\end{equation}
where $\mathcal T$ stands for time-ordered exponential.
This type of dynamics is called Floquet dynamics \cite{Bukov_review_2015}.
\begin{figure}[tb]
\center
\includegraphics[width=0.85\columnwidth]{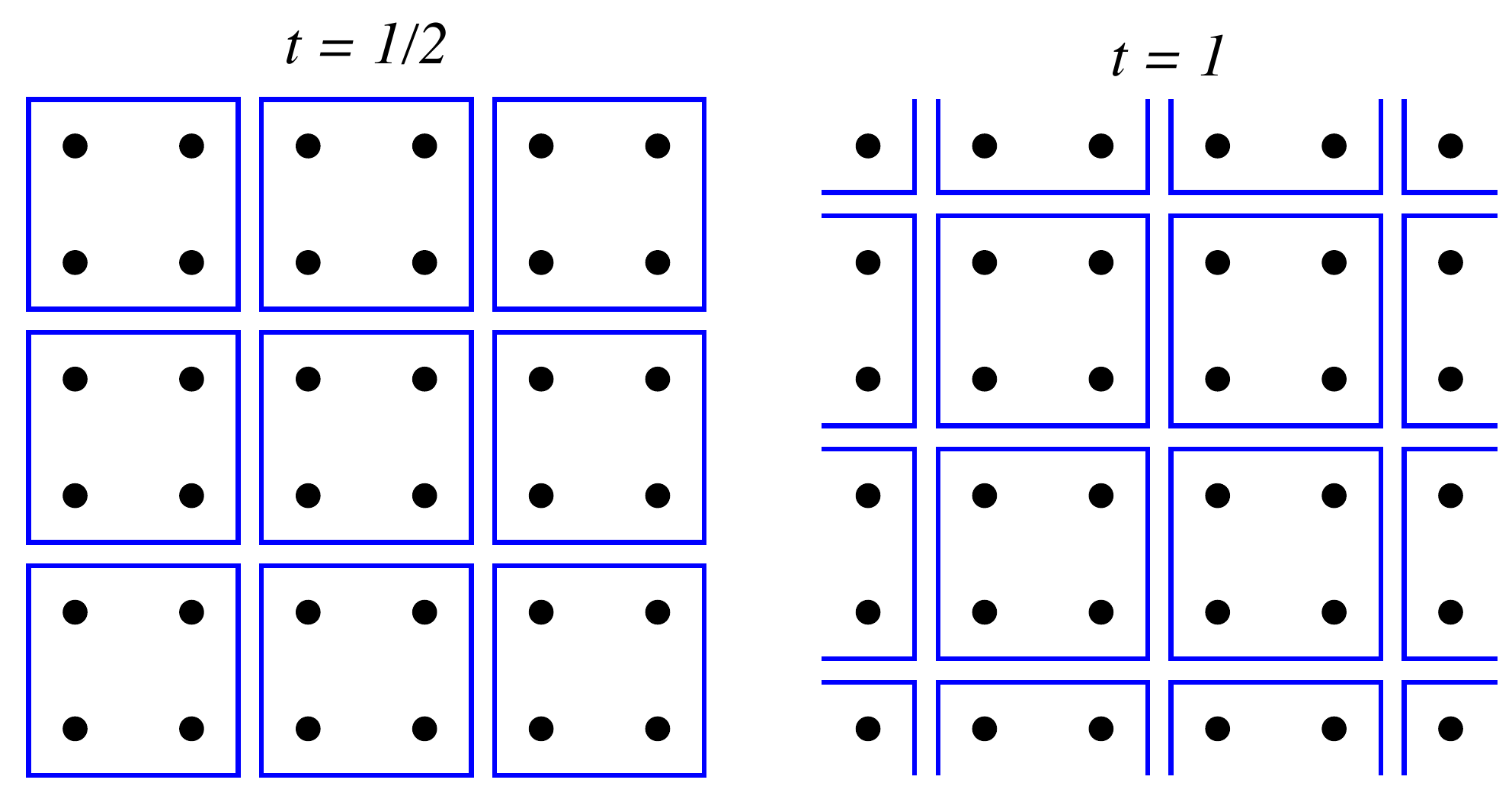}  
\caption{\textbf{Local dynamics.} This figure represents the two layers of 
local 
unitaries generating the dynamics of a time period $U_{\text{per} }$. Each black 
dot represents a qubit and each blue square represents a four-qubit unitary 
$W_{(x,y)}$.}
\label{fig:Localization_2D_Absence_ModelTimeEvolution}
\end{figure}

The disorder of the system is represented by the fact that each local unitary  
$W_{(x,y)}$ is sampled independently from the uniform distribution over the 
4-qubit Clifford group. A unitary $W$ is Clifford if it maps any product of 
Pauli sigma matrices to another product of Pauli sigma matrices 
\cite{Gottesman_1998, nielsen_chuang_2010}. An example of Clifford unitary $W$ 
acting on $\mathbb C^2 
\otimes \mathbb C^2$ is
\begin{align}
  W(\sigma_x \otimes \unity) W^\dagger
  &= \sigma_z \otimes \sigma_x\ , \label{Eq::Cliff1} \\
  W(\sigma_z \otimes \unity) W^\dagger
  &= \unity \otimes \sigma_z\ , 
  \\
  W(\unity \otimes \sigma_x) W^\dagger
  &= \sigma_z \otimes \unity\ , \\
  W(\unity \otimes \sigma_z) W^\dagger
  &= \sigma_x \otimes \sigma_z\ .
\end{align}
It can be seen that this four identities fully specify $W$ up to a global phase. 
Clearly, having random Clifford unitaries in neighboring unit 
cells is a somewhat different type of disorder than the one usually considered 
in the study of localization (e.g., random on-site magnetic fields in 
spin-chain models \cite{Nandkishore_2015, Abanin_2019}). For instance, apart 
from the fact that the unitaries are 
drawn independently, it is not immediately obvious how to quantify the strength 
of disorder in the random circuit model.

In summary, we have an ensemble of models of which we are going to study the  
typical behaviour. Note that, while the statistics defining the model is 
translation invariant, typical instances are not. This approach is standard in 
the study of disorder and localization \cite{Stolz_2011}. We also have to 
mention that the Floquet Clifford circuit described in 
Eq.\ \eqref{eq:Localization_2D_Absence_Depth2_Uperiod} and Fig.\ 
\ref{fig:Localization_2D_Absence_ModelTimeEvolution} is a 
straightforward 2D version of the 
1D model considered in \cite{Farshi_2022}. Remarkably, in 
Ref.\ \cite{Farshi_2022} it was proven that the 1D Floquet model 
with 2-qubit Clifford gates acting in a brickwork pattern supports 
Anderson-type localization. More specifically, initially-local operators remain 
confined to a finite region in space during the time evolution due to the 
emergence of left and right-sided walls, cf.\ 
Fig.\ \ref{fig:lightcone}~(a), that confine the operator spreading at 
all times, see also \cite{Chandran_2015}. 
Here, we will proof that this kind of localization is 
absent in the 2D model, cf.\ Fig.\ \ref{fig:lightcone}~(b). In this context, 
let us note that this absence of localization in 2D is in contrast to the 
results of Ref.\ \cite{Chandran_2015}, which studied similar Floquet Clifford 
circuits hosting a localization-delocalization transition in 2D. However, while 
Ref.\ \cite{Chandran_2015} considered a drastically reduced gate set (only two 
different two-qubit Clifford elements), our circuits are more general with 
gates being drawn at random from the full Clifford group. 

One difference between the 1D [Fig.\ \ref{fig:lightcone}~(a)] and the 
2D model [Fig.\ \ref{fig:lightcone}~(b)] is that 
in the former case the gates are sampled from the 2-qubit Clifford 
group ${\cal C}_2$, while in the latter case they are sampled from the 4-qubit 
Clifford group ${\cal C}_4$. Both 
contain a finite number of distinct elements, but their dimensions differ 
substantially, i.e., $|{\cal C}_2| = 11520$ and $|{\cal C}_4| \approx 1.2\times 
10^{13}$
\cite{Koenig_2014}.  
\begin{figure}[tb]
\center
\includegraphics[width=0.9\columnwidth]{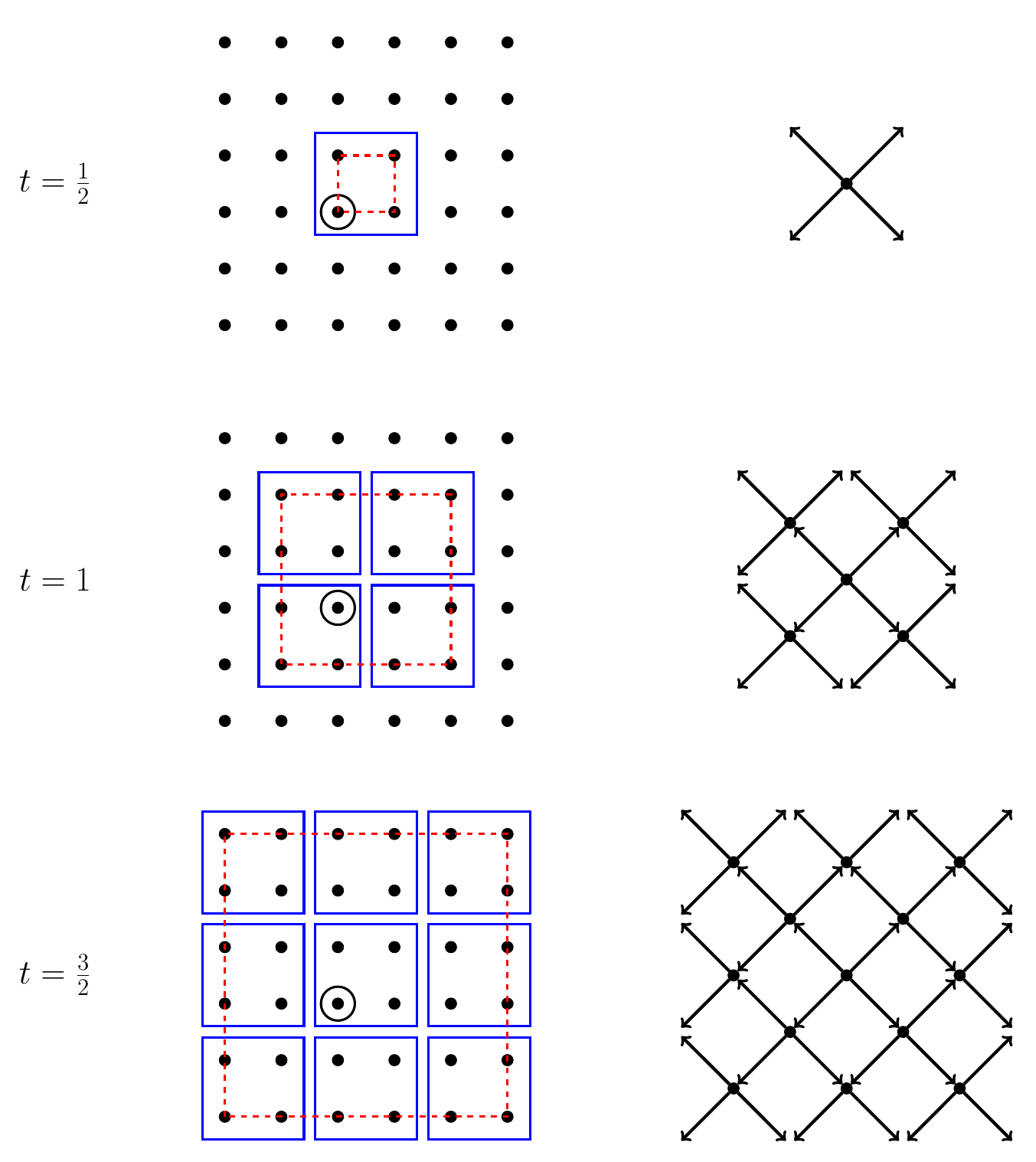}
\caption{\textbf{Light-cone boundary growth.}
  The left column shows the light-cone and its boundary at times  $t \in  
\{1/2, 
1, 3/2\}$ of an operator acting on the encircled site at $t=0$.
  Black dots represent sites, blue squares represent four-qubit unitaries and 
the red dashed line is the light-cone boundary.
  The right column shows the piece of the directed graph $G$ which describes 
  the growth history of the boundary up to the corresponding time $t$. 
  Each vertex in $G$ represents a unitary and each arrow represents a qubit at 
  a particular time $t$. The arrows which do not point to any vertex represent 
the qubits of the boundary at time $t$.}
\label{fig:Localization_2D_Absence_AlternativeModel_BoundaryGrowth}
\end{figure}

In the following, our analytical arguments 
will be focused on the 2D model 
\eqref{eq:Localization_2D_Absence_Depth2_Uperiod} and we refer the interested 
reader 
to Ref.\ \cite{Farshi_2022} for the proof of localization in 1D. 
However, in order to provide a concrete comparison of their properties, we 
will present 
numerical results of operator spreading and entanglement growth (Sec.\ 
\ref{Sec::Numerics}), as well as simulations of the spectral form factor (Sec.\ 
\ref{Sec::SFF}), for both 1D and 2D Floquet Clifford 
circuits.  

\subsection{Absence of localization}
\label{sec:results}

Many-body lattice systems display localization when the Lieb-Robinson velocity vanishes. Next we provide a definition of Anderson-type localization based on \cite{Sims_Stolz_2012}. 
\begin{Definition} \label{def_Anderson_type}
Let $O^{H}(t) = e^{iHt} O e^{-iHt} $ be the evolution in the Heisenberg picture of a local observable $O$ with support at the origin of the lattice, and let $ O^{H}_l(t) $ be the restriction of $ O^{H}(t) $ onto the ball of radius $ l>0 $ around the origin.
A system displays \textbf{Anderson-type localization} if there are parameters $ c, \mu > 0 $ such that
\begin{align}\label{eq_def_loc}
 \underset{H}{\mathbb{E}} 
 \sup_t \| O^{H}_l(t) - O^{H}(t) \|_\infty \le c \, e^{-\frac{l}{\mu}}\ ,  
\end{align}
for all $l>0$.
(The operator norm $\|A\|_\infty$ is the spectral radius of $A$, that for finite systems is the largest singular value of $A$.)
\end{Definition}
Note that the left-hand side of \eqref{eq_def_loc} involves an average over all realisations of the dynamics $H$, which usually correspond to  different realisations of a random potential.
In the case of QCAs or Floquet systems, the average is taken over the single-period unitary $U_{\text{per} }$ characterising the dynamics $O(t) = U_{\text{per} }^{-t} H U_{\text{per} }^{t}$.
In particular, in quantum circuits, the average runs over all realizations of the circuit. 
The above definition implies that, if a system has a dynamics such that a local operator grows at linear rate (ballistically) in some direction, then we can conclude that there is no localization of any type. With this in mind we make the following definitions.

\begin{Definition}
\label{Def:lightcone}
  The \textbf{light-cone} 
  is the largest support that a single-site operator at $t=0$ could have at each 
 future time $t$ when we maximize over instances of the dynamics (see figure 
\ref{fig:Localization_2D_Absence_AlternativeModel_BoundaryGrowth}).
  At each time $t$, the \textbf{boundary} of the light-cone consists of the  
outer qubits contained in the light-cone, which form a square of side $4t$.
\end{Definition}

\noindent
The approximate shape of the light-cone is a four-sided pyramid, the apex of 
which is the site where the initial operator is supported. The surface of this 
pyramid is the boundary of the light-cone. In general, an operator may not have 
full support inside the light-cone, because it acts as the 
identity in some sites at particular times (see figure 
\ref{fig:Localization_2D_Absence_DefinitionExample}). However, despite not 
having full support, an operator may grow at maximal speed, as defined next.

\begin{Definition}
\label{def:Localization_2D_Absence_AbsenceLocalization}
A model has \textbf{light-speed operator growth} if there is a single-site 
operator whose evolution has non-trivial support (is non-identity) on the 
boundary of the light-cone at all future times $t=1,2,3,\ldots$ (see figure 
\ref{fig:Localization_2D_Absence_DefinitionExample}).
\end{Definition}

\begin{figure}[tb]
\center
\includegraphics[width=0.45\columnwidth]{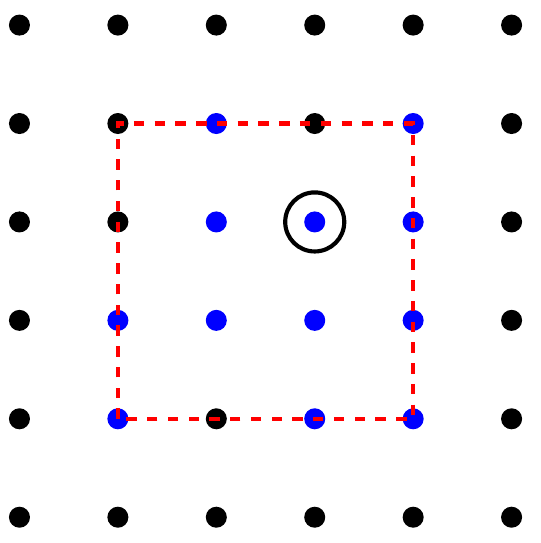}
\caption{\textbf{Light-speed 
growth.} In this example, an operator located at the circled site at time $t=0$ 
evolves into an operator with support on the blue sites at time $t=1$. Since 
some of this blue sites are at the light-cone boundary (red dashed line) this 
operator enjoys light-speed growth (at least up to time $t=1$).
}
\label{fig:Localization_2D_Absence_DefinitionExample}
\end{figure}

\noindent
Light-speed operator growth implies absence of localization.
However, not having light-speed operator growth does not imply localization. 
For example, ballistic growth at a rate smaller than unity is not localization. 
Also, diffusive dynamics, where the diameter of the support of an operator  
grows like the square root of time is not localization.
In what follows we introduce the main result of this work, which establishes 
the absence of localization in our model.

\begin{Theorem}
\label{thm:Localization_2D_Absence_AlternativeModel_AbsenceResult}
  Consider a Pauli operator which differs from the identity only on a single site at $t=0$.
  With probability at least $0.44$, the time evolution of this operator 
  generated by the random dynamics 
\eqref{eq:Localization_2D_Absence_Depth2_Uperiod} is non-identity on some sites 
of the boundary of the light-cone at all times $t \in \{1/2 , 1 , 3/2 , 2 , 5/2 
, \ldots \}$.
\end{Theorem}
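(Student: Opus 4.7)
My plan is to reduce the claim to a branching/percolation analysis on the random directed graph $G$ of Fig.~\ref{fig:Localization_2D_Absence_AlternativeModel_BoundaryGrowth}. The crucial starting observation, already highlighted in the introduction, is that at the light-cone edge every 4-qubit Clifford is being used for the very first time and is therefore independent of everything that shaped the operator inside the previous light cone; hence the evolution of the operator's restriction to the boundary is an annealed Markov process driven by i.i.d.\ fresh gates. Letting $S_t$ denote the set of boundary qubits at time $t$ carrying a non-identity Pauli, the theorem becomes the statement $\Pr[S_t\neq\emptyset\ \forall t]\geq 0.44$.

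First I would compute the one-gate marginals. Using the standard fact that a uniformly random $n$-qubit Clifford maps any fixed non-identity Pauli to a uniformly random non-identity Pauli on $n$ qubits, I would obtain, for instance, that conditioned on a non-identity input the two new-boundary qubits spawned by a single 4-qubit gate are simultaneously identity only with probability $15/255\approx 0.06$, and the expected number of non-identity spawns from a single old-boundary non-identity is therefore at least $240/255>1$. The resulting boundary process is supercritical in the branching-process sense.

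Second, I would convert this supercriticality into a quantitative lower bound on survival. Stochastically dominating $(S_t)_t$ from below by a standard Galton--Watson branching process with mean offspring $\mu>1$ gives an extinction probability $q<1$. The very first 4-qubit gate acting on the initial site already puts non-identity Pauli support on several of the four corners of the incipient pyramid, and from that moment on the four corners of the light cone evolve in disjoint regions of the lattice and so are genuinely independent. The overall extinction probability is then bounded by a product of per-corner extinction probabilities, and an explicit evaluation from the Clifford statistics should push $1-q^{4}$ above the announced $0.44$.

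The main obstacle is taming the spatial correlations on the boundary: two adjacent new-boundary qubits coming from the same 4-qubit gate are not independent, which obstructs a naive Galton--Watson comparison. I expect to handle this either by coupling $(S_t)_t$ from below to a coarser-grained process on a sparser sublattice on which distinct gates act on disjoint sets of qubits (sacrificing offspring count but gaining independence), or by passing to a multi-type branching process whose types record where a qubit sits in its parent gate's output pair. Once the correlations are under control, extracting the explicit numerical constant $0.44$ is a tedious but routine computation with the Clifford marginals.
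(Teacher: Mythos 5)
Your high-level framing matches the paper exactly: the fresh-gate observation at the light-cone edge, the reduction to a percolation problem on the random directed graph of Fig.~\ref{fig:Localization_2D_Absence_AlternativeModel_BoundaryGrowth}, and the use of Lemma~3 of Ref.~\cite{Farshi_2022} to extract the marginals of Lemma~\ref{lem:Localization_2D_Absence_Depth2_VertexProbDist}. Where you diverge is the percolation technique, and that is where the gaps lie. The paper de-correlates sibling arrows by a monotone perturbation (raising $P(0,0)$, lowering $P(1,1)$ until the distribution factorizes with $q\approx0.25$), restricts to a single quadrant $G^\triangle_\ir$, and then runs a Peierls-type contour count on the dual graph: the probability of a blocking $d$-wall is summed over $d$ and bounded by $0.56$, giving survival $\geq 0.44$ directly.

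Your Galton--Watson route has three concrete problems. First, an arithmetic slip: $240/255\approx 0.94<1$, so the displayed inequality does not establish supercriticality; the quantity you want is the mean offspring of a two-output gate, $2\cdot 192/255\approx 1.51$, which is indeed $>1$ but is a different computation. Second, and more seriously, the direction of the GW comparison: the natural tree unfolding of $G_\ir$ has no fan-in and therefore produces a \emph{larger} cluster than the true graph process, i.e.\ it dominates from above, which bounds extinction the wrong way. You correctly flag the need for a process that $S_t$ dominates from below, but the sparse-sublattice fix you sketch is delicate --- e.g.\ tracking only the corner qubit of a quadrant yields a single non-branching line surviving with probability $(1-q)^t\to 0$, which is useless --- and the multi-type variant is not developed to the point where one can see it closes. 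The paper's contour argument sidesteps the direction problem entirely. Third, the ``four corners are genuinely independent'' step needs care: the four quadrants share the central gate, so their extinction events are independent only after one first de-correlates that gate's outgoing arrows (this is what Lemma~\ref{lem:Localization_2D_Absence_Depth4_IndependentArrows} does); your sketch assumes the independence without justifying it. If all three issues were repaired, your plan could plausibly yield a \emph{stronger} constant than $0.44$ (roughly $1-0.56^4$), since the paper deliberately discards three quadrants for simplicity --- so the amplification idea is a genuine improvement in principle, but as written the argument is not yet a proof.
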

\noindent
That is, with probability 0.44, a particular Pauli operator at a particular 
site enjoys light-speed growth. Our numerical simulations strongly suggests that this happens for a fraction much larger than 0.44.
Non-Pauli operators are more prone to light-speed 
growth, because they have more than one term when expressed in the Pauli basis, 
and hence more likelihood that at least one term has light-speed growth.

Before turning to the formal proof of theorem 
\ref{thm:Localization_2D_Absence_AlternativeModel_AbsenceResult} in Sec.\ 
\ref{sec:proof}, we now provide numerical support for the presence 
(absence) of localization in 1D (2D) random Floquet 
Clifford circuits. In particular, let us note that while some aspects of the 
Clifford 
dynamics can be treated analytically (cf.\ Sec.\ \ref{sec:proof}), others are 
accessible only by numerical means.  

\subsection{Numerical analysis of 
Floquet Clifford circuits}\label{Sec::Numerics}

\subsubsection{Simulating Clifford circuits}

Clifford circuits can be efficiently simulated on classical 
computers by exploiting the stabilizer formalism \cite{Gottesman_1998, 
Aaronson_2004}. More specifically, a 
stabilizer state $\ket{\psi}$ on $L$ qubits can be uniquely 
defined by $L$ operators ${\cal P}_i$ according to 
${\cal P}_i \ket{\psi} = \ket{\psi}$, where the ${\cal P}_i$ are $L$-qubit 
Pauli strings. Instead of keeping track of the time evolution of the quantum 
state directly, $\ket{\psi} \to U(t)\ket{\psi}$, it is then useful to 
consider the evolution of the operators, ${\cal P}_i \to U(t){\cal 
P}_i U^\dagger(t)$. The latter can be done efficiently if 
$U(t)$ consists solely of Clifford gates which map single Pauli 
strings to other single Pauli strings, cf.\ Eq.\ \eqref{Eq::Cliff1}, in 
stark contrast to more general quantum evolution which 
would yield superpositions of multiple Pauli strings. More specifically, the 
$L$ stabilizers ${\cal P}_i$ can be encoded in a $L\times 2L$ binary matrix, 
also called the {\it stabilizer tableau}, the values of which are updated 
suitably upon the application of a Clifford gate.  As a consequence, the 
time and memory requirements scale only polynomially with the number of qubits, 
see Ref.\ \cite{Gottesman_1998} for further details. In fact, time-evolving the 
entire 
stabilizer tableau of a state $\ket{\psi}$ will here be necessary only for calculating the entanglement entropy in Fig.\ \ref{Fig::Entang}, whereas the analysis of 
operator spreading (Figs.\ \ref{fig:lightcone} and \ref{Fig::OpSpread}) and of 
the spectral form factor below (Fig.\ \ref{Fig::SFF}) requires the evolution 
of single (or, in case of the SFF, a specific set of) operators.   
Eventually, we note that there exist various 
efficient algorithms to generate random elements of the Clifford group 
\cite{Koenig_2014, Bravyi_2021}. 
We here follow the approach of Ref.\ \cite{Berg2020}, which we use to sample 
uniformly 
from ${\cal C}_2$ (${\cal C}_4$) in our simulations of 
1D (2D) circuit geometries. Except for Fig.\ \ref{fig:lightcone}, we 
then present results that are averaged over a sufficient number of random 
circuit realizations.    
\begin{figure}[tb]
 \centering
 \includegraphics[width=\columnwidth]{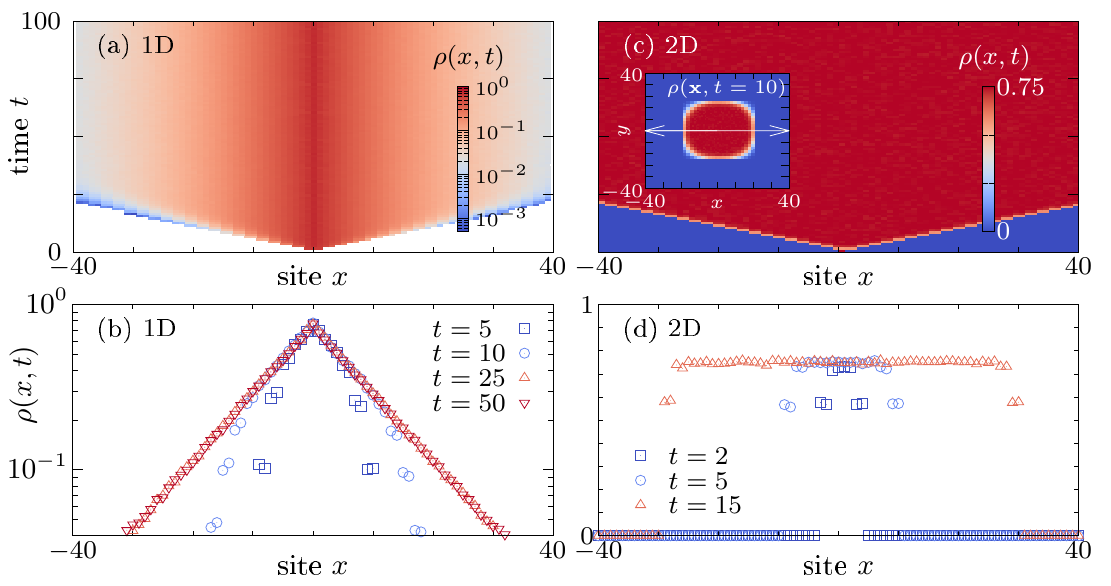}
 \caption{{\bf Operator spreading.} Circuit-averaged 
$\rho(x,t)$ in 1D [panels (a),(b)] and 2D [panels 
(c),(d)]. Data is obtained by time evolving an 
initially local operator $\sigma_x$ located at the origin, and averaging over 
$\sim 10^4$ random circuit realizations. Panels (b) and (d) show 
cuts of the data in (a) and (c) at fixed times $t$. For the better 
visualization of the 2D data, panel (c) shows $\rho(x,t)$ along a cut with $y = 
0$ [as indicated by the white arrow in the inset of panel (c), which depicts 
the full 2D data at $t = 10$]. While in 1D, the operator becomes 
exponentially localized, with $\rho(x,t) \propto e^{-\frac{x}{\mu}}$ and $ \mu \approx 10.4 $, the operator grows with light speed in 2D with a 
maximally scrambled light-cone interior as indicated by $\rho(x,t) \approx 
3/4$.}
 \label{Fig::OpSpread}
\end{figure}

\subsubsection{Operator spreading}

Figure \ref{fig:lightcone}~(a) demonstrates the occurrence 
of localization in 1D Floquet Clifford circuits. Specifically, considering  ${\cal O}(0) = \unity\otimes 
\cdots\otimes  \unity\otimes  \sigma_x\otimes  \unity\otimes  \cdots\otimes 
 \unity$, that differs from the identity only on one site, Fig.\ \ref{fig:lightcone} shows its time-evolution 
${\cal O}(t) = U(t) {\cal O}(0) U^\dagger(t)$ for a single 
random circuit realization. While the operator grows 
at light speed at short times, this growth eventually stops in 1D due to 
the emergence of left and right-blocking walls, 
leading to the Anderson-type localization behavior proven in Ref.\ 
\cite{Farshi_2022}. In contrast, this kind of localization is absent 
in the 2D circuit [Fig.\ 
\ref{fig:lightcone}~(b)], where we find that at least some parts of the 
time-evolved operator grow with light speed, that is, they act nontrivially on 
the light-cone boundary at all times.

In order to study this operator spreading in more 
detail, let ${\cal O}_x(t)$ denote the local Pauli matrix at the $x$-th 
position of the time-evolved operator. (For example, given a two-qubit 
system and ${\cal O}(t) = \sigma_x \otimes \sigma_z$, we might have ${\cal O}_1(t) = 
\sigma_x$ and ${\cal O}_2(t) = \sigma_z$.) We then define the quantity 
$\rho(x,t)$, with
\begin{equation}\label{def:rho}
 \rho(x,t) = \begin{cases}
              1\ ,\ &\text{if}\ {\cal O}_x(t) = \sigma_{x}, \sigma_y, \sigma_z 
\\
              0\ ,\ &\text{if}\ {\cal O}_x(t) = \unity
             \end{cases}\ , 
\end{equation}
In Fig.\ \ref{Fig::OpSpread}, we show the {\it circuit-averaged} dynamics of 
$\rho(x,t)$ for 1D and 
2D circuits. (Note that 
we here refrain from introducing another symbol for the average.)
On one hand, in the case of 1D circuits [Fig.\ \ref{Fig::OpSpread}~(a),(b)], we 
find that most of the 
operator's support remains close to the origin. More specifically, we find that 
the 
operator spreading is exponentially localized,  with $\rho(x,t) \propto e^{-\frac{x}{\mu}}$, $ \mu \approx 10.4 $, and becomes essentially 
stationary at long times, as can be seen by comparing the cuts of $\rho(x,t)$ 
at times $t = 25$ and $t = 50$ in Fig.\ \ref{Fig::OpSpread}~(b).

On the other hand, the situation is clearly different in 2D 
[Fig.\ \ref{Fig::OpSpread}~(c),(d)]. In this case, we find 
that $\rho(x,t)$ is essentially featureless inside the light cone with 
$\rho(x,t) \approx 3/4$, with a sharp drop to $\rho(x,t) \to 0$ outside the 
light cone. Note that for better visualization, the data in Fig.\ 
\ref{Fig::OpSpread}~(c) is shown for a cut at $y = 0$ of the original 2D 
data, cf.\ inset in 
\ref{Fig::OpSpread}~(c). Importantly, the circuit-averaged 
value $\rho(x,t) \approx 3/4$ indicates maximally scrambling dynamics. Namely, given 
the definition \eqref{def:rho} of $\rho(x,t)$ above, this value indicates 
that the matrices $\sigma_x,\sigma_y,\sigma_z$, and $\unity$ locally occur all
with 
equal probability.

It is in principle conceivable that there 
exist rare instances of random gate configurations such that the operator 
spreading becomes localized also in 2D. In practice, however, we do not 
observe any such localized instances and, in particular, there are no
notable signatures in the circuit-averaged behavior of $\rho(x,t)$ in Fig.\ 
\ref{Fig::OpSpread}~(c) and (d).

\subsubsection{Entanglement dynamics}

To proceed, we now turn to the buildup of entanglement  
$S(t)$ between two parts $A$ and $B$ of the system, starting from an initially 
unentangled product state $\ket{\uparrow}^{\otimes L}$ stabilized by 
$\sigma_z$ everywhere. Within the stabilizer formalism, $S(t)$ can be 
efficiently simulated based on the collective evolution of all $L$ operators 
${\cal P}_i$ that define $\ket{\psi}$ \cite{Nahum_2017, Hamma_2005}. Note that 
the entanglement dynamics in 
Clifford circuits is somewhat special since, given the reduced density matrix 
$\rho_A(t) = \text{tr}_B\ket{\psi(t)}\! \!\bra{\psi(t)}$, $\rho_A(t)$ will exhibit a 
flat eigenvalue spectrum such that all its R\'enyi entropies are equivalent 
\cite{Fattal2004}. 
Nevertheless, Clifford circuits can support extensive amounts of entanglement 
and the dynamics of $S(t)$ are often found to mimic the properties of more 
generic quantum evolutions \cite{Li_2018, Richter_2022}.
\begin{figure}[tb]
 \centering
 \includegraphics[width=\columnwidth]{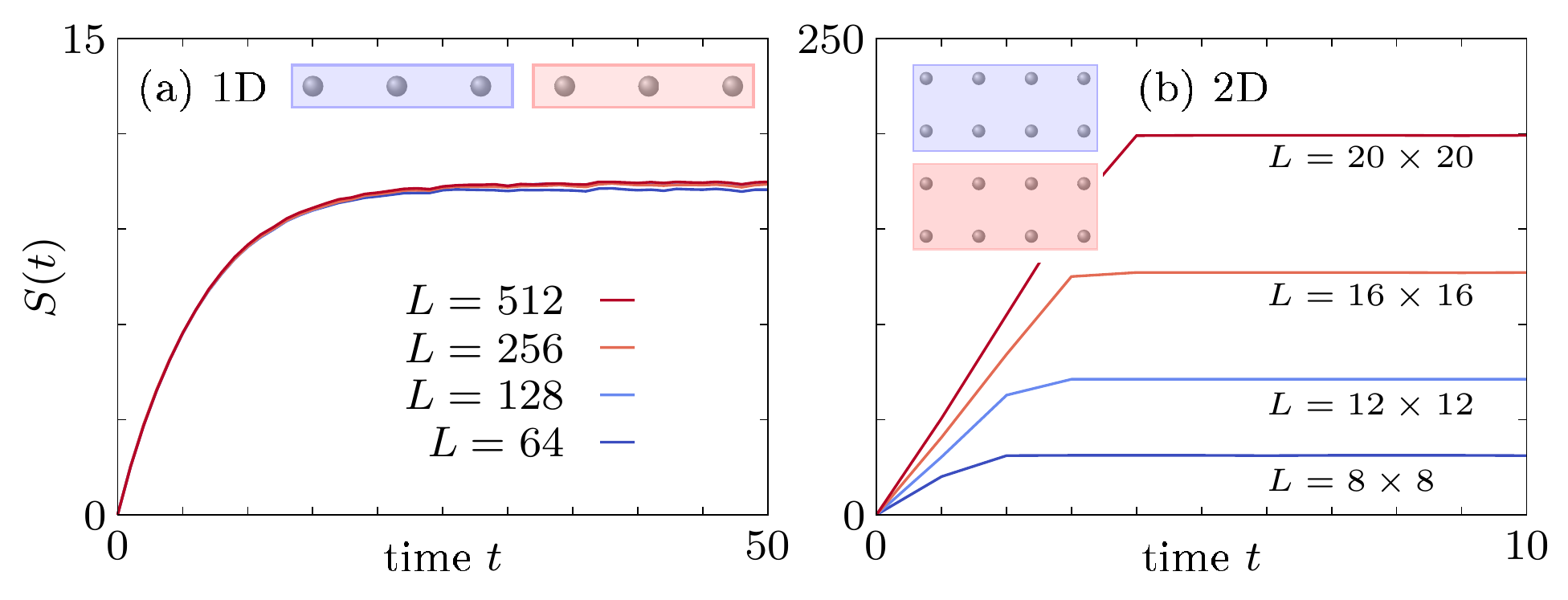}
 \caption{{\bf Entanglement growth.} Half-system 
(cf.\ sketches) entanglement entropy $S(t)$ under Floquet Clifford 
evolution for different system sizes $L$ in (a) one dimension and (b) two 
dimensions. In 1D, $S(t)$ approaches an almost system-size independent 
long-time value. In 2D, $S(t)$ saturates towards the maximally achievable value 
$L/2$. [Note the different scales of the y-axis in (a) and (b).] Data are 
averaged over $\sim 10^3$ random circuit realizations.}
 \label{Fig::Entang}
\end{figure}

In Fig.\ \ref{Fig::Entang}, we show $S(t)$ for half-system bipartitions in 1D 
and 2D Floquet Clifford circuits with various systems sizes $L$. The dynamics 
of 
$S(t)$ substantiate our previous observation of localization in 1D and 
ergodic behavior in 2D. Specifically, we find that $S(t)$ saturates towards a 
rather small and essentially $L$-independent long-time value $S(t\to\infty) 
\approx 10$ in 1D [Fig.\ \ref{Fig::Entang}~(a)]. This emphasizes that the 
localization length in 1D is distinctly 
shorter than the full system size and that a local operator can only explore a 
small part of the system, cf.\ Fig.\ \ref{fig:lightcone}~(a). 
In contrast, in 
2D [Fig.\ \ref{Fig::Entang}~(b)], we find that $S(t)$ exhibits a pronounced 
linear increase at short times, 
well-known from other random-circuit models and chaotic quantum systems.   
Moreover, at longer times, $S(t)$ saturates towards an extensive 
long-time value $S(t\to \infty) \approx L/2$, which is the maximally achievable 
value for a system of size $L$. This volume-law scaling of $S(t)$ highlights the 
absence of localization in 2D Floquet Clifford circuits.

\subsection{Spectral form factor}\label{Sec::SFF}

\subsubsection{Definition and general remarks}

To provide further insights into the nature of 
Floquet Clifford circuits we now turn to the SFF.
The SFF of an ensemble of unitaries $\mathcal U$ is defined as
\begin{equation}\label{Eq::SFF1}
 K (t) = \left \langle |\text{tr}\, U^t|^2 \right\rangle\ , 
\end{equation}
where $t$ takes integer values and the brackets $\langle \cdots \rangle$ denote 
average over all $U\in \mathcal U$. The SFF probes the statistical properties 
of 
 the spectrum of randomly sampled unitaries in $\mathcal U$. 
In particular, the Fourier transform 
\begin{align}
  \tilde K (\omega) = \frac 1 {\pi}
  \sum_{t=1}^\infty K(t)\, \cos(\omega t)\ ,
\end{align}
is the probability that a random $U\in \mathcal U$ has two eigenvalues  
separated by a distance $\omega$ (see \cite{Haake_book}).
The SFF also has an interpretation in terms of Poincar\'e recurrences, in fact 
$K(t)$ is a lower bound to the number of Pauli strings $\mathcal P$ which are 
mapped to themselves $U^{-t} \mathcal P U^t = \mathcal P$ after evolving for a 
time $t$.
More precisely, it was proven in \cite{Su_2020} that 
\begin{align}\label{eq:K rec}
  K(t) =
  2^{-L}\sum_{\mathcal P} \left\langle 
  \text{tr}\!\left[ \mathcal P U^{-t} \mathcal P U^t \right] 
  \right\rangle \ ,
\end{align}
where the sum is over all Pauli strings $\mathcal P$.
This identity holds for any unitary $U$, although only Clifford unitaries have  
the property that $U^{-t} \mathcal P U^t$ is equal or orthogonal to 
$\pm\mathcal 
P$. The possibility of this negative sign is responsible for the fact that 
\eqref{eq:K rec} is not necessarily equal to the number of recurrences, but a 
lower bound.

The SFF has been studied extensively to explore 
the onset of RMT behavior in the spectral properties 
of quantum systems \cite{M_ller_2004, Cotler2017, Gharibyan_2018, 
Prosen_2018_2, Prosen_2018_1}. 
In particular, it is proven in \cite{Haake_book} that, for unitaries drawn from the uniform distribution (Haar measure) over SU$(2^L)$, the SFF reads
\begin{align}\label{eq:rmt}
  K(t) = \left\{\begin{array}{ll}
    2^{2L} & \mbox{ if }  t=0 \\
    t & \mbox{ if } 1\leq t\leq 2^L \\
    2^L & \mbox{ if } t\geq 2^L \\
  \end{array}
  \right. .
\end{align}
Haar distributed unitaries go also under the name of CUE, circular unitary 
ensemble.
 Equation \eqref{eq:rmt} shows that the SFF increases as a linear ramp before  
reaching a plateau. Many-body chaotic systems are defined such that their SFF 
increase linearly in time after an initial dip. The time at which the linear 
ramp starts is called Thouless time, the time at which the SFF reaches a 
plateau 
is called Heisenberg time. A meaningful interpretation of these time scales 
requires rescaling, as discussed for example in \cite{Sierant_2020_2, 
Prosen_chal_2020}. Fig. \ref{Fig::SFF} shows that the SFF of the periodic 
Clifford circuit that we are considering manifests an exponential ramp before 
reaching a plateau. A similar behavior is shown by quasi-free fermions with 
chaotic single-particle dynamics \cite{Winer_2020, Liao_2020}. (By quasi-free 
fermions we mean those having a Hamiltonian quadratic in the creation and 
annihilation operators.)
Time in Fig.~\ref{Fig::SFF}  is not rescaled and the value at which a plateau  
is reached is ``evaluated'' by inspection therefore we call it plateau-time 
instead of Heisenberg time.

\begin{figure}[tb]
 \centering
\includegraphics[width=\columnwidth]{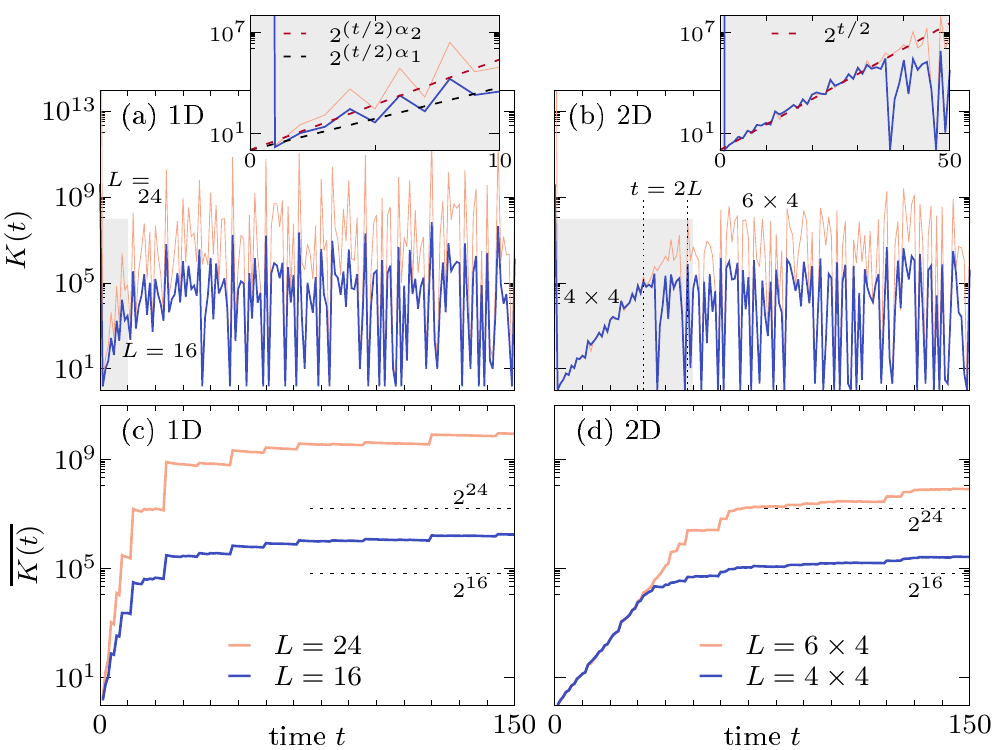}
\caption{{\bf Spectral form 
factor.} Circuit-averaged $K(t)$ of Floquet 
Clifford unitaries for different system sizes $L$ in (a) one dimension and (b) 
two dimensions. Data is averaged over $\sim 10^6$ random circuit 
realizations. The vertical dotted lines in (b) denote $t = 2L$, 
which approximately marks the end of the exponential ramp. The insets in (a) 
and (b) show the same data, but now for shorter times, cf.\ shaded areas in (a) 
and (b). In 2D, we find that the ramp is described by $K(t) \propto 2^{t/2}$, 
while in 1D $K(t) \propto 
2^{(t/2)\alpha}$ with $\alpha \approx 2.5$ for $L = 16$ and $\alpha \approx 
3.6$ for $L = 24$. In all cases, $K(t=0) = 4^L$. Panels (c) 
and (d) show the time-averaged SFF $\overline{K(t)}$ [Eq.\ \ref{Eq::OverKt}] for 
1D and 2D. The 
horizontal dashed lines indicate the Hilbert-space dimension~$2^L$.}
\label{Fig::SFF}
\end{figure}

Building on the Clifford formalism it 
is possible to compute Eq.\ \eqref{eq:K rec} efficiently \cite{Su_2020}, 
without 
 explicitly evaluating the exponentially many terms.
In order to do so, for any given Clifford unitary $U$, we define the group 
$H[U]$  of Pauli strings $\mathcal P$ that are invariant under the action of 
the 
unitary $U^\dagger \mathcal P U \pm \mathcal P$.  
Then, the method in \cite{Su_2020} consists of finding a set of generators of  
the group $H[U]$, denoted $\text{gen}H[U]$, and exploit the following identity
\begin{equation*}\label{Eq::SFF3}
 |\text{tr}U |^2 =
 \hspace{-0.1cm} \begin{cases}
         2^{|\text{gen}H[U]|}
         &\mbox{if } U^\dagger{\cal P}U =+{\cal P}
,\ \hspace{-0.1cm}\forall 
{\cal P} \in \text{gen}H[U] \\ 
0 &\text{otherwise}
        \end{cases} .
\end{equation*}
Then, this is done for $U= U_\text{per}^t$, with $ U_\text{per} $ as defined in \eqref{eq:Localization_2D_Absence_Depth2_Uperiod}, and several values of $t$.
Finally, one needs to average over many realizations of the circuit 
$U_\text{per}^t$.
The exponential dependence of $K(t)$ on the number of generators  
already hints at the fact that the SFF in Clifford circuits will
exhibit strong fluctuations.

\subsubsection{Numerical results}

We now present our numerical results for the SFF in Floquet 
Clifford unitaries. Note that while the stabilizer formalism in principle 
allows 
for simulations of $U(t)$ with complexity scaling polynomially with $L$, the 
system sizes in the following are smaller than one is typically used to from 
other examples of Clifford numerics. The main reason for this is that computing $K(t)$ requires  averaging over a rather high number of circuit 
realizations, which makes simulations of larger $L$ quite costly. Especially for 2D circuits, we find that extensive averaging is required to obtain converging results since the calculation of $K(t)$ appears to be dominated by rare circuit realizations that yield very large values of $|\text{tr}U_\text{per}^t|^2$.   

Figures \ref{Fig::SFF}~(a) and (b) show $K(t)$ for 1D and 2D 
circuits with two different system sizes $L = 16,24$. For the 1D circuit, we 
find that 
$K(t)$ exhibits a steep exponential increase at early times, but quickly 
crosses over to a plateau-like behavior. 
The time at which this crossover takes place
seems independent of $L$.
Strikingly, however, this ``plateau'' 
is dominated by major fluctuations of the SFF.  
Note that (most of) these fluctuations are actual features of the 
dynamics of $K(t)$ and cannot be reduced further by circuit-averaging. In 
fact, the data in Fig.~\ref{Fig::SFF} are already averaged over a substantial 
number ($\sim 10^6$) of random realizations. 
These fluctuations follow from the fact that the Clifford group has finite  
cardinality $|\mathcal C_L|$, which implies that the length $t$ of the orbits 
(i.e. the smallest $t$ such that $U^{-t} \mathcal P U^t = \mathcal P$) are 
divisors of $|\mathcal C_L|$. 
To see this we recall Lagrange's theorem: the order of a subgroup divides the 
order of the group. In particular, the smallest $r$ such that $U^r = \unity$ 
divides $|\mathcal C_L|$.
Next, let us show that $t$ is divisor of $r$ (and consequently divisor of 
$|\mathcal C_L|$). If we assume the opposite then $r=nt+m$ with $0\leq m<t$, 
which implies $\mathcal P = U^{-r} \mathcal P U^r = U^{-m} \mathcal P U^m$, 
which contradicts the fact that $t$ is minimal.

The behavior of $K(t)$ changes notably in the case of 2D 
circuits [Fig.\ \ref{Fig::SFF}~(b)]. Similar to 1D, we again find an 
exponential increase of $K(t)$ at early times. However, in contrast to 1D, 
this ramp is cleaner and persists for a longer time in 2D. Specifically, 
comparing $K(t)$ for different $L$, we find that the end of the ramp occurs at 
$t_\text{H} \approx 2L$, which corresponds to the phase-space dimension of the 
Clifford 
dynamics, cf.\ Sec.\ \ref{Sec::GenDyn}. For times $t > 2L$, the SFF displays
strong fluctuations around the plateau value. We 
interpret this behavior of $K(t)$ as a signature of ergodic dynamics in phase 
space, analogous 
to quasi-free fermions, as discussed above.

Let us explain why in the 2D case the plateau-time grows with the system's size as $t_\text{H} \approx 2L$ while in the 1D case $t_\text{H}$ is independent of 
$L$.
Equation \eqref{eq:K rec} tells us that $K(t)$ reaches the plateau when 
evolving Pauli strings experience recurrences. 
In the 2D case there is no localization, hence the evolution of Pauli strings is 
not restricted and can explore all the phase space. 
Recurrences reach a maximum at a time equal to the phase space dimension $2L$  
(analogously, in RMT recurrences reach a maximum at a time equal to the Hilbert 
space dimension \eqref{eq:rmt}).
In the 1D case the evolution of local Pauli operators is restricted to the  
localized regions, whose size depends on $U_\text{per}$ but not on the system's 
size $L$. The evolution of non-local operators is also restricted because they 
can be decomposed into local operators with restricted dynamics.
This follows from the fact that the evolution of a product of Pauli strings is 
equal to the product of the individual evolutions  
\begin{align}
  U^\dagger(t) [\mathcal P \mathcal P'] U(t) = [U^\dagger(t) \mathcal P U(t)] 
[U^\dagger(t)\mathcal P'U(t)]\ .
\end{align} 

Now, let us discuss the long-time behaviour of $K(t)$. In order to 
smooth the large fluctuations of $K(t)$, 
Fig.\ \ref{Fig::SFF}~(c) and (d) show the time-averaged SFF 
\begin{equation}\label{Eq::OverKt}
\overline{K(t)} = \frac{1}{t}\sum_{t'=0}^t K(t')\ . 
\end{equation}
The data of $\overline{K(t)}$ in Fig.\ \ref{Fig::SFF}~(c) emphasize the fact 
that the initial ramp of $K(t)$ in 1D becomes steeper with increasing $L$. 
The large-$t$ value of $\overline{K(t)}$ quantifies the degeneracies in 
$U_\text{per}$. This follows from 
\begin{align}\nonumber 
  \overline{K(\infty)} &= \lim_{t\to \infty} 
  \frac{1}{t} \sum_{t'=0}^t 
  \left\langle \sum_{i,j=1}^{2^L} e^{i(E_i-E_j)t'} \right\rangle 
  \\ \nonumber &= 
  \left\langle \sum_{i,j=1}^{2^L} \delta(E_i,E_j) \right\rangle 
  \\ &= \left\langle\sum_E g_E^2 \right\rangle \geq 2^L
  \ , 
\end{align}
where $E$ runs over all the quasi-energies of $U_\text{per}$ and $g_E$  is the 
degeneracy of $E$ (see \cite{Su_2020}). Note that when $g_E=1$ for all $E$ we 
have $\sum_E g_E^2 = 2^L$.
In Fig.\ \ref{Fig::SFF}~(c) we find that the  long-time value of 
$\overline{K(t)}$
is notably higher than the Hilbert-space dimension $2^L$, signaling the  
presence of degeneracy. In contrast, the RMT behavior of \eqref{eq:rmt}) 
implies 
that there are no degeneracies. 
An enhanced long-time value of $\overline{K(t)}$ can also be seen in 2D,  Fig.\ 
\ref{Fig::SFF}~(d), albeit less extreme in this case.

\section{Proof of theorem \ref{thm:Localization_2D_Absence_AlternativeModel_AbsenceResult}}
\label{sec:proof}

Let us consider the time evolution of a local operator acting on a single site 
at time $t = 0$. The support of this operator at later times 
($t > 0$) is 
contained in the light-cone represented in Fig.\ 
\ref{fig:Localization_2D_Absence_AlternativeModel_BoundaryGrowth}. 
But in general, an operator may not have full support in the light-cone 
(i.e.~it acts as the identity in some events).
In subsection \ref{sec:directed graph} we describe the boundary of the 
light-cone with a directed graph, and in subsection \ref{sec:rnd directed graph} 
we describe the operator growth with a random directed graph.

Our goal is to lower bound the probability that the evolution of a local 
operator has support at the boundary at all times (i.e.~non-identity in at least 
one site of the boundary). A crucial observation to address this question is 
that the state of the operator at the boundary at time $t$ depends only on the 
state of the operator at the boundary at time $t-1/2$, and does not depend on 
the state at the bulk of the light-cone. Importantly, all unitary gates at the 
boundary at time $t$ are statistically independent from the gates inside the 
light-cone at previous times. This fact makes this problem mathematically 
tractable.

\subsection{Directed-graph representation}
\label{sec:directed graph} 

The growth of the light-cone boundary can be represented by a directed graph 
$G$ constructed in the following way. 
Each vertex of $G$ represents a four-qubit unitary and each arrow represents 
a qubit that belongs to the boundary at a particular time $t$. 

\bigskip\noindent
\textbf{Construction of $G$:}
\begin{enumerate}
  \item Add one vertex $\bullet$ to represent the four-qubit unitary acting on 
  the initial site at $t=1/2$.
  \item Add one outwards arrow $\nearrow$ for each of the qubits where the 
  first unitary (potentially) propagates the signal at
$t=1/2$.
  \item Repeat the following steps starting at $t=1/2$:
  \begin{enumerate}
    \item At the end of each outwards arrow $\nearrow$ at time $t$, add the 
    vertex $\bullet$ corresponding to the unitary acting at $t+1/2$ on the 
qubit associated to the arrow $\nearrow$. (If two arrows represent qubits acted 
on by the same unitary at $t+1/2$, then these two arrows point to the same 
vertex.)
    \item From each vertex $\bullet$ corresponding to a unitary acting at  
$t+1/2$, add one outwards arrow for each of the qubits being acted on by the 
unitary $\bullet$ and belonging to the boundary at $t+1/2$.
  \end{enumerate}
  \item Update $t\mapsto t+1/2$ and repeat (a) and (b) up to infinity.
\end{enumerate}

\vspace{4mm}
\noindent

The construction of $G$ is detailed in Fig.\  
\ref{fig:Localization_2D_Absence_AlternativeModel_BoundaryGrowth}, and Fig.\ 
\ref{fig:Localization_2D_Absence_AlternativeModel_DirectedGraph} shows $G$ up to 
construction stage $t=3/2$.
Note that the infinite graph $G$ has the following property: unitaries which 
act on the qubits that are on the corners of the (rectangular) boundary are 
represented by a vertex with one inwards and three outwards arrows. 
Similarly, unitaries which act on qubits on edges of the boundary have two 
inwards and two outwards arrows.
\begin{figure}[tb]
\center
\includegraphics[width=0.85\columnwidth]{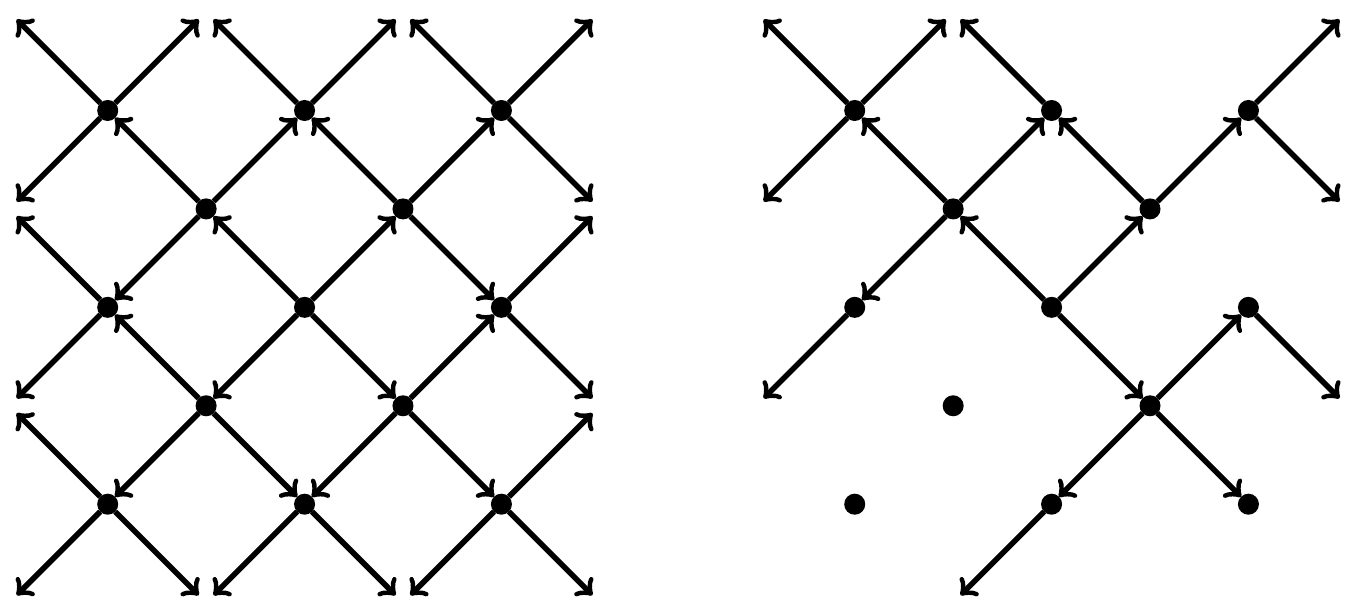}
\caption{\textbf{Directed graph at $t=3/2$.} In the left we have the directed 
graph $G$ while in the right we have an instance of the random directed graph 
$G_{\r}$.}
\label{fig:Localization_2D_Absence_AlternativeModel_DirectedGraph}
\end{figure}

\subsection{Random directed graph}
\label{sec:rnd directed graph}

In this section we represent the random growth of an initially local operator 
evolving according to the random dynamics 
\eqref{eq:Localization_2D_Absence_Depth2_Uperiod}, by a random directed graph 
$G_\r$ defined below. This graph is constructed by allowing the arrows of $G$ to 
be present or absent with certain probabilities. These probabilities are 
obtained from the statistical properties of the four-qubit gates $W_{(x,y)}$.
The property of light-speed operator growth up to time $t$ happens when there 
is a path in $G_\r$ starting at the central vertex, following the directions of 
the arrows, and having length $2t$. In what follows we bound the probability 
that $G_\r$ contains, at least, one such path.

There is, however, one minor point to consider.
In principle, if a vertex in $G_\r$ has no inwards arrows (i.e. the evolved 
operator has no support on the sites on which the gate associated to the vertex 
acts) then there should not be any outwards arrow also. 
However, since we are considering directed paths from the origin, this is not 
a problem. 
That is to say, if there are no inwards arrows to a vertex, the presence or 
absence of outwards arrows is irrelevant, since a directed path from the origin 
cannot use this vertex anyway. 
Therefore, we consider the presence of outwards arrows in a vertex of $G_\r$ 
to be statistically independent from the presence of inwards arrows, and 
independent of the presence of arrows in any other vertex. 
Still, the outwards arrows of a particular vertex are not independent random 
variables.
Each outwards arrow $i$ in a vertex has an associated random variable $x_i$ 
which takes the value $x_i=1$ if the arrow is present and $x_i=0$ if it is 
absent.
The following lemma provides the joint probability distribution of the 
variables $x_i$.

\begin{Lemma}
\label{lem:Localization_2D_Absence_Depth2_VertexProbDist}
For the vertex with four outwards arrows $x_1, x_2, x_3, x_4$, the probability 
for the presence of each arrow is
\begin{equation}
  \label{eq:Exact4outArrows}
  P( x_1 , x_2 , x_3 , x_4 ) = 
  \frac{1}{4^4-1}
  \begin{cases}
    0\   &\text{if $x_i = 0\ \forall\ i$} 
    \\
    3^{\sum_i x_i}\   &\text{otherwise} 
 \end{cases} \ , 
\end{equation}
where $x_i = 1$ indicates that the $i^{\text{th}}$ arrow is present, and 
$x_i = 0$ that it is absent.
For the vertices with three outwards arrows the probability distribution is
\begin{equation}
\label{eq:Exact3outArrows}
  P( x_1 , x_2, x_3 ) = \frac{1}{4^4-1}
  \begin{cases}
    3\  &\text{if $x_i = 0\ \forall\ i$} 
    \\
    4 \times 3^{\sum_i x_i} 
    &\text{otherwise}
  \end{cases} \ .
\end{equation}
For the vertices with two outwards arrows the distribution is
\begin{equation}
\label{eq:Exact2outArrows}
  P( x_1, x_2 ) = \frac{1}{4^4-1}
  \begin{cases}
    4^2-1 &\text{if $x_1 = x_2  = 0$}\  \\
    4^2 \times 3^{x_{1} + x_{2}}  &\text{otherwise }
  \end{cases}\ . 
\end{equation}
\end{Lemma}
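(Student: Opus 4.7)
The plan is to reduce the lemma to a combinatorial count by exploiting a single property of the uniform distribution on the Clifford group: since $\mathcal{C}_4$ acts transitively by conjugation on the non-identity signed $4$-qubit Pauli strings, for $W$ uniform on $\mathcal{C}_4$ and any fixed non-identity Pauli $P$ the image $W P W^\dagger$ is uniform on the non-identity signed Paulis, and hence, after forgetting the sign (which is irrelevant for whether a given site equals the identity), it is uniform on the $4^4 - 1 = 255$ non-identity unsigned $4$-qubit Paulis. This is the only input I will need from the Clifford structure.

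Next, for each of the three vertex types I would identify the structure of the input Pauli to the gate and split the four output qubits into those on the new boundary (carrying the outward arrows $x_i$) and those in the interior of the new light cone (to be marginalised out). For the initial vertex with four outward arrows, the input is $\sigma \otimes \unity \otimes \unity \otimes \unity$ with $\sigma$ non-identity and all four output qubits sit on the new boundary, so $k=0$ interior outputs need be summed. For a corner vertex with three outward arrows, the single inward qubit carries a non-identity Pauli, the other three input qubits lie outside the old light cone and are therefore identities, and one of the four output qubits lies in the interior of the new light cone, so $k=1$. For an edge vertex with two outward arrows, two input qubits carry non-identity Paulis and two are identities; after the gate the two non-outward outputs lie in the interior, so $k=2$. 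Since in each case the image is uniform over the same $255$ non-identity unsigned Paulis regardless of the particular non-identity input, the joint arrow distribution at a vertex depends only on the vertex type and not on the random history that produced the inward Paulis.

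The count is then immediate. For a non-zero pattern $(x_i)$ on the outward qubits, there are $3^{\sum_i x_i}$ non-identity choices on those qubits combined with $4^k$ free choices on the interior qubits, giving probability $3^{\sum_i x_i}\,4^k / (4^4 - 1)$. For the all-zero outward pattern, the overall Pauli must still be non-identity, which forces the interior outputs to themselves form a non-identity Pauli and gives $4^k - 1$ possibilities, i.e.\ probability $(4^k - 1)/(4^4 - 1)$. Substituting $k = 0, 1, 2$ reproduces the coefficients $0$, $3$ and $4^2 - 1 = 15$ on the zero-pattern lines and the prefactors $1$, $4$ and $4^2$ on the other lines of equations \eqref{eq:Exact4outArrows}, \eqref{eq:Exact3outArrows} and \eqref{eq:Exact2outArrows}. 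The only conceptual step worth flagging, and the main obstacle, is justifying that the marginal output distribution is insensitive to the specific non-identity input so that the per-vertex arrow probabilities can be written down independently of the graph's history; once this is in hand, the three formulas follow by elementary counting.
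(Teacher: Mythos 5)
Your proof is correct and follows essentially the same route as the paper's: both rest on the single fact that conjugation by a uniformly random Clifford maps any fixed non-identity Pauli to one that is uniformly distributed over the $4^4 - 1$ non-identity $4$-qubit Paulis (the paper cites Lemma~3 of \cite{Farshi_2022} for this, which is exactly the transitivity statement you invoke), and then both obtain the three distributions by counting which output-Pauli patterns on the boundary qubits are compatible with each arrow pattern. Your write-up is somewhat more explicit than the paper's on the marginalisation over the $k \in \{0,1,2\}$ interior output qubits and correctly identifies that the all-zero row carries the residual $4^k - 1$ count; the paper leaves that step to the reader after noting the $1$-versus-$3$ split between $x_i=0$ and $x_i=1$. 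Your flagged concern about history-independence is also precisely the point the paper handles in the surrounding text: if the inward Pauli is non-identity, uniformity of $WAW^\dagger$ over non-identity Paulis is independent of which non-identity $A$ occurred, and if the inward Pauli is trivial the vertex cannot lie on any directed path from the origin, so its outward-arrow distribution is immaterial.
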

\begin{figure}[tb]
\center
\includegraphics[width=0.45\columnwidth]{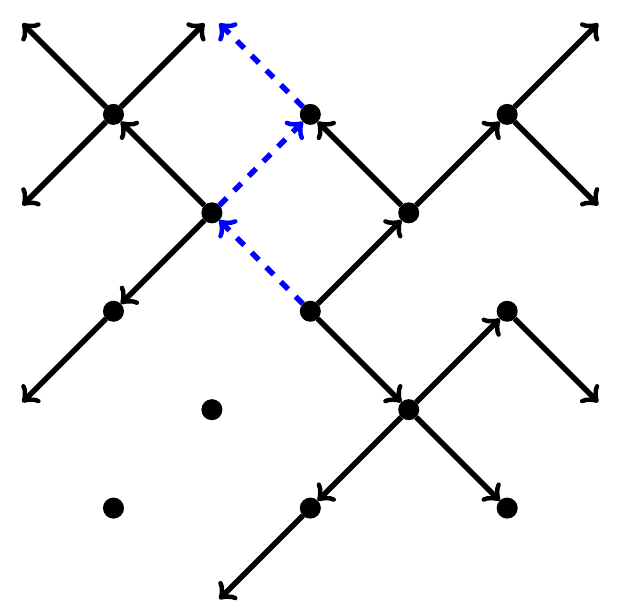}
\caption{This figure 
shows a $3$-path in an instance of the random directed graph $G_\r$.}
\label{fig:Localization_2D_Absence_AlternativeModel_DirectedGraph_WithLabels} 
\end{figure}

\begin{proof}
Let $A$ be a four-qubit Pauli operator different than the identity, and let 
$W$ be a four-qubit random Clifford unitary.
Let us consider the random  four-qubit Pauli operator 
$WAW^\dagger = \lambda B = \lambda B_1 \otimes B_2 \otimes B_3 \otimes B_4$ and 
ignore the phase $\lambda$. Lemma 3 from \cite{Farshi_2022} tells us that $B$ is 
uniformly distributed over the $4^4-1$ combinations of $B_1, B_2, B_3, B_4 \in 
\{\unity, \sigma_x, \sigma_y, \sigma_z\}$ different than $B_1= B_2= B_3=B_4 = 
\unity$.
The factor $3^{x_i}$ follows from the fact that the value $x_i=0$ stands for 
the one case $B_i =\unity$, while the value $x_i=1$ stands for the three cases 
$B_i \in \{\sigma_x, \sigma_y, \sigma_z\}$. This proves formulas 
(\ref{eq:Exact4outArrows}, \ref{eq:Exact3outArrows}, \ref{eq:Exact2outArrows}).
\end{proof}

\begin{Definition}
  The random directed graph $G_\r$ can be sampled by starting from $G$ and, at 
  each vertex, remove the outwards arrows according to the probability 
distributions (\ref{eq:Exact4outArrows}, \ref{eq:Exact3outArrows}, 
\ref{eq:Exact2outArrows}).
\end{Definition}

\begin{Definition}
An $l$-path in $G_\r$ is a sequence of $l$ consecutive arrows starting at the 
central vertex and following the directions of the arrows (see Fig.\  
\ref{fig:Localization_2D_Absence_AlternativeModel_DirectedGraph_WithLabels}).
\end{Definition}

\noindent
In the lemma below we show that in order to upper-bound the probability that 
$G_\r$ has no $l$-path, it is enough to analyse its lower quadrant, as 
represented in Fig.\ \ref{fig:Localization_2D_Absence_DigraphQuadrant_Depth2}.

\begin{Definition}
The directed graph $G^\triangle$ is the lower quadrant of $G$, and the random 
directed graph $G^\triangle_\r$ is the lower quadrant of $G_\r$ (see Fig.\ 
\ref{fig:Localization_2D_Absence_DigraphQuadrant_Depth2}).
\end{Definition}

\begin{Lemma}
\label{lem:Localization_2D_Absence_Depth4_UpperBoundQuadrant}
The probability that $G_\r$ has no $l$-path is upper-bounded by the probability 
that $G_\r^\triangle$ has no $l$-path, that is
\begin{equation}
 \pr \{ G_\r \text{ has no } l\text{-path} \}  \leq  \pr \{G_\r^\triangle 
 \text{ has no } l\text{-path} \} \ .
\end{equation}
\end{Lemma}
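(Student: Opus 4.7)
The plan is to establish the event inclusion
\[
\{G_\r \text{ has no } l\text{-path}\} \subseteq \{G_\r^\triangle \text{ has no } l\text{-path}\},
\]
from which the desired probability inequality follows immediately by monotonicity of $\pr$. So the task reduces to verifying this inclusion, which I would do by a subgraph/contrapositive argument.

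The key observation is that, by construction, $G_\r^\triangle$ is literally a subgraph of $G_\r$: its vertex set and arrow set are subsets of those of $G_\r$, both graphs share the same central vertex, and the Bernoulli variables governing the presence of arrows in $G_\r^\triangle$ are just the restrictions of those in $G_\r$ to the arrows lying in the lower quadrant. Therefore any $l$-path in $G_\r^\triangle$ (a sequence of $l$ consecutive present arrows starting at the central vertex and respecting the directions) is automatically a valid $l$-path in $G_\r$. Taking the contrapositive, if $G_\r$ has no $l$-path whatsoever, then in particular it has no $l$-path confined to the lower quadrant, so $G_\r^\triangle$ has no $l$-path. This is precisely the set inclusion above, and applying $\pr(\cdot)$ yields the bound.

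The only point that requires a small check is that restricting $G_\r$ to its lower quadrant is consistent with the random-graph structure defined via Lemma \ref{lem:Localization_2D_Absence_Depth2_VertexProbDist}. Since the vertex-level distributions \eqref{eq:Exact4outArrows}--\eqref{eq:Exact2outArrows} are independent across distinct vertices, the joint distribution of the arrow-indicators associated to the lower-quadrant vertices is simply the marginal of the full joint distribution; no conditioning is involved. Hence $G_\r^\triangle$ as defined can be identified with the sub-collection of random arrows of $G_\r$ that lie in the lower quadrant, which is exactly what the subgraph argument needs.

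I do not expect any real technical obstacle here: the lemma is essentially a monotonicity statement saying "fewer quadrants to search in makes finding a path harder." The one place one has to be a bit careful is the shared central vertex, whose outgoing arrows into the four quadrants are correlated via \eqref{eq:Exact4outArrows}; but since we are comparing the same sample of these arrows in both graphs (not two independent samples), the correlation is irrelevant and the inclusion of events holds pointwise on the probability space.
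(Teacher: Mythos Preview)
Your argument is correct and is exactly the paper's approach: the paper's proof is the one-line observation that any $l$-path in $G_\r^{\triangle}$ is an $l$-path in $G_\r$, followed by the contrapositive and monotonicity of probability. Your additional remarks about marginals and the correlated central vertex are accurate but not needed, since $G_\r^{\triangle}$ is \emph{defined} as the lower quadrant of the same sample $G_\r$, so the event inclusion holds pointwise by construction.
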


\begin{proof}
By definition, if $G_\r^{\triangle}$ has an $l$-path then $G_\r$ has an 
$l$-path too; therefore if $G_\r$ has no $l$-path then $G_\r^{\triangle}$ has no 
$l$-path. Using this, the bound follows.
\end{proof}
\begin{figure}[tb]
\center
\includegraphics[width=0.9\columnwidth]{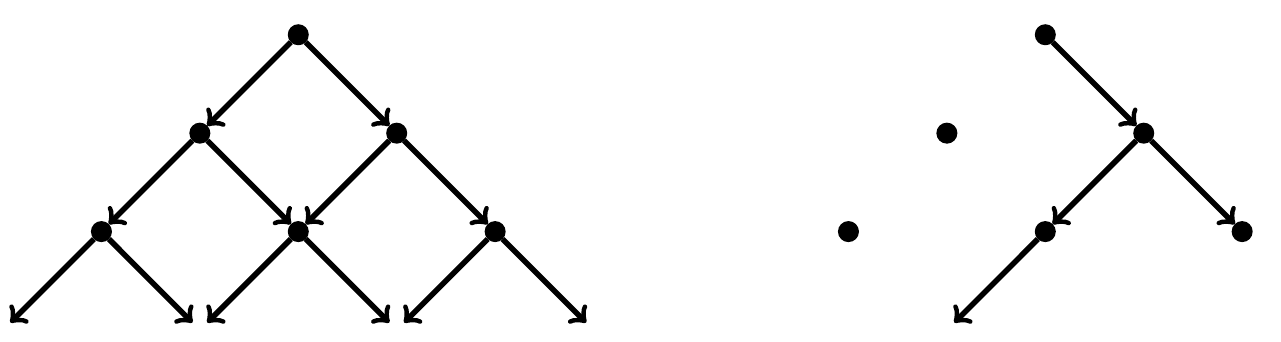}
\caption{In the left we have the graph 
$G^\triangle$, and in the right we have an instance of the random directed 
graph 
$G^{\triangle}_{\ir}$, both at time $t=3/2$.}
\label{fig:Localization_2D_Absence_DigraphQuadrant_Depth2}
\end{figure}

\subsection{Random graph with statistically-independent arrows}

In order to simplify the analysis, we define a new graph where the probability 
for the presence of each arrow is independent of the presence of other arrows. 
The next lemma shows that it is sufficient to analyze this new graph.

\begin{Definition}
  The random directed graph $G_\ir^{\triangle}$ is sampled by taking 
  $G^{\triangle}$ and independently removing each arrow with probability 
  \begin{align}\label{def:q}
    q := \pr\{x=0\}
    = \frac 1 2 - \frac 1 2\sqrt{\frac{21}{85}} \gtrapprox 0.25\ .
  \end{align}
\end{Definition}

\noindent
The subscript in $G_\ir^{\triangle}$ stands ``independent random arrows". 
The meaning of the value of $ q $ given above will become clear in the proof of 
lemma \ref{lem:Localization_2D_Absence_Depth4_IndependentArrows}. 
As mentioned above we have the following.

\begin{Lemma}
\label{lem:Localization_2D_Absence_Depth4_IndependentArrows}
The probability that $G_\r^\triangle$ has no $l$-path is upper bounded by the 
probability that $G_\ir^{\triangle}$ has no $l$-path:
\begin{equation}
\pr \{ G^\triangle_\r \text{ has no } l\text{-path} \} \leq \pr 
\{ G^{\triangle}_\ir \text{ has no } l\text{-path} \} \ . 
\end{equation}
\end{Lemma}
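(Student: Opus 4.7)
The plan is to construct a monotone coupling of the two random graphs under which, almost surely, every arrow present in $G^{\triangle}_{\ir}$ is also present in $G^{\triangle}_{\r}$. Since the event ``has no $l$-path'' is monotonically non-increasing in the set of present arrows, such a coupling immediately yields the desired inequality.

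Because arrows attached to distinct vertices are statistically independent in both graphs, the construction decomposes into an independent coupling at each vertex. At a vertex with $n$ outgoing arrows in $G^{\triangle}$, where $n\in\{1,2,3\}$ depending on whether the vertex lies at the apex, along an edge, or at a corner of the quadrant, the joint distribution from Lemma \ref{lem:Localization_2D_Absence_Depth2_VertexProbDist}---marginalized over any arrows that leave the quadrant---must be shown to stochastically dominate the product of $n$ independent Bernoulli$(1-q)$ variables. By Strassen's theorem, this amounts to verifying the inequalities
\begin{equation*}
  \pr\{x_i = 1\ \forall\, i\in S\}\,\geq\,(1-q)^{|S|}
\end{equation*}
in $G^{\triangle}_{\r}$ for every nonempty subset $S$ of outgoing arrows at the vertex.

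The chosen value $q = \tfrac{1}{2} - \tfrac{1}{2}\sqrt{21/85}$, which is the root of $q(1-q) = 16/85$ in $(0,1/2)$, is tuned so that all these inequalities hold. The common marginal $\pr\{x_i=1\} = 192/255 = 64/85$ (valid at every vertex type by direct computation from \eqref{eq:Exact2outArrows}--\eqref{eq:Exact4outArrows}) handles the $|S|=1$ case, while the binding constraints come from the full-subset cases: $(1-q)^2 \leq 144/255$ at 2-arrow vertices and $(1-q)^3 \leq 108/255$ at 3-arrow vertices. Both follow by elementary algebra from the defining relation $q(1-q) = 16/85$; the remaining intermediate up-sets, parametrized by nontrivial antichains in $\{0,1\}^n$, reduce to convex combinations of these inequalities and can be checked directly.

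The main obstacle is that the outgoing arrows at a given vertex of $G^{\triangle}_{\r}$ are in fact negatively correlated (one can see this from \eqref{eq:Exact2outArrows}: conditioning on a sibling arrow being present decreases the probability of the other arrow being present), so FKG-type correlation shortcuts are unavailable and the domination bounds really must be verified case by case for each vertex type appearing in $G^{\triangle}$. Once this verification is complete, the vertex-level couplings combine independently into the desired global coupling and the lemma follows.
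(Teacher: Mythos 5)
Your monotone-coupling strategy is the same idea as the paper's, but the execution differs and contains a gap. The paper constructs the coupling explicitly: it shifts an amount $\epsilon$ of probability mass from the maximal configuration $(1,1)$ to the minimal configuration $(0,0)$, which is manifestly a stochastically decreasing modification, and then tunes $\epsilon$ to the smallest value for which the resulting matrix is rank one (hence a product measure). This construction hands you the coupling for free --- no case analysis over up-sets is needed. You instead appeal to Strassen's theorem, which requires verifying $\pr_P(U)\geq\pr_Q(U)$ for \emph{every} up-set $U$, and this is where your argument breaks down. The assertion that ``the remaining intermediate up-sets\ldots reduce to convex combinations of these inequalities'' is not correct: up-set probabilities do not decompose as convex combinations of principal up-set probabilities. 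For example, in the two-arrow case the condition for the up-set $\{(1,1),(1,0),(0,1)\}$ is $P(0,0)\leq q^2$, which is logically independent of $P(1,1)\geq(1-q)^2$ and the marginal bounds; it happens to follow here only because $P(0,1)=P(1,0)=q(1-q)$ exactly, by choice of $q$, but you do not exhibit that chain of implications, and for three- or four-outgoing-arrow vertices the off-diagonals of $P$ are \emph{not} equal to the corresponding Bernoulli products, so the reduction you gesture at would fail. You then say these conditions ``can be checked directly,'' contradicting the claim that they reduce to convex combinations, and you do not actually check them.

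A second, smaller issue: the paper's proof considers only vertices with two outgoing arrows, because the restriction to the lower quadrant $G^\triangle$ leaves each non-initial vertex with exactly two outgoing arrows inside the quadrant (and the initial vertex contributes trivially). Your analysis assumes three-outgoing-arrow vertices also occur in $G^\triangle$; this does not match the graph being analyzed, and since the joint law at such vertices is different from the two-arrow case, carrying your Strassen verification through there would require a separate, and more delicate, computation that you do not supply. Your identification of $q$ as the root of $q(1-q)=16/85$ in $(0,1/2)$, and the marginal value $64/85$, are correct and agree with the paper, but to make your version of the proof rigorous you would need to (i) restrict attention to the two-arrow vertices that actually appear in $G^\triangle$, and (ii) verify the non-principal up-set inequality $P(0,0)\leq q^2$ explicitly (or, more simply, observe as the paper does that moving mass from $(1,1)$ to $(0,0)$ is itself the monotone coupling, so Strassen's theorem need not be invoked at all).
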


\begin{proof}
First, note that the probability distribution of outwards arrows from different 
vertices are independent, and hence we focus only on a single vertex with two 
outwards arrows. 
Second, construct a matrix with the probability distribution $P(x_1,x_2)$ for 
the presence of outwards arrows \eqref{eq:Exact2outArrows} as
\begin{equation}
  \begin{pmatrix}
    P(0,0) & P(0,1) \\
    P(1,0) & P(1,1)
  \end{pmatrix}   = \frac 1 {255}
  \begin{pmatrix}
    15 & 16\times 3 \\
    16\times 3 & 16\times 9
  \end{pmatrix} \ . 
\end{equation}
Third, we increase $\pr\{x_1=x_2=0\} \to P(0,0)+\epsilon$ and decrease
$\pr\{x_1=x_2=1\} \to P(1,1)-\epsilon$ until the distribution becomes of product 
form
\begin{align} \label{product_form}
  \begin{pmatrix}
    P(0,0)+\epsilon & P(0,1) \\
    P(1,0) & P(1,1)-\epsilon
  \end{pmatrix}  & = 
  \begin{pmatrix}
    \frac 1 {17} +\epsilon & \frac {16}{85} \\
    \frac {16}{85} & \frac {48}{85} -\epsilon
  \end{pmatrix} \\ &=
  \begin{pmatrix} 
    q^2 & q(1-q) \\
    q(1-q) & (1-q)^2
  \end{pmatrix} \ .
\end{align}
We want to find the minimum $\epsilon>0$ such that there is $q\in [0,1]$ 
satisfying the above equality.
Clearly, this transformation cannot increase the likelihood of an $l$-path.
The above matrix is of product form when its determinant is zero, in fact the 
second column of the last matrix in Eq. \eqref{product_form} is obtained from 
the first one multiplying by $ \frac{1-q}{q} $.
\begin{align}
  \det\begin{pmatrix}
    \frac 1 {17} +\epsilon & \frac {16}{85} \\
    \frac {16}{85} & \frac {48}{85} -\epsilon
  \end{pmatrix} =
  0\ .
\end{align}
This equation has smallest positive solution $\epsilon = \frac{1}{170} 
(43 - \sqrt{1785}) \approx 0.0044$, which implies
\begin{equation}
  q = \frac 1 {17} +\epsilon + \frac {16}{85} = \frac 1 2 - 
  \frac 1 2\sqrt{\frac{21}{85}} \ .
\end{equation}
This proves the statement of the lemma.
\end{proof}

\subsection{Dual graph}


\begin{Definition}
  The graph $G_\ir^{\triangle *}$ dual to $G_\ir^{\triangle}$ has vertices 
  located at the faces of $G_\ir^{\triangle}$. The presence of a (non-directed) 
edge in $G_\ir^{\triangle *}$ corresponds to the absence of the arrow it 
intersects in $G_\ir^{\triangle}$. 
  Therefore, the probability $q^*$ that an edge is absent 
  in $G_\ir^{\triangle *}$ is equal to the probability that an arrow is present 
in $G_\ir^{\triangle}$, that is  $q^* = 1-q$, where $q$ is defined in 
\eqref{def:q}.
  Figure \ref{fig:dual_graph} displays the graph dual to that of Fig.\
  \ref{fig:Localization_2D_Absence_DigraphQuadrant_Depth2}. 
\end{Definition}

\begin{Definition}
  A $d$-wall is a set of $d$ consecutive edges which connects the left side 
  to the right side of $G_\ir^{\triangle *}$.
\end{Definition}
\begin{figure}[tb]
\center
\includegraphics[width=0.9\columnwidth]{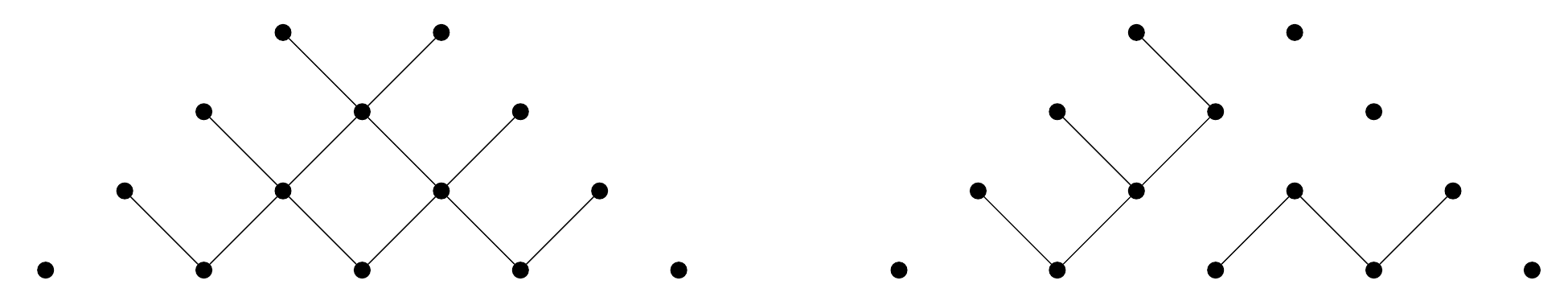}
\caption{\textbf{Dual graph.} On the 
left we have the dual graph $G^{\triangle *}$, and on the right we have the 
instance of the random graph $G_\ir^{\triangle *}$ that is dual to the instance 
of $G_\ir^{\triangle}$ shown in 
Fig.~\ref{fig:Localization_2D_Absence_DigraphQuadrant_Depth2}. 
}
\label{fig:dual_graph}
\end{figure}

\begin{Lemma}\label{lem:wall-path}
If $G_\ir^{\triangle *}$ contains a $d$-wall then $G_\ir^{\triangle}$ contains 
no $l$-path of length $l\geq d$.
\end{Lemma}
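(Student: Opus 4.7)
The plan is a topological-duality argument: I will show that a $d$-wall $W$ in $G_\ir^{\triangle *}$ acts as an impenetrable barrier for any directed path in $G_\ir^{\triangle}$ emanating from the apex, confining such a path to a finite region whose maximal depth is strictly less than $d$. This is the standard crossing/blocking duality familiar from planar percolation, adapted to a directed primal and an undirected dual in the triangular quadrant geometry.

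I would argue by contradiction. Suppose both a $d$-wall $W$ and an $l$-path $P$ with $l \geq d$ exist. The wall $W$ is a simple sequence of $d$ dual edges whose endpoints lie on the left and right boundaries of the quadrant. Together with the portions of those boundaries running from the apex down to the endpoints of $W$, the wall bounds a closed Jordan curve enclosing a finite region $R$ that contains the apex. By the very definition of the dual graph, each edge of $W$ crosses exactly one primal arrow, and presence of the dual edge in $G_\ir^{\triangle *}$ is equivalent to absence of that primal arrow in $G_\ir^{\triangle}$. Hence the $d$ primal arrows that would allow a directed path to escape $R$ are all missing, so any directed path started at the apex, including $P$, is trapped inside $R$.

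The remaining step is to bound the length of a directed path confined to $R$. The arrows of $G^\triangle$ are all oriented outward from the apex, and each arrow strictly increases the depth (the graph distance from the apex) by one, so a directed path of length $l$ from the apex terminates at a vertex of depth exactly $l$. The key geometric claim is that every vertex in $R$ has depth at most $d-1$: at depth $k$ the dual quadrant is only of order $k$ wide, so any dual path that disconnects the left boundary from the right boundary while staying at depths $\leq k$ must use at least $k$ edges. With only $d$ edges at its disposal, $W$ is forced to dip to some depth strictly below $d$, and that shallowest depth bounds $R$ from below. Granting this estimate, $P$ has length at most $d-1$, contradicting $l \geq d$.

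The principal technical obstacle is making the depth estimate precise, since a wall need not be a horizontal cut: it can meander, zig-zag, and momentarily retreat toward the apex. I would establish the required bound either by induction on $d$ or, more cleanly, by a level-by-level counting argument showing that to separate the left boundary from the right boundary within the dual restricted to depths $\leq k$ one must spend at least $k$ dual edges, so that a wall of $d$ edges forces $R$ to sit within depth $d-1$. Once that combinatorial lemma is in hand, the contradiction assembled in the paragraphs above closes the proof.
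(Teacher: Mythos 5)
Your plan---planar duality, Jordan-curve separation of the apex from the deep part of the quadrant, and a depth bound on the enclosed region---is the same strategy the paper compresses into a single sentence (``a $d$-wall must start in one of the first $d$ vertices of the left, and $l$-paths always go downwards''). The first two steps of your argument (the wall together with two boundary segments encloses a region $R$ containing the apex; crossing an edge of $W$ would require a primal arrow that is by definition absent; hence any directed path from the apex stays inside $R$) are sound and are just the explicit form of what the paper gestures at.

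The flaw is in the depth estimate, and it is not merely a matter of ``making things precise.'' You assert that the \emph{shallowest} depth reached by $W$ bounds $R$ from below. That is backwards: $R$ is bounded by $W$ itself, so $R$ extends as deep as the wall's \emph{deepest} excursion, not its shallowest. A wall that starts and ends at shallow positions on the left and right boundaries but meanders downward in the middle produces an $R$ whose deepest vertex is set by the bottom of that meander. What your argument actually needs is an upper bound on $\max\{\text{depth of a vertex of }W\}$, and from that a bound on the deepest primal vertex in $R$. That bound is available (and is essentially what the paper's remark about the first $d$ left-vertices is pointing at): with only $d$ dual edges, $W$ must bridge the two boundary rays, which diverge linearly with depth, and each excursion of $W$ below the segment joining its two boundary endpoints costs at least two additional edges (one to descend, one to return). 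Combining the horizontal budget $a+b\le d$ (for left/right endpoint depths $a,b$) with the two-edge cost of excursions forces the maximum depth of $W$, hence of $R$, strictly below $d$. With that corrected estimate your confinement argument does yield the contradiction.
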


\begin{proof}
  This follows from the fact that a $d$-wall must start in one of the first 
  $d$ vertices of the left, and that $l$-paths always go downwards.
\end{proof}

\begin{Lemma}
\label{lem:Localization_2D_Absence_Depth4_UpperboundQuadrantPath}
The probability that $G_\ir^{\triangle *}$ has a $d$-wall for some $d$ satisfies
\begin{equation}
  \pr \{\exists\, d: G_\ir^{\triangle *} \text{ has } d\text{-wall} \} \leq 0.56  \ . 
\end{equation}
\end{Lemma}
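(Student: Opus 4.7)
The plan is to bound the probability by a union bound over all candidate walls, exploiting the edge independence of $G_\ir^{\triangle *}$. Since each edge is independently present with probability $q = \frac{1}{2} - \frac{1}{2}\sqrt{21/85} \approx 0.25$ by lemma~\ref{lem:Localization_2D_Absence_Depth4_IndependentArrows}, any specific wall $w$ consisting of $d$ specified edges occurs with probability exactly $q^d$. Denoting by $N_d$ the number of candidate $d$-walls in $G_\ir^{\triangle *}$, this gives
\[
\pr\{\exists\, d\text{-wall}\} \;\leq\; \sum_{d \geq d_0} N_d \, q^d,
\]
where $d_0$ is the minimum feasible wall length imposed by the geometry of the quadrant.

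To estimate $N_d$, I would build on the observation made in the proof of lemma~\ref{lem:wall-path}: any $d$-wall must begin at one of the first $d$ vertices along the left side of $G_\ir^{\triangle *}$. From each such starting vertex, the wall extends edge by edge; the planar structure of the dual graph (readable from figure~\ref{fig:dual_graph}) bounds the local degree of each vertex by a small constant $c$, so the number of non-backtracking continuations of length $d-1$ is at most $(c-1)^{d-1}$. This gives $N_d \leq d\,(c-1)^{d-1}$, and the union bound becomes
\[
\pr\{\exists\, d\text{-wall}\} \;\leq\; \frac{q}{c-1} \sum_{d \geq d_0} d \bigl((c-1)q\bigr)^{d-1},
\]
which converges whenever $(c-1)q < 1$. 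I would then verify numerically, using the closed form of the geometric-type sum, that for the correct structural constants the bound does not exceed $0.56$.

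The main obstacle I expect is numerical tightness. The value $0.56$ is precisely the complement of the $0.44$ appearing in theorem~\ref{thm:Localization_2D_Absence_AlternativeModel_AbsenceResult}, so the combinatorial estimate has essentially no slack to spare. Making the argument work requires careful accounting of the dual-graph geometry: the branching may differ at boundary versus interior vertices, the planarity of $G_\ir^{\triangle *}$ may impose further restrictions on admissible walls beyond non-backtracking (e.g.\ genuine self-avoidance), and the minimum wall length $d_0$ depends on the narrowest admissible left-to-right crossing. If the straightforward Peierls bound proves too loose, a transfer-matrix refinement — propagating at each successive row the probability that a partial wall has survived and is still extendable to the right side — should yield the required quantitative improvement without changing the overall structure of the argument.
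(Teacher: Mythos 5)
Your high-level strategy — a Peierls-type union bound over candidate walls, using the edge-independence of $G_\ir^{\triangle *}$ so that each specified $d$-wall has probability $q^d$ — is exactly what the paper does. But the quantitative core is missing, and as you sketch it the argument does not close. Your bound $N_d \leq d\,(c-1)^{d-1}$ with the natural branching constant $c-1 = 3$ gives
\begin{equation*}
  \sum_{d\geq 2} d\,3^{d-1} q^d \;=\; q\left[\frac{1}{(1-3q)^2}-1\right] \;\approx\; 3.9 \;>\; 1 ,
\end{equation*}
so the union bound is vacuous. Even halving this with the observation that the first edge emanating from a left-side start is forced (one choice, not three) still leaves the sum above $1$. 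The $0.56$ is only reached by combining three further refinements that you have not supplied: (i) the \emph{last} edge of any wall is also forced, so only $d-2$ edges branch, giving $3^{d-2}$ per start; (ii) walls starting at the extreme top or bottom admit a unique continuation, which tightens the count to $N_d \leq (d-3)3^{d-2}+2$; and (iii) the exact counts $N_d = 1,2,3,6,18$ for $d=2,\dots,6$, which are far below the formula bound (e.g.\ $18$ vs.\ $245$ at $d=6$) and contribute a large share of the total. The paper splits the sum at $d=6$, uses the exact table below that and the refined formula above it, and only then lands at $\approx 0.56$.

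You do flag the tightness issue and propose a transfer-matrix fallback. That is a different, potentially more powerful technique, but it is not what the paper does and you have not carried it out either; as written, the proposal contains the right skeleton but no completed argument that actually reaches $0.56$. To fix your proof along the paper's lines, you need to argue the forced first and last edges, the boundary-vertex reduction in the number of starting points, and do the small-$d$ enumeration explicitly.
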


\begin{proof}
  Let us start by upper-bounding the maximal number of $d$-walls $N_d$ that 
  $G_\ir^{\triangle *}$ can have. Consider a $d$-wall starting at a specific 
vertex position on the left side of $G_\ir^{\triangle *}$.
  For the choice of the first edge in the path there is only one possibility,  
for the choice of each of the following $d-2$ edges there are at most three 
possibilities, and for the final edge there is a single choice (again). Hence, 
we can upper-bound the number of $d$-walls starting at a specific vertex on the 
left side by $3^{d-2}$.
  It is worth noting that this upper-bound includes many paths which do not 
connect the left and right sides, and hence, do not actually form a $d$-wall. 
  The total number of vertices which can be the initial vertex (left side) of 
  a $d$-wall is $d-1$.
  Hence, we have $N_d \leq (d-1) 3^{d-2}$. In order to obtain a better bound 
  we note that, when the starting vertex of a $d$-wall is either the 
$1^{\text{st}}$ or $(d-1)^{\text{th}}$ from the top, then there is only one 
possible choice of $d$-wall. 
  Therefore, we obtain
\begin{equation}\label{bound:Nd}
  N_d \leq (d-3) 3^{d - 2} +2 \ .
\end{equation}
Direct counting gives the exact number of $d$-walls for small values of $d$, 
which is displayed in the following table.
\begin{equation}\label{eq:tableNd}
\begin{tabular}{|l|l|l|l|l|l|}
  \hline
  $d$ & 2 & 3 & 4 & 5 & 6 \\ \hline
  $N_d$ & 1 & 2 & 3 & 6 & 18 \\ \hline
\end{tabular}
\end{equation}
Next, the probability that $G_\ir^{\triangle *}$ contains a particular $d$-wall 
in $G^{\triangle *}$ is $(1-q^*)^d =q^d$. Therefore, the probability that 
$G_\ir^{\triangle *}$ has at least one $d$-wall satisfies
\begin{equation}
  \pr \{G_\ir^{\triangle *} \text{ has } d\text{-wall} \} \leq N_d\, q^d \ .
\end{equation}
Finally, the probability that $G_\ir^{\triangle *}$ contains a $d$-wall of any 
length $d$ satisfies 
\begin{align}
  \nonumber
  \pr \{&\exists\, d: G_\ir^{\triangle *} \text{ has } d\text{-wall} \} \\ 
  &\leq\ 
  \sum_{d=2}^\infty \pr \{G_\ir^{\triangle *} \text{ has } d\text{-wall} \}  
  \leq\ 
  \sum_{d=2}^\infty N_d\, q^d 
  \\ &\leq\ 
  \sum_{d=2}^6 N_d\, q^d +
  \sum_{d=7}^\infty \left[(d-3)3^{d-2}+2\right] q^d \\
 & \approx 0.56\ ,
\end{align}
where we have use table \eqref{eq:tableNd} and bound \eqref{bound:Nd}.
\end{proof}

Combining lemmas \ref{lem:Localization_2D_Absence_Depth4_UpperBoundQuadrant}, 
\ref{lem:Localization_2D_Absence_Depth4_IndependentArrows}, \ref{lem:wall-path} 
and  \ref{lem:Localization_2D_Absence_Depth4_UpperboundQuadrantPath} we can 
prove our main result.

\medskip\noindent
\textbf{Theorem 3.}
The probability that $G_\r$ has an $l$-path of infinite length $l=\infty$ 
satisfies
\begin{equation}\label{Eq::ProofFinal}
  \pr \{G_\r \text{ has } \infty\text{-path} \} \geq 0.44  \ . 
\end{equation}

To conclude this section, let us note that while this proof of 
theorem\ \ref{thm:Localization_2D_Absence_AlternativeModel_AbsenceResult} 
establishes the absence of localization in the 2D Floquet Clifford 
circuits \eqref{eq:Localization_2D_Absence_Depth2_Uperiod}, the lower bound 
given in Eq.\ \eqref{Eq::ProofFinal} is probably 
not tight for practical purposes. Specifically, in our simulations of  
2D circuits of {\it finite size} $L < \infty$, we find nontrivial operator 
support on the light-cone boundary for virtually all times and all random 
circuit 
realizations, such as in the example depicted in Fig.\ \ref{fig:lightcone}~(b).

\section{Comparing Clifford dynamics with quasi-free bosons and fermions}
\label{sec:int/chao}

In this section we argue that Clifford dynamics shares features with quasi-free 
systems, along with certain similarities with chaotic systems. Therefore, in order 
to elucidate the full landscape of quantum many-body phenomena, it is important to understand the 
properties of Clifford systems.

\subsection{General dynamics}
\label{Sec::GenDyn}

Similarly to quasi-free fermions, Clifford unitaries can be represented by 
symplectic matrices in a phase space of dimension exponentially smaller than 
the Hilbert space. This dimensional reduction allows for efficient simulation of 
the evolution of any local or Pauli operator with a classical computer 
\cite{Aaronson_2004}. 
The efficient simulability of Clifford circuits can also be 
understood with respect to the non-negativity of the associated Wigner 
function in phase space \cite{Gross_2006a, Mari_2012}. In particular, this 
non-negativity of the Wigner 
function of stabilizer states is analogous to the non-negative Wigner function 
of the Gaussian states corresponding to models with quadratic Hamiltonian 
\cite{Hudson_1974, Weedbrook_2012}. 
Furthermore, we have seen in Sec.\ \ref{sec:proof} that Clifford dynamics 
allows for a certain degree of analytical tractability, which is similar to 
other types of integrable models.

Unlike quasi-free systems, the Clifford phase space is a vector space over a 
finite field ($\mathds{Z}_2^{2N}$ is the phase space of $ N $ qubits) hence 
evolution cannot be continuous in time. That is, we can have Floquet-type but 
not Hamiltonian-type dynamics.
For the same reason, the evolution operator 
cannot be diagonalised into non-interacting ``modes". 
Related to this is the fact that some aspects of typical 
Clifford dynamics cannot be understood 
in term of free or weakly-interacting particles [see for instance the dynamics 
in Fig.\ \ref{fig:lightcone}~(a)].

Random Clifford unitaries resemble uniformly distributed (i.e.\ Haar) unitaries 
much more than random quasi-free unitaries do.
This can be made quantitative by using the notion of unitary design 
\cite{Eisert_2007}. 
On one hand, random quasi-free unitaries cannot even generate a 1-design, 
because their evolution operators commute with the number operator (bosons) or 
the parity operator (fermions). On the other hand, random Clifford unitaries 
generate a 3-design \cite{Webb_2016,Zhu_2017}, and almost a 4-design 
\cite{Zhu_2016}. 

As discussed in Sec.\ \ref{Sec::SFF}, the SFF of Clifford unitaries corresponds to that of quasi-free fermions with chaotic single-particle dynamics, as in the quadratic SYK model \cite{Winer_2020, Liao_2020}.

\subsection{Translation-invariant local dynamics}

In \cite{Zimboras_2020} the authors analyze a translation-invariant Clifford 
Floquet model in one spatial dimension. They prove that the system has no local 
or quasi-local integrals of motion.
More specifically, any operator that commutes with the evolution operator acts 
on an extensive number of sites with couplings among them that do not decay with 
the distance. This is very different to what happens with quasi-free systems, 
which have an extensive number of local conservation laws (see 
\cite{BR_2,Dierckx_2008}).

Unlike quasi-free systems, the Clifford model analyzed in \cite{Zimboras_2020} enjoys a 
strong form of eigenstate thermalization. That is, all eigenstates are maximally 
entangled in the sense that the reduced density matrix of any connected region 
is equal to the maximally-mixed state (in the thermodynamic limit). In other 
words, none of the eigenstates satisfy an entanglement area law.

\subsection{Disordered local dynamics}

The 1D analog of our model (analyzed in 
detail in \cite{Farshi_2022}) 
displays a strong form of localization produced by the emergence of left and 
right-blocking walls at random locations [see Fig.\ 
\ref{fig:lightcone}~(a)]. Until 
now, this strong form of localization, reminiscent of 
Anderson localization, 
has only been found in systems of free or weakly-interacting particles. In 
strongly-interacting systems the localization is in some sense weaker (many-body 
localization), since the width of the light-cones grows as the logarithm of 
time. Clifford dynamics appears to be some form of hybrid as 
it cannot be understood entirely in terms of free or 
weakly-interacting particles, but yet displays 
Anderson-type localization in 1D.
Remarkably, however, the similarity in their localization 
properties does not carry over to 2D. More specifically, while we have 
shown in this paper that localization is absent in 2D random Floquet Clifford 
circuits, quasi-free systems, such as non-interacting fermions, are 
well-known to localize in 2D lattices even in the presence of arbitrarily weak 
disorder \cite{Lee_1981}.

\section{Conclusions}
\label{sec:conclusions}

We have introduced a Floquet model comprised of random 
Clifford gates and proved as a main result that it does not localize 
in two spatial dimensions, despite having strong disorder.  
More precisely, we have proven the existence of operators that grow 
ballistically, and we have seen numerically that this holds for almost all 
operators. 
This result is notable because, as discussed in 
Sec.\ \ref{sec:int/chao}, this model shares certain features 
of other 
integrable models, and 
two-dimensional integrable systems with strong disorder are 
usually expected to localize, e.g., non-interacting lattice fermions in a 
random potential as originally considered in the case of Anderson localization. 
We have substantiated our analytical findings by 
numerically demonstrating the absence (presence)  of localization in 
2D (1D) Floquet Clifford circuits. Furthermore, we 
studied the spectral form factor of the Floquet unitary, which is a key 
quantity in the context of quantum chaos. To the best of our knowledge, our 
work is the first to study the SFF in 2D Clifford 
circuits, complementing the analysis of Ref.\ \cite{Su_2020}, which has focused 
on 1D. 
We have unveiled that the SFF behaves drastically different in 
2D. In particular, we observe a clean exponential ramp 
persisting up to a time that scales linearly with system size, which we 
interpret as a signature of ergodic dynamics in phase space.

It is worth noting that, since the definition of light-speed growth only 
concerns the boundary of the light-cone, our results also applies to the 
time-dependent (non-Floquet) version of the model, were new gates are randomly generated at 
each time step. 
For the same reason our results for the two-dimensional model extend to time-periodic circuits with time-period larger than two.
The difference between models with period equal to two and models with a larger period or time-dependent models  could manifest itself in the interior 
of the light cone. While the time-dependent model has completely ergodic 
dynamics, Fig.\ \ref{Fig::OpSpread}~(c) suggests that our 
time-periodic model is 
not far from it.  
While with our analytical methods we cannot 
address the interior of the light-cone, it might be interesting to extend our 
numerical analysis to probe whether the bulk of the light cone is indeed 
featureless or whether the Floquet circuit induces some structure on the 
Pauli strings that are sampled during the dynamics. Such potential differences 
compared with fully ergodic dynamics may for instance reflect themselves in the 
behavior 
of higher-order and non-local correlation functions that go beyond the local 
probe considered in Fig.\ \ref{Fig::OpSpread}.

In future work we would like to study the case where instead of sampling gates 
from the Clifford group we sample from the full unitary group SU(4). 
In the case of time-dependent circuits this has been well studied in references 
such as \cite{Nahum_2018, Khemani_2018a, Harrow_2018}.
Whereas the case of time-periodic quantum circuits in one spatial dimension 
has been studied in references such as 
\cite{Prosen_2018_1, Prosen_2018_2, Bertini_2019}. 
In this context, one promising approach to interpolate between 
Clifford dynamics and more generic Haar-random circuits is to consider 
circuits composed mainly of Clifford elements interspersed with a (low) density 
of non-Clifford gates, the latter acting as a seed of chaos which may 
enhance the ergodic 
aspects of the dynamics further \cite{Zhou_2019_2, Eisert_2020}, despite loosing classical simulability \cite{Leone_2021}. 
Such a procedure may also help 
to better understand the apparent differences as well as similarities of 
Clifford dynamics and other notions of integrability \cite{Winer_2020, 
Liao_2020}, a detailed exploration of 
which we believe to be an important direction of future work. 

The localization proven for dynamics of a period-two, one-dimensional QCA of Cliffords in \cite{Farshi_2022} and corroborated numerically in Fig. \ref{Fig::OpSpread}, as seen above, is known to disappear in the limit of a time-dependent circuit, as it also follows from \cite{Farshi_2022}. Numerical investigations performed in \cite{Taylor_2022} show that the one-dimensional circuit with period equal to four still localizes with the appearance of hard-walls that nevertheless are characterised by a larger localization length than the period-two case.  

Finally, we would like to comment on the connections between our methods and 
directed percolation theory \cite{Saberi_2015, Durrett_1984}. 
Both cases analyze the presence of infinitely long paths which start at the 
origin in random directed graphs. 
But in our case, and in that of general cellular automata, the 
arrows that emerge from the same vertex are not statistically independent, see 
Lemma \ref{lem:Localization_2D_Absence_Depth2_VertexProbDist}, while in the 
standard theory of directed percolation they are.

In conclusion, our work provides a new perspective on the 
possibility of using Clifford circuits to simulate certain novel nonequilibrium 
quantum phenomena. We expect that our work will inform future studies that aim 
to use Clifford circuits as 
starting points to understand 
more generic quantum dynamics. 

The rigorous 
understanding of localization and chaos in this solvable limit provides a basis 
for toy models to simulate non-equilibrium states in kinetically constrained 
models. The classical simulability of Clifford circuits can play a constructive 
role for benchmarking the performance of noisy intermediate-scale quantum 
computers for quantum simulation and distinguishing between classical and 
quantum effects in them.

\subsection*{Acknowledgments}
Tom Farshi acknowledges financial support by the Engineering and 
Physical Sciences Research Council (grant number EP/L015242/1 and EP/S005021/1) 
and is grateful to the Heilbronn Institute for Mathematical Research for 
support.
Lluis Masanes and Daniele Toniolo acknowledge 
financial support by the UK's Engineering and Physical Sciences Research Council 
(grant number EP/R012393/1). Daniele Toniolo also acknowledges support from UKRI 
grant
EP/R029075/1. 
Jonas Richter and Arijeet Pal acknowledge funding by the
European Research Council (ERC) under the European
Union’s Horizon 2020 research and innovation programme
(Grant agreement No. 853368). Jonas Richter 
also received funding from the European
Union's Horizon Europe programme
under the Marie Sk\l odowska-Curie grant agreement
No. 101060162.
    
\section*{Appendix}
\appendix

\section{Localization length in the 1D model} \label{App}
We present here a result that builds upon, and improves, theorem 25 of \cite{Farshi_2022}. Let us first informally restate this theorem: The periodic dynamics of a one-dimensional QCA of two-qubits gates of uniformly sampled Clifford unitaries, represented as the inbox of Fig.~\ref{fig:lightcone}, and Fig.~1 of \cite{Farshi_2022}, is such that the probability of appearance of a right  (or left) blocking wall is at least  $0.12$. If we consider only walls with penetration length equal to $1$, meaning that if for example a right-blocking wall is placed at $ x=0 $ then the support of an operator hitting that wall is allowed to flow only till $x=1$, then the probability of appearance of a wall is exactly equal to $0.12$. The proof of this statement rests on the use of the phase-space formalism for Clifford unitaries. The following corollary gives an upper bound on the average spread of the support of an operator initially non identity only at one point. This spread is the localization length $ \mu $ that has been given in definition  \ref{def_Anderson_type}. 
\begin{Corollary} \label{cor_loc_length}
 The periodic dynamics of a one-dimensional QCA of two-qubits gates of uniformly sampled Clifford unitaries is such that given an operator initially supported only on the site $ x=0$, the probability that a wall blocking its dynamics appears at a distance $ l $, and no other wall appears at $\{0,\dots, l-1\}$, is $ ce^{-\frac{l}{\mu}}$, with $ c \approx 0.07 $ and  $ \mu $ upper bounded by  $13.2$. 
\end{Corollary}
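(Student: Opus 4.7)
The plan is to build on theorem 25 of \cite{Farshi_2022}, which asserts that at any prescribed site a right-blocking wall of the one-dimensional brickwork Clifford circuit appears with probability at least $0.12$ (and with probability exactly $0.12$ when restricted to walls of penetration length one). Once we have a per-site wall probability $p$ in hand, the location of the first wall to the right of the origin can be treated as the first success in a Bernoulli process, and its geometric distribution rewrites immediately as $c\,e^{-l/\mu}$, producing the exponential localization-length bound in the corollary.

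First I would isolate the exact quantity $p$ entering the corollary, namely the probability that a right-blocking wall (of any admissible penetration length) appears at a fixed site and thereby stops the operator from extending further right. Theorem 25 of \cite{Farshi_2022} is the natural input here, and its phase-space proof identifies which local Clifford-gate configurations produce walls of a given penetration length. The value of $p$ relevant for the corollary will in general differ from the raw figure $0.12$ because (i) walls of different penetration lengths contribute and (ii) one must account for the brickwork periodicity, which forces wall events to sit on a sublattice rather than on every integer site.

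Second, I would show that wall events at distinct positions are statistically independent. Each wall at $x$ is determined by the Clifford gates in a finite window around $x$, and the gates composing $U_{\text{per}}$ are drawn independently, so wall events at sites whose windows are disjoint are genuinely independent Bernoulli variables. For neighbouring sites with overlapping windows I would either restrict to a sublattice on which the joint distribution factorises, or invoke a stochastic-domination/coupling argument to upper bound the correlated "no wall" pattern by a product Bernoulli measure with a slightly diminished effective rate $\tilde p \le p$.

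Third, with walls modelled as independent Bernoulli$(\tilde p)$ variables on the effective sublattice, the event ``wall at distance $l$ and no wall at $\{0,\dots,l-1\}$'' has probability $\tilde p (1-\tilde p)^l$. Matching this with $c\,e^{-l/\mu}$ gives $c=\tilde p$ and $\mu = -1/\ln(1-\tilde p)$, and substituting the numerical value of $\tilde p$ derived in step one produces the asserted constants $c\approx 0.07$ and $\mu \le 13.2$.

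The main obstacle will be the second step: wall events at adjacent sites share underlying gates and so are not literally independent, and the effective rate $\tilde p$ resulting from the decorrelation step is exactly what determines the numerical values of $c$ and $\mu$. I expect a monotone coupling together with a short case analysis on the overlapping gate windows to suffice, with the resulting mild degradation of the Bernoulli rate being precisely compatible with the claimed constants; turning this intuition into a tight quantitative bound is where most of the work will lie.
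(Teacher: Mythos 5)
Your high-level picture is right (the first wall to the right of the origin is approximately geometrically distributed, giving the exponential form $c\,e^{-l/\mu}$), and you correctly flag the correlation between nearby wall events as the crux. But the way you propose to handle the correlation — stochastic domination by a Bernoulli process with a single effective rate $\tilde p$ — cannot reproduce the paper's result, and the reason is structural, not merely a matter of losing constants.

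The paper's proof does not decorrelate: it exploits the correlation. The absence of a wall at $x=l$ is a condition on a specific window of gate blocks ($C_{l+1}, D_l, A_{l+1}, C_l$), and the window for $x=l$ overlaps the window for $x=l-1$ but is disjoint from the window for $x=l-2$. This makes the sequence $\overline W_0,\overline W_1,\overline W_2,\dots$ a stationary Markov chain of order one. The paper then computes (by explicit computer enumeration of Clifford matrix blocks) the transition probability $P(\overline W_1\mid\overline W_0)=0.927$, which is emphatically not equal to the marginal $P(\overline W_0)=0.88$. The decay rate of the exponential is set by the conditional probability, $\mu=-1/\ln 0.927\approx 13.2$, while the prefactor $c=(1-0.927)\cdot 0.88/0.927\approx 0.07$ mixes the conditional and the marginal. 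A Bernoulli model forces $P(\overline W_1\mid\overline W_0)=P(\overline W_0)$, i.e., $c=1-e^{-1/\mu}$, and the actual numbers violate this identity: $1-e^{-1/13.2}\approx 0.073\ne 0.069$, and, more importantly, a rate tuned to give $\mu\approx 13.2$ would be $\tilde p\approx 0.073$, far from your input $p=0.12$. Starting from the unconditional $0.12$ and ``slightly diminishing'' it will not get you there; the gap between $1-0.88=0.12$ and $1-0.927=0.073$ is not small, and it is exactly the content of the Markov step. Moreover, stochastic domination would at best give a one-sided inequality, whereas the corollary asserts the exact exponential form for penetration-length-one walls (the ``upper bound'' qualification on $\mu$ refers only to neglecting longer-penetration walls, not to a coupling slack). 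The missing ingredient in your plan is therefore the identification of the process as a two-step-memory Markov chain, together with the explicit computation of $P(\overline W_1\mid\overline W_0)$ — which you have no mechanism to produce from your Bernoulli ansatz.
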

\begin{proof}
 Let us assume for simplicity that we consider a wall blocking propagation towards the right. According to the phase space formalism, as detailed in reference \cite{Farshi_2022}, see in particular section II E and section VII of \cite{Farshi_2022}, the absence of walls at neighboring positions, for example $ x=0 $ and $ x=1 $, are correlated events. In fact they satisfy the conditions: $ C_1C_0 \neq 0 $ and$\setminus$or $ C_1D_0A_1C_0 \neq 0 $ for the absence of a wall at $ x=0 $, and  $ C_2C_1 \neq 0 $ and$\setminus$or $ C_2D_1A_2C_1 \neq 0 $ for the absence of a wall at $ x=1 $. Denoting $ W_l $ the presence of a wall at $ x=l $, and $ \overline{W}_l $ the absence of a wall at $ x=l $, we see that $ \textrm{Prob} ( \overline{W}_1 \wedge \overline{W}_0 ) = \textrm{Prob} ( \overline{W}_1 | \overline{W}_0 ) \textrm{Prob} ( \overline{W}_0 )$.
 This leads to a Markov chain, in fact the absence of a wall at $ x=2 $ involves the matrix blocks $ \{C_3,D_2,A_3,C_2\} $, we see that none of them appears in the condition defining $\overline{W}_0 $. This means that the probability of having a wall at $ x=l $  and no other wall at $\{0,\dots, l-1\}$ is given by:
 \begin{align}
  &\textrm{Prob} \left( W_l \wedge \overline{W}_{l-1}  \wedge \dots \wedge \overline{W}_0 \right) = \nonumber \\
 &=\textrm{Prob} \left( W_l | \overline{W}_{l-1} \right) \textrm{Prob} \left( \overline{W}_1 | \overline{W}_0 \right)^{l-1} \textrm{Prob} \left( \overline{W}_0 \right) \nonumber \\
&= \left(1-\textrm{Prob} \left( \overline{W}_l | \overline{W}_{l-1} \right) \right) \textrm{Prob} \left( \overline{W}_1 | \overline{W}_0 \right)^{l-1} \textrm{Prob} \left( \overline{W}_0 \right) \nonumber \\
&= \left(1-\textrm{Prob} \left( \overline{W}_1 | \overline{W}_0 \right) \right) \textrm{Prob} \left( \overline{W}_1 | \overline{W}_0 \right)^{l-1} \textrm{Prob} \left( \overline{W}_0 \right) \nonumber \\
&= \frac{\left(1-\textrm{Prob} \left( \overline{W}_1 | \overline{W}_0 \right) \right) \textrm{Prob} \left( \overline{W}_0 \right) }{\textrm{Prob} \left( \overline{W}_1 | \overline{W}_0 \right)} \textrm{Prob} \left( \overline{W}_1 | \overline{W}_0 \right)^{l}  \label{l_wall}
 \end{align}
 With the aid of a computer software we have exactly evaluated $ \textrm{Prob} \left( \overline{W}_1 | \overline{W}_0 \right) = 0.927 $, and we recall from \cite{Farshi_2022} that $ \textrm{Prob} \left( \overline{W}_0 \right) = 0.88 $. Then equation \eqref{l_wall} implies that 
\begin{equation} \label{exp}
 \textrm{Prob} \left( W_l \wedge \overline{W}_{l-1}  \wedge \dots \wedge \overline{W}_0 \right) = c e^{-\frac{l}{\mu}}
\end{equation}
with
\begin{align}
 & c =  \frac{\left(1-\textrm{Prob} \left( \overline{W}_1 | \overline{W}_0 \right) \right) \textrm{Prob} \left( \overline{W}_0 \right) }{\textrm{Prob} \left( \overline{W}_1 | \overline{W}_0 \right)} \approx 0.07 \label{c} \\
 & \mu = -\frac{1}{\log \left( \textrm{Prob} \left( \overline{W}_1 | \overline{W}_0 \right) \right) } \approx 13.2 \label{mu}
\end{align}
This proof takes into account walls with penetration length equal to 1, described by the Fig. 4 and 5 of \cite{Farshi_2022}. Walls with larger penetration length are also allowed by the dynamics, taking them into account would lead to a shorter localization length, as confirmed by the numerics that leads to Fig. \ref{Fig::OpSpread}, therefore the value  $ \mu \approx 13.2 $ is an upper bound. 
\end{proof}
We comment about normalization of probability as a check that corollary \ref{cor_loc_length} is consistent. 
In the limit of an infinite system the probability of having no wall at all is vanishing. This follows from the proof of corollary \ref{cor_loc_length} where instead of evaluating $  \textrm{Prob} \left( W_l \wedge \overline{W}_{l-1}  \wedge \dots \wedge \overline{W}_0 \right) $ we consider $ \textrm{Prob} \left( \overline{W}_l \wedge \overline{W}_{l-1}  \wedge \dots \wedge \overline{W}_0 \right) $. This means that the probability of having at least one wall equals 1. The probability of having at least one wall is the sum of the probability of having a wall in $ x=0 $ and whatever else, plus the probability of having no wall in $ x=0 $ and a wall in $ x=1 $ and whatever else, plus the probability of having no wall in $ x=0 $ and no wall in $ x=1 $ and a wall in $ x=2 $ and whatever else, and so on. This means that:
\begin{equation} \label{norm}
 1 = \textrm{Prob} \left( W_0 \right) + \sum_{l=1}^\infty \textrm{Prob} \left( W_l \wedge \overline{W}_{l-1}  \wedge \dots \wedge \overline{W}_0 \right) 
\end{equation}
It possible to verify that equation \eqref{norm} is implied by equations \eqref{exp}, \eqref{c} and \eqref{mu}.
We then have the normalized probability distribution $ P(l) $ for the appearance of a wall at $ x=l $ and no prior wall:
\begin{align}
  P(l) = \left\{\begin{array}{ll}
    \textrm{Prob} \left( W_0 \right) = 0.12 & \mbox{ with }  l=0 \\
    c e^{-\frac{l}{\mu}} & \mbox{ with } l \ge 1 \\
  \end{array}
  \right.
\end{align}
This implies that the average position of a wall is given by 
\begin{equation}
 \langle l \rangle = \sum_{l=1}^\infty l c e^{-\frac{l}{\mu}} = \textrm{Prob} \left( \overline{W}_0 \right)  \mu + 
\mathcal{O}\left(1-\textrm{Prob} \left( \overline{W}_1 | \overline{W}_0 \right) \right)^2
\end{equation}
This gives the relationship among the average position of a wall and the localization length $ \mu $ in the limit of a large system.

\bibliographystyle{apsrev4-2}
\bibliography{bibliography}

\begin{thebibliography}{107}%
\makeatletter
\providecommand \@ifxundefined [1]{%
 \@ifx{#1\undefined}
}%
\providecommand \@ifnum [1]{%
 \ifnum #1\expandafter \@firstoftwo
 \else \expandafter \@secondoftwo
 \fi
}%
\providecommand \@ifx [1]{%
 \ifx #1\expandafter \@firstoftwo
 \else \expandafter \@secondoftwo
 \fi
}%
\providecommand \natexlab [1]{#1}%
\providecommand \enquote  [1]{``#1''}%
\providecommand \bibnamefont  [1]{#1}%
\providecommand \bibfnamefont [1]{#1}%
\providecommand \citenamefont [1]{#1}%
\providecommand \href@noop [0]{\@secondoftwo}%
\providecommand \href [0]{\begingroup \@sanitize@url \@href}%
\providecommand \@href[1]{\@@startlink{#1}\@@href}%
\providecommand \@@href[1]{\endgroup#1\@@endlink}%
\providecommand \@sanitize@url [0]{\catcode `\\12\catcode `\$12\catcode
  `\&12\catcode `\#12\catcode `\^12\catcode `\_12\catcode `\%12\relax}%
\providecommand \@@startlink[1]{}%
\providecommand \@@endlink[0]{}%
\providecommand \url  [0]{\begingroup\@sanitize@url \@url }%
\providecommand \@url [1]{\endgroup\@href {#1}{\urlprefix }}%
\providecommand \urlprefix  [0]{URL }%
\providecommand \Eprint [0]{\href }%
\providecommand \doibase [0]{https://doi.org/}%
\providecommand \selectlanguage [0]{\@gobble}%
\providecommand \bibinfo  [0]{\@secondoftwo}%
\providecommand \bibfield  [0]{\@secondoftwo}%
\providecommand \translation [1]{[#1]}%
\providecommand \BibitemOpen [0]{}%
\providecommand \bibitemStop [0]{}%
\providecommand \bibitemNoStop [0]{.\EOS\space}%
\providecommand \EOS [0]{\spacefactor3000\relax}%
\providecommand \BibitemShut  [1]{\csname bibitem#1\endcsname}%
\let\auto@bib@innerbib\@empty
\bibitem [{\citenamefont {Polkovnikov}\ \emph {et~al.}(2011)\citenamefont
  {Polkovnikov}, \citenamefont {Sengupta}, \citenamefont {Silva},\ and\
  \citenamefont {Vengalattore}}]{Polkovnikov_2011}%
  \BibitemOpen
  \bibfield  {author} {\bibinfo {author} {\bibfnamefont {A.}~\bibnamefont
  {Polkovnikov}}, \bibinfo {author} {\bibfnamefont {K.}~\bibnamefont
  {Sengupta}}, \bibinfo {author} {\bibfnamefont {A.}~\bibnamefont {Silva}},\
  and\ \bibinfo {author} {\bibfnamefont {M.}~\bibnamefont {Vengalattore}},\
  }\href {https://doi.org/10.1103/revmodphys.83.863} {\bibfield  {journal}
  {\bibinfo  {journal} {Rev Mod Phys}\ }\textbf {\bibinfo {volume} {83}},\
  \bibinfo {pages} {863} (\bibinfo {year} {2011})}\BibitemShut {NoStop}%
\bibitem [{\citenamefont {Nahum}\ \emph {et~al.}(2017)\citenamefont {Nahum},
  \citenamefont {Ruhman}, \citenamefont {Vijay},\ and\ \citenamefont
  {Haah}}]{Nahum_2017}%
  \BibitemOpen
  \bibfield  {author} {\bibinfo {author} {\bibfnamefont {A.}~\bibnamefont
  {Nahum}}, \bibinfo {author} {\bibfnamefont {J.}~\bibnamefont {Ruhman}},
  \bibinfo {author} {\bibfnamefont {S.}~\bibnamefont {Vijay}},\ and\ \bibinfo
  {author} {\bibfnamefont {J.}~\bibnamefont {Haah}},\ }\href
  {https://doi.org/10.1103/physrevx.7.031016} {\bibfield  {journal} {\bibinfo
  {journal} {Phys. Rev. X}\ }\textbf {\bibinfo {volume} {7}},\ \bibinfo {pages}
  {031016} (\bibinfo {year} {2017})}\BibitemShut {NoStop}%
\bibitem [{\citenamefont {Nahum}\ \emph {et~al.}(2018)\citenamefont {Nahum},
  \citenamefont {Vijay},\ and\ \citenamefont {Haah}}]{Nahum_2018}%
  \BibitemOpen
  \bibfield  {author} {\bibinfo {author} {\bibfnamefont {A.}~\bibnamefont
  {Nahum}}, \bibinfo {author} {\bibfnamefont {S.}~\bibnamefont {Vijay}},\ and\
  \bibinfo {author} {\bibfnamefont {J.}~\bibnamefont {Haah}},\ }\href
  {https://doi.org/10.1103/PhysRevX.8.021014} {\bibfield  {journal} {\bibinfo
  {journal} {Phys. Rev. X}\ }\textbf {\bibinfo {volume} {8}},\ \bibinfo {pages}
  {021014} (\bibinfo {year} {2018})}\BibitemShut {NoStop}%
\bibitem [{\citenamefont {von Keyserlingk}\ \emph {et~al.}(2018)\citenamefont
  {von Keyserlingk}, \citenamefont {Rakovszky}, \citenamefont {Pollmann},\ and\
  \citenamefont {Sondhi}}]{von_Keyserlingk_2018}%
  \BibitemOpen
  \bibfield  {author} {\bibinfo {author} {\bibfnamefont {C.}~\bibnamefont {von
  Keyserlingk}}, \bibinfo {author} {\bibfnamefont {T.}~\bibnamefont
  {Rakovszky}}, \bibinfo {author} {\bibfnamefont {F.}~\bibnamefont
  {Pollmann}},\ and\ \bibinfo {author} {\bibfnamefont {S.}~\bibnamefont
  {Sondhi}},\ }\href {https://doi.org/10.1103/physrevx.8.021013} {\bibfield
  {journal} {\bibinfo  {journal} {Phys. Rev. X}\ }\textbf {\bibinfo {volume}
  {8}},\ \bibinfo {pages} {021013} (\bibinfo {year} {2018})}\BibitemShut
  {NoStop}%
\bibitem [{\citenamefont {Rakovszky}\ \emph {et~al.}(2018)\citenamefont
  {Rakovszky}, \citenamefont {Pollmann},\ and\ \citenamefont {von
  Keyserlingk}}]{Rakovszky_2018}%
  \BibitemOpen
  \bibfield  {author} {\bibinfo {author} {\bibfnamefont {T.}~\bibnamefont
  {Rakovszky}}, \bibinfo {author} {\bibfnamefont {F.}~\bibnamefont
  {Pollmann}},\ and\ \bibinfo {author} {\bibfnamefont {C.}~\bibnamefont {von
  Keyserlingk}},\ }\href {https://doi.org/10.1103/physrevx.8.031058} {\bibfield
   {journal} {\bibinfo  {journal} {Phys. Rev. X}\ }\textbf {\bibinfo {volume}
  {8}},\ \bibinfo {pages} {031058} (\bibinfo {year} {2018})}\BibitemShut
  {NoStop}%
\bibitem [{\citenamefont {Khemani}\ \emph {et~al.}(2018)\citenamefont
  {Khemani}, \citenamefont {Vishwanath},\ and\ \citenamefont
  {Huse}}]{Khemani_2018a}%
  \BibitemOpen
  \bibfield  {author} {\bibinfo {author} {\bibfnamefont {V.}~\bibnamefont
  {Khemani}}, \bibinfo {author} {\bibfnamefont {A.}~\bibnamefont
  {Vishwanath}},\ and\ \bibinfo {author} {\bibfnamefont {D.~A.}\ \bibnamefont
  {Huse}},\ }\href {https://doi.org/10.1103/PhysRevX.8.031057} {\bibfield
  {journal} {\bibinfo  {journal} {Phys. Rev. X}\ }\textbf {\bibinfo {volume}
  {8}},\ \bibinfo {pages} {031057} (\bibinfo {year} {2018})}\BibitemShut
  {NoStop}%
\bibitem [{\citenamefont {Zhou}\ and\ \citenamefont {Nahum}(2019)}]{Zhou_2019}%
  \BibitemOpen
  \bibfield  {author} {\bibinfo {author} {\bibfnamefont {T.}~\bibnamefont
  {Zhou}}\ and\ \bibinfo {author} {\bibfnamefont {A.}~\bibnamefont {Nahum}},\
  }\href {https://doi.org/10.1103/physrevb.99.174205} {\bibfield  {journal}
  {\bibinfo  {journal} {Phys Rev B}\ }\textbf {\bibinfo {volume} {99}},\
  \bibinfo {pages} {174205} (\bibinfo {year} {2019})}\BibitemShut {NoStop}%
\bibitem [{\citenamefont {Brand\~ao}\ \emph {et~al.}(2021)\citenamefont
  {Brand\~ao}, \citenamefont {Chemissany}, \citenamefont {Hunter-Jones},
  \citenamefont {Kueng},\ and\ \citenamefont {Preskill}}]{Brandao_2019}%
  \BibitemOpen
  \bibfield  {author} {\bibinfo {author} {\bibfnamefont {F.~G.}\ \bibnamefont
  {Brand\~ao}}, \bibinfo {author} {\bibfnamefont {W.}~\bibnamefont
  {Chemissany}}, \bibinfo {author} {\bibfnamefont {N.}~\bibnamefont
  {Hunter-Jones}}, \bibinfo {author} {\bibfnamefont {R.}~\bibnamefont
  {Kueng}},\ and\ \bibinfo {author} {\bibfnamefont {J.}~\bibnamefont
  {Preskill}},\ }\href {https://doi.org/10.1103/PRXQuantum.2.030316} {\bibfield
   {journal} {\bibinfo  {journal} {PRX Quantum}\ }\textbf {\bibinfo {volume}
  {2}},\ \bibinfo {pages} {030316} (\bibinfo {year} {2021})}\BibitemShut
  {NoStop}%
\bibitem [{\citenamefont {Potter}\ and\ \citenamefont
  {Vasseur}(2021)}]{Potter2021}%
  \BibitemOpen
  \bibfield  {author} {\bibinfo {author} {\bibfnamefont {A.~C.}\ \bibnamefont
  {Potter}}\ and\ \bibinfo {author} {\bibfnamefont {R.}~\bibnamefont
  {Vasseur}},\ }\href@noop {} {\  (\bibinfo {year} {2021})},\ \Eprint
  {https://arxiv.org/abs/2111.08018} {arXiv:2111.08018 [quant-ph]} \BibitemShut
  {NoStop}%
\bibitem [{\citenamefont {Moudgalya}\ \emph {et~al.}(2021)\citenamefont
  {Moudgalya}, \citenamefont {Prem}, \citenamefont {Huse},\ and\ \citenamefont
  {Chan}}]{Moudgalya_2021}%
  \BibitemOpen
  \bibfield  {author} {\bibinfo {author} {\bibfnamefont {S.}~\bibnamefont
  {Moudgalya}}, \bibinfo {author} {\bibfnamefont {A.}~\bibnamefont {Prem}},
  \bibinfo {author} {\bibfnamefont {D.~A.}\ \bibnamefont {Huse}},\ and\
  \bibinfo {author} {\bibfnamefont {A.}~\bibnamefont {Chan}},\ }\href
  {https://doi.org/10.1103/physrevresearch.3.023176} {\bibfield  {journal}
  {\bibinfo  {journal} {Phys. Rev. Research}\ }\textbf {\bibinfo {volume}
  {3}},\ \bibinfo {pages} {023176} (\bibinfo {year} {2021})}\BibitemShut
  {NoStop}%
\bibitem [{\citenamefont {Gopalakrishnan}\ and\ \citenamefont
  {Zakirov}(2018)}]{Gopalakrishnan_2018}%
  \BibitemOpen
  \bibfield  {author} {\bibinfo {author} {\bibfnamefont {S.}~\bibnamefont
  {Gopalakrishnan}}\ and\ \bibinfo {author} {\bibfnamefont {B.}~\bibnamefont
  {Zakirov}},\ }\href {https://doi.org/10.1088/2058-9565/aad759} {\bibfield
  {journal} {\bibinfo  {journal} {Quantum Sci. Technol.}\ }\textbf {\bibinfo
  {volume} {3}},\ \bibinfo {pages} {044004} (\bibinfo {year}
  {2018})}\BibitemShut {NoStop}%
\bibitem [{\citenamefont {Bertini}\ \emph {et~al.}(2019)\citenamefont
  {Bertini}, \citenamefont {Kos},\ and\ \citenamefont
  {Prosen}}]{Bertini_2019_2}%
  \BibitemOpen
  \bibfield  {author} {\bibinfo {author} {\bibfnamefont {B.}~\bibnamefont
  {Bertini}}, \bibinfo {author} {\bibfnamefont {P.}~\bibnamefont {Kos}},\ and\
  \bibinfo {author} {\bibfnamefont {T.}~\bibnamefont {Prosen}},\ }\href
  {https://doi.org/10.1103/physrevlett.123.210601} {\bibfield  {journal}
  {\bibinfo  {journal} {Phys Rev Lett}\ }\textbf {\bibinfo {volume} {123}},\
  \bibinfo {pages} {210601} (\bibinfo {year} {2019})}\BibitemShut {NoStop}%
\bibitem [{\citenamefont {Claeys}\ and\ \citenamefont
  {Lamacraft}(2021)}]{Claeys_2021}%
  \BibitemOpen
  \bibfield  {author} {\bibinfo {author} {\bibfnamefont {P.~W.}\ \bibnamefont
  {Claeys}}\ and\ \bibinfo {author} {\bibfnamefont {A.}~\bibnamefont
  {Lamacraft}},\ }\href {https://doi.org/10.1103/physrevlett.126.100603}
  {\bibfield  {journal} {\bibinfo  {journal} {Phys Rev Lett}\ }\textbf
  {\bibinfo {volume} {126}},\ \bibinfo {pages} {100603} (\bibinfo {year}
  {2021})}\BibitemShut {NoStop}%
\bibitem [{\citenamefont {Farshi}\ \emph {et~al.}(2022)\citenamefont {Farshi},
  \citenamefont {Toniolo}, \citenamefont {Gonz{\'a}lez-Guill{\'e}n},
  \citenamefont {Alhambra},\ and\ \citenamefont {Masanes}}]{Farshi_2022}%
  \BibitemOpen
  \bibfield  {author} {\bibinfo {author} {\bibfnamefont {T.}~\bibnamefont
  {Farshi}}, \bibinfo {author} {\bibfnamefont {D.}~\bibnamefont {Toniolo}},
  \bibinfo {author} {\bibfnamefont {C.~E.}\ \bibnamefont
  {Gonz{\'a}lez-Guill{\'e}n}}, \bibinfo {author} {\bibfnamefont {{\'A}.~M.}\
  \bibnamefont {Alhambra}},\ and\ \bibinfo {author} {\bibfnamefont
  {L.}~\bibnamefont {Masanes}},\ }\href {https://doi.org/10.1063/5.0054863}
  {\bibfield  {journal} {\bibinfo  {journal} {Journal of Mathematical Physics}\
  }\textbf {\bibinfo {volume} {63}},\ \bibinfo {pages} {032201} (\bibinfo
  {year} {2022})}\BibitemShut {NoStop}%
\bibitem [{\citenamefont {Richter}\ \emph {et~al.}(2023)\citenamefont
  {Richter}, \citenamefont {Lunt},\ and\ \citenamefont {Pal}}]{Richter_2022}%
  \BibitemOpen
  \bibfield  {author} {\bibinfo {author} {\bibfnamefont {J.}~\bibnamefont
  {Richter}}, \bibinfo {author} {\bibfnamefont {O.}~\bibnamefont {Lunt}},\ and\
  \bibinfo {author} {\bibfnamefont {A.}~\bibnamefont {Pal}},\ }\href
  {https://doi.org/10.1103/PhysRevResearch.5.L012031} {\bibfield  {journal}
  {\bibinfo  {journal} {Phys. Rev. Res.}\ }\textbf {\bibinfo {volume} {5}},\
  \bibinfo {pages} {L012031} (\bibinfo {year} {2023})}\BibitemShut {NoStop}%
\bibitem [{\citenamefont {Arute}\ \emph {et~al.}(2019)\citenamefont {Arute}
  \emph {et~al.}}]{Arute_2019}%
  \BibitemOpen
  \bibfield  {author} {\bibinfo {author} {\bibfnamefont {F.}~\bibnamefont
  {Arute}} \emph {et~al.},\ }\href {https://doi.org/10.1038/s41586-019-1666-5}
  {\bibfield  {journal} {\bibinfo  {journal} {Nature}\ }\textbf {\bibinfo
  {volume} {574}},\ \bibinfo {pages} {505} (\bibinfo {year}
  {2019})}\BibitemShut {NoStop}%
\bibitem [{\citenamefont {Mi}\ \emph {et~al.}(2021)\citenamefont {Mi} \emph
  {et~al.}}]{Mi_2021}%
  \BibitemOpen
  \bibfield  {author} {\bibinfo {author} {\bibfnamefont {X.}~\bibnamefont {Mi}}
  \emph {et~al.},\ }\href {https://doi.org/10.1126/science.abg5029} {\bibfield
  {journal} {\bibinfo  {journal} {Science}\ }\textbf {\bibinfo {volume}
  {374}},\ \bibinfo {pages} {1479} (\bibinfo {year} {2021})}\BibitemShut
  {NoStop}%
\bibitem [{\citenamefont {D'Alessio}\ \emph {et~al.}(2016)\citenamefont
  {D'Alessio}, \citenamefont {Kafri}, \citenamefont {Polkovnikov},\ and\
  \citenamefont {Rigol}}]{D'Alessio_review_2016}%
  \BibitemOpen
  \bibfield  {author} {\bibinfo {author} {\bibfnamefont {L.}~\bibnamefont
  {D'Alessio}}, \bibinfo {author} {\bibfnamefont {Y.}~\bibnamefont {Kafri}},
  \bibinfo {author} {\bibfnamefont {A.}~\bibnamefont {Polkovnikov}},\ and\
  \bibinfo {author} {\bibfnamefont {M.}~\bibnamefont {Rigol}},\ }\href
  {https://doi.org/10.1080/00018732.2016.1198134} {\bibfield  {journal}
  {\bibinfo  {journal} {Adv Phys}\ }\textbf {\bibinfo {volume} {65}},\ \bibinfo
  {pages} {239} (\bibinfo {year} {2016})}\BibitemShut {NoStop}%
\bibitem [{\citenamefont {Gogolin}\ and\ \citenamefont
  {Eisert}(2016)}]{Gogolin_2016}%
  \BibitemOpen
  \bibfield  {author} {\bibinfo {author} {\bibfnamefont {C.}~\bibnamefont
  {Gogolin}}\ and\ \bibinfo {author} {\bibfnamefont {J.}~\bibnamefont
  {Eisert}},\ }\href {https://doi.org/10.1088/0034-4885/79/5/056001} {\bibfield
   {journal} {\bibinfo  {journal} {Reports on Progress in Physics}\ }\textbf
  {\bibinfo {volume} {79}},\ \bibinfo {pages} {056001} (\bibinfo {year}
  {2016})}\BibitemShut {NoStop}%
\bibitem [{\citenamefont {Vidmar}\ and\ \citenamefont
  {Rigol}(2016)}]{Vidmar_2016}%
  \BibitemOpen
  \bibfield  {author} {\bibinfo {author} {\bibfnamefont {L.}~\bibnamefont
  {Vidmar}}\ and\ \bibinfo {author} {\bibfnamefont {M.}~\bibnamefont {Rigol}},\
  }\href {https://doi.org/10.1088/1742-5468/2016/06/064007} {\bibfield
  {journal} {\bibinfo  {journal} {J. Stat. Mech: Theory Exp.}\ }\textbf
  {\bibinfo {volume} {2016}},\ \bibinfo {pages} {064007} (\bibinfo {year}
  {2016})}\BibitemShut {NoStop}%
\bibitem [{\citenamefont {Nandkishore}\ and\ \citenamefont
  {Huse}(2015)}]{Nandkishore_2015}%
  \BibitemOpen
  \bibfield  {author} {\bibinfo {author} {\bibfnamefont {R.}~\bibnamefont
  {Nandkishore}}\ and\ \bibinfo {author} {\bibfnamefont {D.~A.}\ \bibnamefont
  {Huse}},\ }\href {https://doi.org/10.1146/annurev-conmatphys-031214-014726}
  {\bibfield  {journal} {\bibinfo  {journal} {Annu Rev Conden Ma P}\ }\textbf
  {\bibinfo {volume} {6}},\ \bibinfo {pages} {15} (\bibinfo {year}
  {2015})}\BibitemShut {NoStop}%
\bibitem [{\citenamefont {Abanin}\ \emph {et~al.}(2019)\citenamefont {Abanin},
  \citenamefont {Altman}, \citenamefont {Bloch},\ and\ \citenamefont
  {Serbyn}}]{Abanin_2019}%
  \BibitemOpen
  \bibfield  {author} {\bibinfo {author} {\bibfnamefont {D.~A.}\ \bibnamefont
  {Abanin}}, \bibinfo {author} {\bibfnamefont {E.}~\bibnamefont {Altman}},
  \bibinfo {author} {\bibfnamefont {I.}~\bibnamefont {Bloch}},\ and\ \bibinfo
  {author} {\bibfnamefont {M.}~\bibnamefont {Serbyn}},\ }\href
  {https://doi.org/10.1103/RevModPhys.91.021001} {\bibfield  {journal}
  {\bibinfo  {journal} {Rev. Mod. Phys.}\ }\textbf {\bibinfo {volume} {91}},\
  \bibinfo {pages} {021001} (\bibinfo {year} {2019})}\BibitemShut {NoStop}%
\bibitem [{\citenamefont {Stolz}(2011)}]{Stolz_2011}%
  \BibitemOpen
  \bibfield  {author} {\bibinfo {author} {\bibfnamefont {G.}~\bibnamefont
  {Stolz}},\ }\href {https://arxiv.org/abs/1104.2317v1} {\bibfield  {journal}
  {\bibinfo  {journal} {Contemp Math}\ }\textbf {\bibinfo {volume} {552}},\
  \bibinfo {pages} {71} (\bibinfo {year} {2011})},\ \Eprint
  {https://arxiv.org/abs/1104.2317} {1104.2317 [math-ph]} \BibitemShut
  {NoStop}%
\bibitem [{\citenamefont {Aizenman}\ and\ \citenamefont
  {Warzel}(2015)}]{Aizenman_2015}%
  \BibitemOpen
  \bibfield  {author} {\bibinfo {author} {\bibfnamefont {M.}~\bibnamefont
  {Aizenman}}\ and\ \bibinfo {author} {\bibfnamefont {S.}~\bibnamefont
  {Warzel}},\ }\href@noop {} {\emph {\bibinfo {title} {Random Operators:
  Disorder Effects on Quantum Spectra and Dynamics}}}\ (\bibinfo  {publisher}
  {AMS},\ \bibinfo {year} {2015})\BibitemShut {NoStop}%
\bibitem [{\citenamefont {Anderson}(1958)}]{Anderson_1958}%
  \BibitemOpen
  \bibfield  {author} {\bibinfo {author} {\bibfnamefont {P.~W.}\ \bibnamefont
  {Anderson}},\ }\href {https://doi.org/10.1103/physrev.109.1492} {\bibfield
  {journal} {\bibinfo  {journal} {Phys Rev}\ }\textbf {\bibinfo {volume}
  {109}},\ \bibinfo {pages} {1492} (\bibinfo {year} {1958})}\BibitemShut
  {NoStop}%
\bibitem [{\citenamefont {Hamza}\ \emph {et~al.}(2012)\citenamefont {Hamza},
  \citenamefont {Sims},\ and\ \citenamefont {Stolz}}]{Sims_Stolz_2012}%
  \BibitemOpen
  \bibfield  {author} {\bibinfo {author} {\bibfnamefont {E.}~\bibnamefont
  {Hamza}}, \bibinfo {author} {\bibfnamefont {R.}~\bibnamefont {Sims}},\ and\
  \bibinfo {author} {\bibfnamefont {G.}~\bibnamefont {Stolz}},\ }\href
  {https://doi.org/10.1007/s00220-012-1544-6} {\bibfield  {journal} {\bibinfo
  {journal} {Commun. Math. Phys.}\ }\textbf {\bibinfo {volume} {315}},\
  \bibinfo {pages} {215} (\bibinfo {year} {2012})}\BibitemShut {NoStop}%
\bibitem [{\citenamefont {Abdul-Rahman}\ \emph {et~al.}(2016)\citenamefont
  {Abdul-Rahman}, \citenamefont {Nachtergaele}, \citenamefont {Sims},\ and\
  \citenamefont {Stolz}}]{Nach_2016}%
  \BibitemOpen
  \bibfield  {author} {\bibinfo {author} {\bibfnamefont {H.}~\bibnamefont
  {Abdul-Rahman}}, \bibinfo {author} {\bibfnamefont {B.}~\bibnamefont
  {Nachtergaele}}, \bibinfo {author} {\bibfnamefont {R.}~\bibnamefont {Sims}},\
  and\ \bibinfo {author} {\bibfnamefont {G.}~\bibnamefont {Stolz}},\ }\href
  {https://doi.org/10.1007/s11005-016-0835-9} {\bibfield  {journal} {\bibinfo
  {journal} {Lett Math Phys}\ }\textbf {\bibinfo {volume} {106}},\ \bibinfo
  {pages} {649–674} (\bibinfo {year} {2016})}\BibitemShut {NoStop}%
\bibitem [{\citenamefont {Deng}\ \emph {et~al.}(2017)\citenamefont {Deng},
  \citenamefont {Li}, \citenamefont {Pixley}, \citenamefont {Wu},\ and\
  \citenamefont {Das~Sarma}}]{Das_Sarma_2017}%
  \BibitemOpen
  \bibfield  {author} {\bibinfo {author} {\bibfnamefont {D.-L.}\ \bibnamefont
  {Deng}}, \bibinfo {author} {\bibfnamefont {X.}~\bibnamefont {Li}}, \bibinfo
  {author} {\bibfnamefont {J.~H.}\ \bibnamefont {Pixley}}, \bibinfo {author}
  {\bibfnamefont {Y.-L.}\ \bibnamefont {Wu}},\ and\ \bibinfo {author}
  {\bibfnamefont {S.}~\bibnamefont {Das~Sarma}},\ }\href
  {https://doi.org/10.1103/PhysRevB.95.024202} {\bibfield  {journal} {\bibinfo
  {journal} {Phys Rev B}\ }\textbf {\bibinfo {volume} {95}},\ \bibinfo {pages}
  {024202} (\bibinfo {year} {2017})}\BibitemShut {NoStop}%
\bibitem [{\citenamefont {$\check{Z}$nidari$\check{c}$}\ \emph
  {et~al.}(2008)\citenamefont {$\check{Z}$nidari$\check{c}$}, \citenamefont
  {Prosen},\ and\ \citenamefont {Prelov$\check{s}$ek}}]{Znidaric_2008}%
  \BibitemOpen
  \bibfield  {author} {\bibinfo {author} {\bibfnamefont {M.}~\bibnamefont
  {$\check{Z}$nidari$\check{c}$}}, \bibinfo {author} {\bibfnamefont
  {T.}~\bibnamefont {Prosen}},\ and\ \bibinfo {author} {\bibfnamefont
  {P.}~\bibnamefont {Prelov$\check{s}$ek}},\ }\href
  {https://doi.org/10.1103/PhysRevB.77.064426} {\bibfield  {journal} {\bibinfo
  {journal} {Phys. Rev. B}\ }\textbf {\bibinfo {volume} {77}},\ \bibinfo
  {pages} {064426} (\bibinfo {year} {2008})}\BibitemShut {NoStop}%
\bibitem [{\citenamefont {Bardarson}\ \emph {et~al.}(2012)\citenamefont
  {Bardarson}, \citenamefont {Pollmann},\ and\ \citenamefont
  {Moore}}]{Bardarson_2012}%
  \BibitemOpen
  \bibfield  {author} {\bibinfo {author} {\bibfnamefont {J.~H.}\ \bibnamefont
  {Bardarson}}, \bibinfo {author} {\bibfnamefont {F.}~\bibnamefont
  {Pollmann}},\ and\ \bibinfo {author} {\bibfnamefont {J.~E.}\ \bibnamefont
  {Moore}},\ }\href {https://doi.org/10.1103/PhysRevLett.109.017202} {\bibfield
   {journal} {\bibinfo  {journal} {Phys. Rev. Lett.}\ }\textbf {\bibinfo
  {volume} {109}},\ \bibinfo {pages} {017202} (\bibinfo {year}
  {2012})}\BibitemShut {NoStop}%
\bibitem [{\citenamefont {Serbyn}\ \emph {et~al.}(2013)\citenamefont {Serbyn},
  \citenamefont {Papi\ifmmode~\acute{c}\else \'{c}\fi{}},\ and\ \citenamefont
  {Abanin}}]{Serbyn_2013}%
  \BibitemOpen
  \bibfield  {author} {\bibinfo {author} {\bibfnamefont {M.}~\bibnamefont
  {Serbyn}}, \bibinfo {author} {\bibfnamefont {Z.}~\bibnamefont
  {Papi\ifmmode~\acute{c}\else \'{c}\fi{}}},\ and\ \bibinfo {author}
  {\bibfnamefont {D.~A.}\ \bibnamefont {Abanin}},\ }\href
  {https://doi.org/10.1103/PhysRevLett.110.260601} {\bibfield  {journal}
  {\bibinfo  {journal} {Phys. Rev. Lett.}\ }\textbf {\bibinfo {volume} {110}},\
  \bibinfo {pages} {260601} (\bibinfo {year} {2013})}\BibitemShut {NoStop}%
\bibitem [{\citenamefont {Imbrie}(2016)}]{Imbrie_2016}%
  \BibitemOpen
  \bibfield  {author} {\bibinfo {author} {\bibfnamefont {J.~Z.}\ \bibnamefont
  {Imbrie}},\ }\href {https://doi.org/10.1007/s10955-016-1508-x} {\bibfield
  {journal} {\bibinfo  {journal} {J Stat Phys}\ }\textbf {\bibinfo {volume}
  {163}},\ \bibinfo {pages} {998} (\bibinfo {year} {2016})}\BibitemShut
  {NoStop}%
\bibitem [{\citenamefont {De~Roeck}\ and\ \citenamefont
  {Huveneers}(2017)}]{De_Roeck_2017_1}%
  \BibitemOpen
  \bibfield  {author} {\bibinfo {author} {\bibfnamefont {W.}~\bibnamefont
  {De~Roeck}}\ and\ \bibinfo {author} {\bibfnamefont {F.}~\bibnamefont
  {Huveneers}},\ }\href {https://doi.org/10.1103/PhysRevB.95.155129} {\bibfield
   {journal} {\bibinfo  {journal} {Phys. Rev. B}\ }\textbf {\bibinfo {volume}
  {95}},\ \bibinfo {pages} {155129} (\bibinfo {year} {2017})}\BibitemShut
  {NoStop}%
\bibitem [{\citenamefont {$\check{S}$untajs}\ \emph
  {et~al.}(2020{\natexlab{a}})\citenamefont {$\check{S}$untajs}, \citenamefont
  {Bon$\check{c}$a}, \citenamefont {Prosen},\ and\ \citenamefont
  {Vidmar}}]{_untajs_2020}%
  \BibitemOpen
  \bibfield  {author} {\bibinfo {author} {\bibfnamefont {J.}~\bibnamefont
  {$\check{S}$untajs}}, \bibinfo {author} {\bibfnamefont {J.}~\bibnamefont
  {Bon$\check{c}$a}}, \bibinfo {author} {\bibfnamefont {T.}~\bibnamefont
  {Prosen}},\ and\ \bibinfo {author} {\bibfnamefont {L.}~\bibnamefont
  {Vidmar}},\ }\href
  {https://doi.org/https://doi.org/10.1103/PhysRevB.102.064207} {\bibfield
  {journal} {\bibinfo  {journal} {Physical Review B}\ }\textbf {\bibinfo
  {volume} {102}},\ \bibinfo {pages} {064207} (\bibinfo {year}
  {2020}{\natexlab{a}})}\BibitemShut {NoStop}%
\bibitem [{\citenamefont {$\check{S}$untajs}\ \emph
  {et~al.}(2020{\natexlab{b}})\citenamefont {$\check{S}$untajs}, \citenamefont
  {Bon$\check{c}$a}, \citenamefont {Prosen},\ and\ \citenamefont
  {Vidmar}}]{Prosen_chal_2020}%
  \BibitemOpen
  \bibfield  {author} {\bibinfo {author} {\bibfnamefont {J.}~\bibnamefont
  {$\check{S}$untajs}}, \bibinfo {author} {\bibfnamefont {J.}~\bibnamefont
  {Bon$\check{c}$a}}, \bibinfo {author} {\bibfnamefont {T.}~\bibnamefont
  {Prosen}},\ and\ \bibinfo {author} {\bibfnamefont {L.}~\bibnamefont
  {Vidmar}},\ }\href {https://doi.org/10.1103/PhysRevE.102.062144} {\bibfield
  {journal} {\bibinfo  {journal} {Phys. Rev. E}\ }\textbf {\bibinfo {volume}
  {102}},\ \bibinfo {pages} {062144} (\bibinfo {year}
  {2020}{\natexlab{b}})}\BibitemShut {NoStop}%
\bibitem [{\citenamefont {Sierant}\ \emph
  {et~al.}(2020{\natexlab{a}})\citenamefont {Sierant}, \citenamefont
  {Lewenstein},\ and\ \citenamefont {Zakrzewski}}]{Sierant_2020}%
  \BibitemOpen
  \bibfield  {author} {\bibinfo {author} {\bibfnamefont {P.}~\bibnamefont
  {Sierant}}, \bibinfo {author} {\bibfnamefont {M.}~\bibnamefont
  {Lewenstein}},\ and\ \bibinfo {author} {\bibfnamefont {J.}~\bibnamefont
  {Zakrzewski}},\ }\href {https://doi.org/10.1103/PhysRevLett.125.156601}
  {\bibfield  {journal} {\bibinfo  {journal} {Phys. Rev. Lett.}\ }\textbf
  {\bibinfo {volume} {125}},\ \bibinfo {pages} {156601} (\bibinfo {year}
  {2020}{\natexlab{a}})}\BibitemShut {NoStop}%
\bibitem [{\citenamefont {Morningstar}\ \emph {et~al.}(2022)\citenamefont
  {Morningstar}, \citenamefont {Colmenarez}, \citenamefont {Khemani},
  \citenamefont {Luitz},\ and\ \citenamefont {Huse}}]{Morningstar_2022}%
  \BibitemOpen
  \bibfield  {author} {\bibinfo {author} {\bibfnamefont {A.}~\bibnamefont
  {Morningstar}}, \bibinfo {author} {\bibfnamefont {L.}~\bibnamefont
  {Colmenarez}}, \bibinfo {author} {\bibfnamefont {V.}~\bibnamefont {Khemani}},
  \bibinfo {author} {\bibfnamefont {D.~J.}\ \bibnamefont {Luitz}},\ and\
  \bibinfo {author} {\bibfnamefont {D.~A.}\ \bibnamefont {Huse}},\ }\href
  {https://doi.org/10.1103/physrevb.105.174205} {\bibfield  {journal} {\bibinfo
   {journal} {Phys Rev B}\ }\textbf {\bibinfo {volume} {105}},\ \bibinfo
  {pages} {174205} (\bibinfo {year} {2022})}\BibitemShut {NoStop}%
\bibitem [{\citenamefont {De~Roeck}\ and\ \citenamefont
  {Imbrie}(2017)}]{deRoeck_Imbrie_stable_2017}%
  \BibitemOpen
  \bibfield  {author} {\bibinfo {author} {\bibfnamefont {W.}~\bibnamefont
  {De~Roeck}}\ and\ \bibinfo {author} {\bibfnamefont {J.~Z.}\ \bibnamefont
  {Imbrie}},\ }\href {https://doi.org/10.1098/rsta.2016.0422} {\bibfield
  {journal} {\bibinfo  {journal} {Philosophical Transactions of the Royal
  Society A: Mathematical, Physical and Engineering Sciences}\ }\textbf
  {\bibinfo {volume} {375}},\ \bibinfo {pages} {20160422} (\bibinfo {year}
  {2017})}\BibitemShut {NoStop}%
\bibitem [{\citenamefont {Wahl}\ \emph {et~al.}(2018)\citenamefont {Wahl},
  \citenamefont {Pal},\ and\ \citenamefont {Simon}}]{Wahl_2018}%
  \BibitemOpen
  \bibfield  {author} {\bibinfo {author} {\bibfnamefont {T.~B.}\ \bibnamefont
  {Wahl}}, \bibinfo {author} {\bibfnamefont {A.}~\bibnamefont {Pal}},\ and\
  \bibinfo {author} {\bibfnamefont {S.~H.}\ \bibnamefont {Simon}},\ }\href
  {https://doi.org/10.1038/s41567-018-0339-x} {\bibfield  {journal} {\bibinfo
  {journal} {Nat Phys}\ }\textbf {\bibinfo {volume} {15}},\ \bibinfo {pages}
  {164} (\bibinfo {year} {2018})}\BibitemShut {NoStop}%
\bibitem [{\citenamefont {Sels}\ and\ \citenamefont
  {Polkovnikov}(2021)}]{Sels_2021}%
  \BibitemOpen
  \bibfield  {author} {\bibinfo {author} {\bibfnamefont {D.}~\bibnamefont
  {Sels}}\ and\ \bibinfo {author} {\bibfnamefont {A.}~\bibnamefont
  {Polkovnikov}},\ }\href {https://doi.org/10.1103/physreve.104.054105}
  {\bibfield  {journal} {\bibinfo  {journal} {Phys Rev E}\ }\textbf {\bibinfo
  {volume} {104}},\ \bibinfo {pages} {054105} (\bibinfo {year}
  {2021})}\BibitemShut {NoStop}%
\bibitem [{\citenamefont {Kiefer-Emmanouilidis}\ \emph
  {et~al.}(2020)\citenamefont {Kiefer-Emmanouilidis}, \citenamefont {Unanyan},
  \citenamefont {Fleischhauer},\ and\ \citenamefont
  {Sirker}}]{Kiefer_Emmanouilidis_2020}%
  \BibitemOpen
  \bibfield  {author} {\bibinfo {author} {\bibfnamefont {M.}~\bibnamefont
  {Kiefer-Emmanouilidis}}, \bibinfo {author} {\bibfnamefont {R.}~\bibnamefont
  {Unanyan}}, \bibinfo {author} {\bibfnamefont {M.}~\bibnamefont
  {Fleischhauer}},\ and\ \bibinfo {author} {\bibfnamefont {J.}~\bibnamefont
  {Sirker}},\ }\href {https://doi.org/10.1103/physrevlett.124.243601}
  {\bibfield  {journal} {\bibinfo  {journal} {Phys Rev Lett}\ }\textbf
  {\bibinfo {volume} {124}},\ \bibinfo {pages} {243601} (\bibinfo {year}
  {2020})}\BibitemShut {NoStop}%
\bibitem [{\citenamefont {Ghosh}\ and\ \citenamefont
  {$\check{Z}$nidari$\check{c}$}(2022)}]{Roop_2022}%
  \BibitemOpen
  \bibfield  {author} {\bibinfo {author} {\bibfnamefont {R.}~\bibnamefont
  {Ghosh}}\ and\ \bibinfo {author} {\bibfnamefont {M.}~\bibnamefont
  {$\check{Z}$nidari$\check{c}$}},\ }\href
  {https://doi.org/10.1103/PhysRevB.105.144203} {\bibfield  {journal} {\bibinfo
   {journal} {Phys. Rev. B}\ }\textbf {\bibinfo {volume} {105}},\ \bibinfo
  {pages} {144203} (\bibinfo {year} {2022})}\BibitemShut {NoStop}%
\bibitem [{\citenamefont {Richter}\ and\ \citenamefont
  {Pal}(2022)}]{Richter_2022_2}%
  \BibitemOpen
  \bibfield  {author} {\bibinfo {author} {\bibfnamefont {J.}~\bibnamefont
  {Richter}}\ and\ \bibinfo {author} {\bibfnamefont {A.}~\bibnamefont {Pal}},\
  }\href {https://doi.org/10.1103/physrevb.105.l220405} {\bibfield  {journal}
  {\bibinfo  {journal} {Phys Rev B}\ }\textbf {\bibinfo {volume} {105}},\
  \bibinfo {pages} {l220405} (\bibinfo {year} {2022})}\BibitemShut {NoStop}%
\bibitem [{\citenamefont {D'Alessio}\ and\ \citenamefont
  {Rigol}(2014)}]{D_Alessio_2014}%
  \BibitemOpen
  \bibfield  {author} {\bibinfo {author} {\bibfnamefont {L.}~\bibnamefont
  {D'Alessio}}\ and\ \bibinfo {author} {\bibfnamefont {M.}~\bibnamefont
  {Rigol}},\ }\href {https://doi.org/https://doi.org/10.1103/PhysRevX.4.041048}
  {\bibfield  {journal} {\bibinfo  {journal} {Physical Review X}\ }\textbf
  {\bibinfo {volume} {4}},\ \bibinfo {pages} {041048} (\bibinfo {year}
  {2014})}\BibitemShut {NoStop}%
\bibitem [{\citenamefont {Lazarides}\ \emph {et~al.}(2014)\citenamefont
  {Lazarides}, \citenamefont {Das},\ and\ \citenamefont
  {Moessner}}]{Lazarides_2014}%
  \BibitemOpen
  \bibfield  {author} {\bibinfo {author} {\bibfnamefont {A.}~\bibnamefont
  {Lazarides}}, \bibinfo {author} {\bibfnamefont {A.}~\bibnamefont {Das}},\
  and\ \bibinfo {author} {\bibfnamefont {R.}~\bibnamefont {Moessner}},\ }\href
  {https://doi.org/https://doi.org/10.1103/PhysRevE.90.012110} {\bibfield
  {journal} {\bibinfo  {journal} {Physical Review E}\ }\textbf {\bibinfo
  {volume} {90}},\ \bibinfo {pages} {012110} (\bibinfo {year}
  {2014})}\BibitemShut {NoStop}%
\bibitem [{\citenamefont {S{\"u}nderhauf}\ \emph {et~al.}(2018)\citenamefont
  {S{\"u}nderhauf}, \citenamefont {{P{\'e}rez-Garc{\'i}a}}, \citenamefont
  {Huse}, \citenamefont {Schuch},\ and\ \citenamefont
  {Cirac}}]{Sunderhauf_2018}%
  \BibitemOpen
  \bibfield  {author} {\bibinfo {author} {\bibfnamefont {C.}~\bibnamefont
  {S{\"u}nderhauf}}, \bibinfo {author} {\bibfnamefont {D.}~\bibnamefont
  {{P{\'e}rez-Garc{\'i}a}}}, \bibinfo {author} {\bibfnamefont {D.~A.}\
  \bibnamefont {Huse}}, \bibinfo {author} {\bibfnamefont {N.}~\bibnamefont
  {Schuch}},\ and\ \bibinfo {author} {\bibfnamefont {J.~I.}\ \bibnamefont
  {Cirac}},\ }\href {https://doi.org/10.1103/PhysRevB.98.134204} {\bibfield
  {journal} {\bibinfo  {journal} {Phys Rev B}\ }\textbf {\bibinfo {volume}
  {98}},\ \bibinfo {pages} {134204} (\bibinfo {year} {2018})}\BibitemShut
  {NoStop}%
\bibitem [{\citenamefont {Khemani}\ \emph {et~al.}(2020)\citenamefont
  {Khemani}, \citenamefont {Hermele},\ and\ \citenamefont
  {Nandkishore}}]{Khemani_2020}%
  \BibitemOpen
  \bibfield  {author} {\bibinfo {author} {\bibfnamefont {V.}~\bibnamefont
  {Khemani}}, \bibinfo {author} {\bibfnamefont {M.}~\bibnamefont {Hermele}},\
  and\ \bibinfo {author} {\bibfnamefont {R.}~\bibnamefont {Nandkishore}},\
  }\href {https://doi.org/https://doi.org/10.1103/PhysRevB.101.174204}
  {\bibfield  {journal} {\bibinfo  {journal} {Physical Review B}\ }\textbf
  {\bibinfo {volume} {101}},\ \bibinfo {pages} {174204} (\bibinfo {year}
  {2020})}\BibitemShut {NoStop}%
\bibitem [{\citenamefont {Pai}\ \emph {et~al.}(2019)\citenamefont {Pai},
  \citenamefont {Pretko},\ and\ \citenamefont {Nandkishore}}]{Pai_2019}%
  \BibitemOpen
  \bibfield  {author} {\bibinfo {author} {\bibfnamefont {S.}~\bibnamefont
  {Pai}}, \bibinfo {author} {\bibfnamefont {M.}~\bibnamefont {Pretko}},\ and\
  \bibinfo {author} {\bibfnamefont {R.~M.}\ \bibnamefont {Nandkishore}},\
  }\href {https://doi.org/https://doi.org/10.1103/PhysRevX.9.021003} {\bibfield
   {journal} {\bibinfo  {journal} {Physical Review X}\ }\textbf {\bibinfo
  {volume} {9}},\ \bibinfo {pages} {021003} (\bibinfo {year}
  {2019})}\BibitemShut {NoStop}%
\bibitem [{\citenamefont {Kos}\ \emph {et~al.}(2018)\citenamefont {Kos},
  \citenamefont {Ljubotina},\ and\ \citenamefont {Prosen}}]{Prosen_2018_1}%
  \BibitemOpen
  \bibfield  {author} {\bibinfo {author} {\bibfnamefont {P.}~\bibnamefont
  {Kos}}, \bibinfo {author} {\bibfnamefont {M.}~\bibnamefont {Ljubotina}},\
  and\ \bibinfo {author} {\bibfnamefont {T.}~\bibnamefont {Prosen}},\ }\href
  {https://doi.org/10.1103/PhysRevX.8.021062} {\bibfield  {journal} {\bibinfo
  {journal} {Phys. Rev. X}\ }\textbf {\bibinfo {volume} {8}},\ \bibinfo {pages}
  {021062} (\bibinfo {year} {2018})}\BibitemShut {NoStop}%
\bibitem [{\citenamefont {Bertini}\ \emph {et~al.}(2018)\citenamefont
  {Bertini}, \citenamefont {Kos},\ and\ \citenamefont
  {Prosen}}]{Prosen_2018_2}%
  \BibitemOpen
  \bibfield  {author} {\bibinfo {author} {\bibfnamefont {B.}~\bibnamefont
  {Bertini}}, \bibinfo {author} {\bibfnamefont {P.}~\bibnamefont {Kos}},\ and\
  \bibinfo {author} {\bibfnamefont {T.}~\bibnamefont {Prosen}},\ }\href
  {https://doi.org/10.1103/PhysRevLett.121.264101} {\bibfield  {journal}
  {\bibinfo  {journal} {Phys Rev Lett}\ }\textbf {\bibinfo {volume} {121}},\
  \bibinfo {pages} {264101} (\bibinfo {year} {2018})}\BibitemShut {NoStop}%
\bibitem [{\citenamefont {Bertini}\ \emph {et~al.}(2020)\citenamefont
  {Bertini}, \citenamefont {Kos},\ and\ \citenamefont {Prosen}}]{Bertini_2019}%
  \BibitemOpen
  \bibfield  {author} {\bibinfo {author} {\bibfnamefont {B.}~\bibnamefont
  {Bertini}}, \bibinfo {author} {\bibfnamefont {P.}~\bibnamefont {Kos}},\ and\
  \bibinfo {author} {\bibfnamefont {T.}~\bibnamefont {Prosen}},\ }\href
  {https://doi.org/10.21468/SciPostPhys.8.4.067} {\bibfield  {journal}
  {\bibinfo  {journal} {SciPost Phys.}\ }\textbf {\bibinfo {volume} {8}},\
  \bibinfo {pages} {67} (\bibinfo {year} {2020})}\BibitemShut {NoStop}%
\bibitem [{\citenamefont {Chan}\ \emph
  {et~al.}(2018{\natexlab{a}})\citenamefont {Chan}, \citenamefont {De~Luca},\
  and\ \citenamefont {Chalker}}]{Chalker_2018}%
  \BibitemOpen
  \bibfield  {author} {\bibinfo {author} {\bibfnamefont {A.}~\bibnamefont
  {Chan}}, \bibinfo {author} {\bibfnamefont {A.}~\bibnamefont {De~Luca}},\ and\
  \bibinfo {author} {\bibfnamefont {J.~T.}\ \bibnamefont {Chalker}},\ }\href
  {https://doi.org/10.1103/PhysRevLett.121.060601} {\bibfield  {journal}
  {\bibinfo  {journal} {Phys Rev Lett}\ }\textbf {\bibinfo {volume} {121}},\
  \bibinfo {pages} {060601} (\bibinfo {year} {2018}{\natexlab{a}})}\BibitemShut
  {NoStop}%
\bibitem [{\citenamefont {Chan}\ \emph
  {et~al.}(2018{\natexlab{b}})\citenamefont {Chan}, \citenamefont {De~Luca},\
  and\ \citenamefont {Chalker}}]{Chalker_PRX_2018}%
  \BibitemOpen
  \bibfield  {author} {\bibinfo {author} {\bibfnamefont {A.}~\bibnamefont
  {Chan}}, \bibinfo {author} {\bibfnamefont {A.}~\bibnamefont {De~Luca}},\ and\
  \bibinfo {author} {\bibfnamefont {J.~T.}\ \bibnamefont {Chalker}},\ }\href
  {https://doi.org/10.1103/PhysRevX.8.041019} {\bibfield  {journal} {\bibinfo
  {journal} {Phys. Rev. X}\ }\textbf {\bibinfo {volume} {8}},\ \bibinfo {pages}
  {041019} (\bibinfo {year} {2018}{\natexlab{b}})}\BibitemShut {NoStop}%
\bibitem [{\citenamefont {Bukov}\ \emph {et~al.}(2015)\citenamefont {Bukov},
  \citenamefont {D'Alessio},\ and\ \citenamefont
  {Polkovnikov}}]{Bukov_review_2015}%
  \BibitemOpen
  \bibfield  {author} {\bibinfo {author} {\bibfnamefont {M.}~\bibnamefont
  {Bukov}}, \bibinfo {author} {\bibfnamefont {L.}~\bibnamefont {D'Alessio}},\
  and\ \bibinfo {author} {\bibfnamefont {A.}~\bibnamefont {Polkovnikov}},\
  }\href {https://doi.org/10.1080/00018732.2015.1055918} {\bibfield  {journal}
  {\bibinfo  {journal} {Adv Phys}\ }\textbf {\bibinfo {volume} {64}},\ \bibinfo
  {pages} {139} (\bibinfo {year} {2015})}\BibitemShut {NoStop}%
\bibitem [{\citenamefont {Khemani}\ \emph {et~al.}(2016)\citenamefont
  {Khemani}, \citenamefont {Lazarides}, \citenamefont {Moessner},\ and\
  \citenamefont {Sondhi}}]{Khemani_2016}%
  \BibitemOpen
  \bibfield  {author} {\bibinfo {author} {\bibfnamefont {V.}~\bibnamefont
  {Khemani}}, \bibinfo {author} {\bibfnamefont {A.}~\bibnamefont {Lazarides}},
  \bibinfo {author} {\bibfnamefont {R.}~\bibnamefont {Moessner}},\ and\
  \bibinfo {author} {\bibfnamefont {S.}~\bibnamefont {Sondhi}},\ }\href
  {https://doi.org/10.1103/physrevlett.116.250401} {\bibfield  {journal}
  {\bibinfo  {journal} {Phys Rev Lett}\ }\textbf {\bibinfo {volume} {116}},\
  \bibinfo {pages} {250401} (\bibinfo {year} {2016})}\BibitemShut {NoStop}%
\bibitem [{\citenamefont {Else}\ \emph {et~al.}(2016)\citenamefont {Else},
  \citenamefont {Bauer},\ and\ \citenamefont {Nayak}}]{Else_2016}%
  \BibitemOpen
  \bibfield  {author} {\bibinfo {author} {\bibfnamefont {D.~V.}\ \bibnamefont
  {Else}}, \bibinfo {author} {\bibfnamefont {B.}~\bibnamefont {Bauer}},\ and\
  \bibinfo {author} {\bibfnamefont {C.}~\bibnamefont {Nayak}},\ }\href
  {https://doi.org/10.1103/physrevlett.117.090402} {\bibfield  {journal}
  {\bibinfo  {journal} {Phys Rev Lett}\ }\textbf {\bibinfo {volume} {117}},\
  \bibinfo {pages} {090402} (\bibinfo {year} {2016})}\BibitemShut {NoStop}%
\bibitem [{\citenamefont {Knill}\ \emph {et~al.}(2008)\citenamefont {Knill},
  \citenamefont {Leibfried}, \citenamefont {Reichle}, \citenamefont {Britton},
  \citenamefont {Blakestad}, \citenamefont {Jost}, \citenamefont {Langer},
  \citenamefont {Ozeri}, \citenamefont {Seidelin},\ and\ \citenamefont
  {Wineland}}]{Knill_2008}%
  \BibitemOpen
  \bibfield  {author} {\bibinfo {author} {\bibfnamefont {E.}~\bibnamefont
  {Knill}}, \bibinfo {author} {\bibfnamefont {D.}~\bibnamefont {Leibfried}},
  \bibinfo {author} {\bibfnamefont {R.}~\bibnamefont {Reichle}}, \bibinfo
  {author} {\bibfnamefont {J.}~\bibnamefont {Britton}}, \bibinfo {author}
  {\bibfnamefont {R.~B.}\ \bibnamefont {Blakestad}}, \bibinfo {author}
  {\bibfnamefont {J.~D.}\ \bibnamefont {Jost}}, \bibinfo {author}
  {\bibfnamefont {C.}~\bibnamefont {Langer}}, \bibinfo {author} {\bibfnamefont
  {R.}~\bibnamefont {Ozeri}}, \bibinfo {author} {\bibfnamefont
  {S.}~\bibnamefont {Seidelin}},\ and\ \bibinfo {author} {\bibfnamefont
  {D.~J.}\ \bibnamefont {Wineland}},\ }\href
  {https://doi.org/10.1103/physreva.77.012307} {\bibfield  {journal} {\bibinfo
  {journal} {Phys Rev A}\ }\textbf {\bibinfo {volume} {77}},\ \bibinfo {pages}
  {012307} (\bibinfo {year} {2008})}\BibitemShut {NoStop}%
\bibitem [{\citenamefont {Magesan}\ \emph {et~al.}(2011)\citenamefont
  {Magesan}, \citenamefont {Gambetta},\ and\ \citenamefont
  {Emerson}}]{Magesan_2011}%
  \BibitemOpen
  \bibfield  {author} {\bibinfo {author} {\bibfnamefont {E.}~\bibnamefont
  {Magesan}}, \bibinfo {author} {\bibfnamefont {J.~M.}\ \bibnamefont
  {Gambetta}},\ and\ \bibinfo {author} {\bibfnamefont {J.}~\bibnamefont
  {Emerson}},\ }\href {https://doi.org/10.1103/PhysRevLett.106.180504}
  {\bibfield  {journal} {\bibinfo  {journal} {Phys Rev Lett}\ }\textbf
  {\bibinfo {volume} {106}},\ \bibinfo {pages} {180504} (\bibinfo {year}
  {2011})}\BibitemShut {NoStop}%
\bibitem [{\citenamefont {Onorati}\ \emph {et~al.}(2019)\citenamefont
  {Onorati}, \citenamefont {Werner},\ and\ \citenamefont
  {Eisert}}]{Onorati_2019}%
  \BibitemOpen
  \bibfield  {author} {\bibinfo {author} {\bibfnamefont {E.}~\bibnamefont
  {Onorati}}, \bibinfo {author} {\bibfnamefont {A.~H.}\ \bibnamefont
  {Werner}},\ and\ \bibinfo {author} {\bibfnamefont {J.}~\bibnamefont
  {Eisert}},\ }\href {https://doi.org/10.1103/PhysRevLett.123.060501}
  {\bibfield  {journal} {\bibinfo  {journal} {Phys. Rev. Lett.}\ }\textbf
  {\bibinfo {volume} {123}},\ \bibinfo {pages} {060501} (\bibinfo {year}
  {2019})}\BibitemShut {NoStop}%
\bibitem [{\citenamefont {Li}\ \emph {et~al.}(2018)\citenamefont {Li},
  \citenamefont {Chen},\ and\ \citenamefont {Fisher}}]{Li_2018}%
  \BibitemOpen
  \bibfield  {author} {\bibinfo {author} {\bibfnamefont {Y.}~\bibnamefont
  {Li}}, \bibinfo {author} {\bibfnamefont {X.}~\bibnamefont {Chen}},\ and\
  \bibinfo {author} {\bibfnamefont {M.~P.~A.}\ \bibnamefont {Fisher}},\ }\href
  {https://doi.org/10.1103/physrevb.98.205136} {\bibfield  {journal} {\bibinfo
  {journal} {Phys Rev B}\ }\textbf {\bibinfo {volume} {98}},\ \bibinfo {pages}
  {205136} (\bibinfo {year} {2018})}\BibitemShut {NoStop}%
\bibitem [{\citenamefont {Li}\ \emph {et~al.}(2021)\citenamefont {Li},
  \citenamefont {Vasseur}, \citenamefont {Fisher},\ and\ \citenamefont
  {Ludwig}}]{Ludwig_2021}%
  \BibitemOpen
  \bibfield  {author} {\bibinfo {author} {\bibfnamefont {Y.}~\bibnamefont
  {Li}}, \bibinfo {author} {\bibfnamefont {R.}~\bibnamefont {Vasseur}},
  \bibinfo {author} {\bibfnamefont {M.~P.~A.}\ \bibnamefont {Fisher}},\ and\
  \bibinfo {author} {\bibfnamefont {A.~W.~W.}\ \bibnamefont {Ludwig}},\
  }\href@noop {} {\bibinfo {title} {Statistical mechanics model for clifford
  random tensor networks and monitored quantum circuits}} (\bibinfo {year}
  {2021}),\ \Eprint {https://arxiv.org/abs/2110.02988} {arXiv:2110.02988
  [cond-mat.stat-mech]} \BibitemShut {NoStop}%
\bibitem [{\citenamefont {Lunt}\ \emph {et~al.}(2022)\citenamefont {Lunt},
  \citenamefont {Richter},\ and\ \citenamefont {Pal}}]{Lunt_2021}%
  \BibitemOpen
  \bibfield  {author} {\bibinfo {author} {\bibfnamefont {O.}~\bibnamefont
  {Lunt}}, \bibinfo {author} {\bibfnamefont {J.}~\bibnamefont {Richter}},\ and\
  \bibinfo {author} {\bibfnamefont {A.}~\bibnamefont {Pal}},\ }\bibinfo {title}
  {Quantum simulation using noisy unitary circuits and measurements},\ in\
  \href {https://doi.org/10.1007/978-3-031-03998-0_10} {\emph {\bibinfo
  {booktitle} {Entanglement in Spin Chains: From Theory to Quantum Technology
  Applications}}},\ \bibinfo {editor} {edited by\ \bibinfo {editor}
  {\bibfnamefont {A.}~\bibnamefont {Bayat}}, \bibinfo {editor} {\bibfnamefont
  {S.}~\bibnamefont {Bose}},\ and\ \bibinfo {editor} {\bibfnamefont
  {H.}~\bibnamefont {Johannesson}}}\ (\bibinfo  {publisher} {Springer
  International Publishing},\ \bibinfo {address} {Cham},\ \bibinfo {year}
  {2022})\ pp.\ \bibinfo {pages} {251--284},\ \Eprint
  {https://arxiv.org/abs/2112.06682} {arXiv:2112.06682 [quant-ph]} \BibitemShut
  {NoStop}%
\bibitem [{\citenamefont {Gottesman}(1998)}]{Gottesman_1998}%
  \BibitemOpen
  \bibfield  {author} {\bibinfo {author} {\bibfnamefont {D.}~\bibnamefont
  {Gottesman}},\ }\href@noop {} {\bibinfo {title} {The {{Heisenberg
  Representation}} of {{Quantum Computers}}}} (\bibinfo {year} {1998}),\
  \Eprint {https://arxiv.org/abs/quant-ph/9807006} {arXiv:quant-ph/9807006
  [quant-ph]} \BibitemShut {NoStop}%
\bibitem [{\citenamefont {Aaronson}\ and\ \citenamefont
  {Gottesman}(2004)}]{Aaronson_2004}%
  \BibitemOpen
  \bibfield  {author} {\bibinfo {author} {\bibfnamefont {S.}~\bibnamefont
  {Aaronson}}\ and\ \bibinfo {author} {\bibfnamefont {D.}~\bibnamefont
  {Gottesman}},\ }\href {https://doi.org/10.1103/PhysRevA.70.052328} {\bibfield
   {journal} {\bibinfo  {journal} {Phys Rev A}\ }\textbf {\bibinfo {volume}
  {70}},\ \bibinfo {pages} {052328} (\bibinfo {year} {2004})}\BibitemShut
  {NoStop}%
\bibitem [{\citenamefont {Harrow}\ and\ \citenamefont
  {Low}(2009)}]{Harrow_2009}%
  \BibitemOpen
  \bibfield  {author} {\bibinfo {author} {\bibfnamefont {A.~W.}\ \bibnamefont
  {Harrow}}\ and\ \bibinfo {author} {\bibfnamefont {R.~A.}\ \bibnamefont
  {Low}},\ }\href {https://doi.org/10.1007/s00220-009-0873-6} {\bibfield
  {journal} {\bibinfo  {journal} {Commun. Math. Phys.}\ }\textbf {\bibinfo
  {volume} {291}},\ \bibinfo {pages} {257} (\bibinfo {year}
  {2009})}\BibitemShut {NoStop}%
\bibitem [{\citenamefont {Webb}(2016)}]{Webb_2016}%
  \BibitemOpen
  \bibfield  {author} {\bibinfo {author} {\bibfnamefont {Z.}~\bibnamefont
  {Webb}},\ }\href {https://arxiv.org/abs/1510.02769} {\bibfield  {journal}
  {\bibinfo  {journal} {Quantum Information \& Computation}\ }\textbf {\bibinfo
  {volume} {16}},\ \bibinfo {pages} {1379} (\bibinfo {year} {2016})},\ \Eprint
  {https://arxiv.org/abs/1510.02769} {arXiv:1510.02769 [quant-ph]} \BibitemShut
  {NoStop}%
\bibitem [{\citenamefont {Zhu}(2017)}]{Zhu_2017}%
  \BibitemOpen
  \bibfield  {author} {\bibinfo {author} {\bibfnamefont {H.}~\bibnamefont
  {Zhu}},\ }\href {https://doi.org/10.1103/PhysRevA.96.062336} {\bibfield
  {journal} {\bibinfo  {journal} {Phys. Rev. A}\ }\textbf {\bibinfo {volume}
  {96}},\ \bibinfo {pages} {062336} (\bibinfo {year} {2017})}\BibitemShut
  {NoStop}%
\bibitem [{\citenamefont {Derrida}\ and\ \citenamefont
  {Stauffer}(1986)}]{Derrida_1986}%
  \BibitemOpen
  \bibfield  {author} {\bibinfo {author} {\bibfnamefont {B.}~\bibnamefont
  {Derrida}}\ and\ \bibinfo {author} {\bibfnamefont {D.}~\bibnamefont
  {Stauffer}},\ }\href {https://doi.org/10.1209/0295-5075/2/10/001} {\bibfield
  {journal} {\bibinfo  {journal} {Europhysics Letters ({EPL})}\ }\textbf
  {\bibinfo {volume} {2}},\ \bibinfo {pages} {739} (\bibinfo {year}
  {1986})}\BibitemShut {NoStop}%
\bibitem [{\citenamefont {Liu}\ \emph {et~al.}(2021)\citenamefont {Liu},
  \citenamefont {Willsher}, \citenamefont {Bilitewski}, \citenamefont {Li},
  \citenamefont {Smith}, \citenamefont {Christensen}, \citenamefont
  {Moessner},\ and\ \citenamefont {Knolle}}]{Liu_2021}%
  \BibitemOpen
  \bibfield  {author} {\bibinfo {author} {\bibfnamefont {S.-W.}\ \bibnamefont
  {Liu}}, \bibinfo {author} {\bibfnamefont {J.}~\bibnamefont {Willsher}},
  \bibinfo {author} {\bibfnamefont {T.}~\bibnamefont {Bilitewski}}, \bibinfo
  {author} {\bibfnamefont {J.-J.}\ \bibnamefont {Li}}, \bibinfo {author}
  {\bibfnamefont {A.}~\bibnamefont {Smith}}, \bibinfo {author} {\bibfnamefont
  {K.}~\bibnamefont {Christensen}}, \bibinfo {author} {\bibfnamefont
  {R.}~\bibnamefont {Moessner}},\ and\ \bibinfo {author} {\bibfnamefont
  {J.}~\bibnamefont {Knolle}},\ }\href
  {https://doi.org/https://doi.org/10.1103/PhysRevB.103.094109} {\bibfield
  {journal} {\bibinfo  {journal} {Physical Review B}\ }\textbf {\bibinfo
  {volume} {103}},\ \bibinfo {pages} {094109} (\bibinfo {year}
  {2021})}\BibitemShut {NoStop}%
\bibitem [{\citenamefont {Abrahams}\ \emph {et~al.}(1979)\citenamefont
  {Abrahams}, \citenamefont {Anderson}, \citenamefont {Licciardello},\ and\
  \citenamefont {Ramakrishnan}}]{Abrahams_1979}%
  \BibitemOpen
  \bibfield  {author} {\bibinfo {author} {\bibfnamefont {E.}~\bibnamefont
  {Abrahams}}, \bibinfo {author} {\bibfnamefont {P.~W.}\ \bibnamefont
  {Anderson}}, \bibinfo {author} {\bibfnamefont {D.~C.}\ \bibnamefont
  {Licciardello}},\ and\ \bibinfo {author} {\bibfnamefont {T.~V.}\ \bibnamefont
  {Ramakrishnan}},\ }\href {https://doi.org/10.1103/physrevlett.42.673}
  {\bibfield  {journal} {\bibinfo  {journal} {Phys Rev Lett}\ }\textbf
  {\bibinfo {volume} {42}},\ \bibinfo {pages} {673} (\bibinfo {year}
  {1979})}\BibitemShut {NoStop}%
\bibitem [{\citenamefont {Lee}\ and\ \citenamefont
  {Ramakrishnan}(1985)}]{Lee_1985}%
  \BibitemOpen
  \bibfield  {author} {\bibinfo {author} {\bibfnamefont {P.~A.}\ \bibnamefont
  {Lee}}\ and\ \bibinfo {author} {\bibfnamefont {T.~V.}\ \bibnamefont
  {Ramakrishnan}},\ }\href {https://doi.org/10.1103/RevModPhys.57.287}
  {\bibfield  {journal} {\bibinfo  {journal} {Rev. Mod. Phys.}\ }\textbf
  {\bibinfo {volume} {57}},\ \bibinfo {pages} {287} (\bibinfo {year}
  {1985})}\BibitemShut {NoStop}%
\bibitem [{\citenamefont {Evers}\ and\ \citenamefont
  {Mirlin}(2008)}]{Evers_2008}%
  \BibitemOpen
  \bibfield  {author} {\bibinfo {author} {\bibfnamefont {F.}~\bibnamefont
  {Evers}}\ and\ \bibinfo {author} {\bibfnamefont {A.~D.}\ \bibnamefont
  {Mirlin}},\ }\href {https://doi.org/10.1103/RevModPhys.80.1355} {\bibfield
  {journal} {\bibinfo  {journal} {Rev. Mod. Phys.}\ }\textbf {\bibinfo {volume}
  {80}},\ \bibinfo {pages} {1355} (\bibinfo {year} {2008})}\BibitemShut
  {NoStop}%
\bibitem [{\citenamefont {Farrelly}(2020)}]{Farrelly_2019}%
  \BibitemOpen
  \bibfield  {author} {\bibinfo {author} {\bibfnamefont {T.}~\bibnamefont
  {Farrelly}},\ }\href
  {https://doi.org/https://doi.org/10.22331/q-2020-11-30-368} {\bibfield
  {journal} {\bibinfo  {journal} {Quantum}\ }\textbf {\bibinfo {volume} {4}},\
  \bibinfo {pages} {368} (\bibinfo {year} {2020})},\ \Eprint
  {https://arxiv.org/abs/1904.13318} {arXiv:1904.13318 [quant-ph]} \BibitemShut
  {NoStop}%
\bibitem [{\citenamefont {G{\"u}tschow}\ \emph {et~al.}(2010)\citenamefont
  {G{\"u}tschow}, \citenamefont {Uphoff}, \citenamefont {Werner},\ and\
  \citenamefont {Zimbor{\'a}s}}]{Gutschow_2010}%
  \BibitemOpen
  \bibfield  {author} {\bibinfo {author} {\bibfnamefont {J.}~\bibnamefont
  {G{\"u}tschow}}, \bibinfo {author} {\bibfnamefont {S.}~\bibnamefont
  {Uphoff}}, \bibinfo {author} {\bibfnamefont {R.~F.}\ \bibnamefont {Werner}},\
  and\ \bibinfo {author} {\bibfnamefont {Z.}~\bibnamefont {Zimbor{\'a}s}},\
  }\href {https://doi.org/10.1063/1.3278513} {\bibfield  {journal} {\bibinfo
  {journal} {J Math Phys}\ }\textbf {\bibinfo {volume} {51}},\ \bibinfo {pages}
  {015203} (\bibinfo {year} {2010})}\BibitemShut {NoStop}%
\bibitem [{\citenamefont {Schlingemann}\ \emph {et~al.}(2008)\citenamefont
  {Schlingemann}, \citenamefont {Vogts},\ and\ \citenamefont
  {Werner}}]{Schlingemann_2008}%
  \BibitemOpen
  \bibfield  {author} {\bibinfo {author} {\bibfnamefont {D.-M.}\ \bibnamefont
  {Schlingemann}}, \bibinfo {author} {\bibfnamefont {H.}~\bibnamefont
  {Vogts}},\ and\ \bibinfo {author} {\bibfnamefont {R.~F.}\ \bibnamefont
  {Werner}},\ }\href {https://doi.org/10.1063/1.3005565} {\bibfield  {journal}
  {\bibinfo  {journal} {J Math Phys}\ }\textbf {\bibinfo {volume} {49}},\
  \bibinfo {pages} {112104} (\bibinfo {year} {2008})}\BibitemShut {NoStop}%
\bibitem [{\citenamefont {Nielsen}\ and\ \citenamefont
  {Chuang}(2010)}]{nielsen_chuang_2010}%
  \BibitemOpen
  \bibfield  {author} {\bibinfo {author} {\bibfnamefont {M.~A.}\ \bibnamefont
  {Nielsen}}\ and\ \bibinfo {author} {\bibfnamefont {I.~L.}\ \bibnamefont
  {Chuang}},\ }\href {https://doi.org/10.1017/CBO9780511976667} {\emph
  {\bibinfo {title} {Quantum Computation and Quantum Information: 10th
  Anniversary Edition}}}\ (\bibinfo  {publisher} {Cambridge University Press},\
  \bibinfo {year} {2010})\BibitemShut {NoStop}%
\bibitem [{\citenamefont {Chandran}\ and\ \citenamefont
  {Laumann}(2015)}]{Chandran_2015}%
  \BibitemOpen
  \bibfield  {author} {\bibinfo {author} {\bibfnamefont {A.}~\bibnamefont
  {Chandran}}\ and\ \bibinfo {author} {\bibfnamefont {C.~R.}\ \bibnamefont
  {Laumann}},\ }\href {https://doi.org/10.1103/PhysRevB.92.024301} {\bibfield
  {journal} {\bibinfo  {journal} {Phys Rev B}\ }\textbf {\bibinfo {volume}
  {92}},\ \bibinfo {pages} {024301} (\bibinfo {year} {2015})}\BibitemShut
  {NoStop}%
\bibitem [{\citenamefont {Koenig}\ and\ \citenamefont
  {Smolin}(2014)}]{Koenig_2014}%
  \BibitemOpen
  \bibfield  {author} {\bibinfo {author} {\bibfnamefont {R.}~\bibnamefont
  {Koenig}}\ and\ \bibinfo {author} {\bibfnamefont {J.~A.}\ \bibnamefont
  {Smolin}},\ }\href {https://doi.org/10.1063/1.4903507} {\bibfield  {journal}
  {\bibinfo  {journal} {J Math Phys}\ }\textbf {\bibinfo {volume} {55}},\
  \bibinfo {pages} {122202} (\bibinfo {year} {2014})}\BibitemShut {NoStop}%
\bibitem [{\citenamefont {Bravyi}\ and\ \citenamefont
  {Maslov}(2021)}]{Bravyi_2021}%
  \BibitemOpen
  \bibfield  {author} {\bibinfo {author} {\bibfnamefont {S.}~\bibnamefont
  {Bravyi}}\ and\ \bibinfo {author} {\bibfnamefont {D.}~\bibnamefont
  {Maslov}},\ }\href {https://doi.org/10.1109/tit.2021.3081415} {\bibfield
  {journal} {\bibinfo  {journal} {IEEE Trans. Inform. Theory}\ }\textbf
  {\bibinfo {volume} {67}},\ \bibinfo {pages} {4546} (\bibinfo {year}
  {2021})}\BibitemShut {NoStop}%
\bibitem [{\citenamefont {van~den Berg}(2020)}]{Berg2020}%
  \BibitemOpen
  \bibfield  {author} {\bibinfo {author} {\bibfnamefont {E.}~\bibnamefont
  {van~den Berg}},\ }\href@noop {} {\  (\bibinfo {year} {2020})},\ \Eprint
  {https://arxiv.org/abs/2008.06011} {arXiv:2008.06011 [quant-ph]} \BibitemShut
  {NoStop}%
\bibitem [{\citenamefont {Hamma}\ \emph {et~al.}(2005)\citenamefont {Hamma},
  \citenamefont {Ionicioiu},\ and\ \citenamefont {Zanardi}}]{Hamma_2005}%
  \BibitemOpen
  \bibfield  {author} {\bibinfo {author} {\bibfnamefont {A.}~\bibnamefont
  {Hamma}}, \bibinfo {author} {\bibfnamefont {R.}~\bibnamefont {Ionicioiu}},\
  and\ \bibinfo {author} {\bibfnamefont {P.}~\bibnamefont {Zanardi}},\ }\href
  {https://doi.org/10.1103/physreva.71.022315} {\bibfield  {journal} {\bibinfo
  {journal} {Phys Rev A}\ }\textbf {\bibinfo {volume} {71}},\ \bibinfo {pages}
  {022315} (\bibinfo {year} {2005})}\BibitemShut {NoStop}%
\bibitem [{\citenamefont {Fattal}\ \emph {et~al.}(2004)\citenamefont {Fattal},
  \citenamefont {Cubitt}, \citenamefont {Yamamoto}, \citenamefont {Bravyi},\
  and\ \citenamefont {Chuang}}]{Fattal2004}%
  \BibitemOpen
  \bibfield  {author} {\bibinfo {author} {\bibfnamefont {D.}~\bibnamefont
  {Fattal}}, \bibinfo {author} {\bibfnamefont {T.~S.}\ \bibnamefont {Cubitt}},
  \bibinfo {author} {\bibfnamefont {Y.}~\bibnamefont {Yamamoto}}, \bibinfo
  {author} {\bibfnamefont {S.}~\bibnamefont {Bravyi}},\ and\ \bibinfo {author}
  {\bibfnamefont {I.~L.}\ \bibnamefont {Chuang}},\ }\href@noop {} {\  (\bibinfo
  {year} {2004})},\ \Eprint {https://arxiv.org/abs/quant-ph/0406168}
  {arXiv:quant-ph/0406168 [quant-ph]} \BibitemShut {NoStop}%
\bibitem [{\citenamefont {Haake}(2010)}]{Haake_book}%
  \BibitemOpen
  \bibfield  {author} {\bibinfo {author} {\bibfnamefont {F.}~\bibnamefont
  {Haake}},\ }\href@noop {} {\emph {\bibinfo {title} {Quantum Signatures of
  Chaos}}},\ \bibinfo {edition} {3rd}\ ed.\ (\bibinfo  {publisher}
  {{Springer}},\ \bibinfo {year} {2010})\BibitemShut {NoStop}%
\bibitem [{\citenamefont {Su}(2020)}]{Su_2020}%
  \BibitemOpen
  \bibfield  {author} {\bibinfo {author} {\bibfnamefont {S.}~\bibnamefont
  {Su}},\ }\href {http://arks.princeton.edu/ark:/88435/dsp01h989r6258}
  {\bibfield  {journal} {\bibinfo  {journal} {{\it Chaos and
  Measurement-Induced Criticality on Stabiliser Circuits}, Princeton University
  Senior Thesis}\ } (\bibinfo {year} {2020})}\BibitemShut {NoStop}%
\bibitem [{\citenamefont {M\"uller}\ \emph {et~al.}(2004)\citenamefont
  {M\"uller}, \citenamefont {Heusler}, \citenamefont {Braun}, \citenamefont
  {Haake},\ and\ \citenamefont {Altland}}]{M_ller_2004}%
  \BibitemOpen
  \bibfield  {author} {\bibinfo {author} {\bibfnamefont {S.}~\bibnamefont
  {M\"uller}}, \bibinfo {author} {\bibfnamefont {S.}~\bibnamefont {Heusler}},
  \bibinfo {author} {\bibfnamefont {P.}~\bibnamefont {Braun}}, \bibinfo
  {author} {\bibfnamefont {F.}~\bibnamefont {Haake}},\ and\ \bibinfo {author}
  {\bibfnamefont {A.}~\bibnamefont {Altland}},\ }\href
  {https://doi.org/10.1103/physrevlett.93.014103} {\bibfield  {journal}
  {\bibinfo  {journal} {Phys Rev Lett}\ }\textbf {\bibinfo {volume} {93}},\
  \bibinfo {pages} {014103} (\bibinfo {year} {2004})}\BibitemShut {NoStop}%
\bibitem [{\citenamefont {Cotler}\ \emph {et~al.}()\citenamefont {Cotler},
  \citenamefont {Gur-Ari}, \citenamefont {Hanada}, \citenamefont {Polchinski},
  \citenamefont {Saad}, \citenamefont {Shenker}, \citenamefont {Stanford},
  \citenamefont {Streicher},\ and\ \citenamefont {Tezuka}}]{Cotler2017}%
  \BibitemOpen
  \bibfield  {author} {\bibinfo {author} {\bibfnamefont {J.~S.}\ \bibnamefont
  {Cotler}}, \bibinfo {author} {\bibfnamefont {G.}~\bibnamefont {Gur-Ari}},
  \bibinfo {author} {\bibfnamefont {M.}~\bibnamefont {Hanada}}, \bibinfo
  {author} {\bibfnamefont {J.}~\bibnamefont {Polchinski}}, \bibinfo {author}
  {\bibfnamefont {P.}~\bibnamefont {Saad}}, \bibinfo {author} {\bibfnamefont
  {S.~H.}\ \bibnamefont {Shenker}}, \bibinfo {author} {\bibfnamefont
  {D.}~\bibnamefont {Stanford}}, \bibinfo {author} {\bibfnamefont
  {A.}~\bibnamefont {Streicher}},\ and\ \bibinfo {author} {\bibfnamefont
  {M.}~\bibnamefont {Tezuka}},\ }\bibfield  {journal} {\bibinfo  {journal} {J
  High Energy Phys}\ }\href {https://doi.org/10.1007/jhep05(2017)118}
  {10.1007/jhep05(2017)118}\BibitemShut {NoStop}%
\bibitem [{\citenamefont {Gharibyan}\ \emph {et~al.}()\citenamefont
  {Gharibyan}, \citenamefont {Hanada}, \citenamefont {Shenker},\ and\
  \citenamefont {Tezuka}}]{Gharibyan_2018}%
  \BibitemOpen
  \bibfield  {author} {\bibinfo {author} {\bibfnamefont {H.}~\bibnamefont
  {Gharibyan}}, \bibinfo {author} {\bibfnamefont {M.}~\bibnamefont {Hanada}},
  \bibinfo {author} {\bibfnamefont {S.~H.}\ \bibnamefont {Shenker}},\ and\
  \bibinfo {author} {\bibfnamefont {M.}~\bibnamefont {Tezuka}},\ }\bibfield
  {journal} {\bibinfo  {journal} {J High Energy Phys}\ }\href
  {https://doi.org/10.1007/jhep07(2018)124}
  {10.1007/jhep07(2018)124}\BibitemShut {NoStop}%
\bibitem [{\citenamefont {Sierant}\ \emph
  {et~al.}(2020{\natexlab{b}})\citenamefont {Sierant}, \citenamefont
  {Delande},\ and\ \citenamefont {Zakrzewski}}]{Sierant_2020_2}%
  \BibitemOpen
  \bibfield  {author} {\bibinfo {author} {\bibfnamefont {P.}~\bibnamefont
  {Sierant}}, \bibinfo {author} {\bibfnamefont {D.}~\bibnamefont {Delande}},\
  and\ \bibinfo {author} {\bibfnamefont {J.}~\bibnamefont {Zakrzewski}},\
  }\href {https://doi.org/10.1103/physrevlett.124.186601} {\bibfield  {journal}
  {\bibinfo  {journal} {Phys Rev Lett}\ }\textbf {\bibinfo {volume} {124}},\
  \bibinfo {pages} {186601} (\bibinfo {year} {2020}{\natexlab{b}})}\BibitemShut
  {NoStop}%
\bibitem [{\citenamefont {Winer}\ \emph {et~al.}(2020)\citenamefont {Winer},
  \citenamefont {Jian},\ and\ \citenamefont {Swingle}}]{Winer_2020}%
  \BibitemOpen
  \bibfield  {author} {\bibinfo {author} {\bibfnamefont {M.}~\bibnamefont
  {Winer}}, \bibinfo {author} {\bibfnamefont {S.-K.}\ \bibnamefont {Jian}},\
  and\ \bibinfo {author} {\bibfnamefont {B.}~\bibnamefont {Swingle}},\ }\href
  {https://doi.org/10.1103/physrevlett.125.250602} {\bibfield  {journal}
  {\bibinfo  {journal} {Phys Rev Lett}\ }\textbf {\bibinfo {volume} {125}},\
  \bibinfo {pages} {250602} (\bibinfo {year} {2020})}\BibitemShut {NoStop}%
\bibitem [{\citenamefont {Liao}\ \emph {et~al.}(2020)\citenamefont {Liao},
  \citenamefont {Vikram},\ and\ \citenamefont {Galitski}}]{Liao_2020}%
  \BibitemOpen
  \bibfield  {author} {\bibinfo {author} {\bibfnamefont {Y.}~\bibnamefont
  {Liao}}, \bibinfo {author} {\bibfnamefont {A.}~\bibnamefont {Vikram}},\ and\
  \bibinfo {author} {\bibfnamefont {V.}~\bibnamefont {Galitski}},\ }\href
  {https://doi.org/10.1103/physrevlett.125.250601} {\bibfield  {journal}
  {\bibinfo  {journal} {Phys Rev Lett}\ }\textbf {\bibinfo {volume} {125}},\
  \bibinfo {pages} {250601} (\bibinfo {year} {2020})}\BibitemShut {NoStop}%
\bibitem [{\citenamefont {Gross}(2006)}]{Gross_2006a}%
  \BibitemOpen
  \bibfield  {author} {\bibinfo {author} {\bibfnamefont {D.}~\bibnamefont
  {Gross}},\ }\href {https://doi.org/10.1063/1.2393152} {\bibfield  {journal}
  {\bibinfo  {journal} {J Math Phys}\ }\textbf {\bibinfo {volume} {47}},\
  \bibinfo {pages} {122107} (\bibinfo {year} {2006})}\BibitemShut {NoStop}%
\bibitem [{\citenamefont {Mari}\ and\ \citenamefont
  {Eisert}(2012)}]{Mari_2012}%
  \BibitemOpen
  \bibfield  {author} {\bibinfo {author} {\bibfnamefont {A.}~\bibnamefont
  {Mari}}\ and\ \bibinfo {author} {\bibfnamefont {J.}~\bibnamefont {Eisert}},\
  }\href {https://doi.org/10.1103/physrevlett.109.230503} {\bibfield  {journal}
  {\bibinfo  {journal} {Phys Rev Lett}\ }\textbf {\bibinfo {volume} {109}},\
  \bibinfo {pages} {230503} (\bibinfo {year} {2012})}\BibitemShut {NoStop}%
\bibitem [{\citenamefont {Hudson}(1974)}]{Hudson_1974}%
  \BibitemOpen
  \bibfield  {author} {\bibinfo {author} {\bibfnamefont {R.}~\bibnamefont
  {Hudson}},\ }\href {https://doi.org/10.1016/0034-4877(74)90007-x} {\bibfield
  {journal} {\bibinfo  {journal} {Rep. Math. Phys.}\ }\textbf {\bibinfo
  {volume} {6}},\ \bibinfo {pages} {249} (\bibinfo {year} {1974})}\BibitemShut
  {NoStop}%
\bibitem [{\citenamefont {Weedbrook}\ \emph {et~al.}(2012)\citenamefont
  {Weedbrook}, \citenamefont {Pirandola}, \citenamefont
  {Garc{\'{\i}}a-Patr{\'{o}}n}, \citenamefont {Cerf}, \citenamefont {Ralph},
  \citenamefont {Shapiro},\ and\ \citenamefont {Lloyd}}]{Weedbrook_2012}%
  \BibitemOpen
  \bibfield  {author} {\bibinfo {author} {\bibfnamefont {C.}~\bibnamefont
  {Weedbrook}}, \bibinfo {author} {\bibfnamefont {S.}~\bibnamefont
  {Pirandola}}, \bibinfo {author} {\bibfnamefont {R.}~\bibnamefont
  {Garc{\'{\i}}a-Patr{\'{o}}n}}, \bibinfo {author} {\bibfnamefont {N.~J.}\
  \bibnamefont {Cerf}}, \bibinfo {author} {\bibfnamefont {T.~C.}\ \bibnamefont
  {Ralph}}, \bibinfo {author} {\bibfnamefont {J.~H.}\ \bibnamefont {Shapiro}},\
  and\ \bibinfo {author} {\bibfnamefont {S.}~\bibnamefont {Lloyd}},\ }\href
  {https://doi.org/10.1103/revmodphys.84.621} {\bibfield  {journal} {\bibinfo
  {journal} {Rev Mod Phys}\ }\textbf {\bibinfo {volume} {84}},\ \bibinfo
  {pages} {621} (\bibinfo {year} {2012})}\BibitemShut {NoStop}%
\bibitem [{\citenamefont {Gross}\ \emph {et~al.}(2007)\citenamefont {Gross},
  \citenamefont {Audenaert},\ and\ \citenamefont {Eisert}}]{Eisert_2007}%
  \BibitemOpen
  \bibfield  {author} {\bibinfo {author} {\bibfnamefont {D.}~\bibnamefont
  {Gross}}, \bibinfo {author} {\bibfnamefont {K.}~\bibnamefont {Audenaert}},\
  and\ \bibinfo {author} {\bibfnamefont {J.}~\bibnamefont {Eisert}},\ }\href
  {https://doi.org/10.1063/1.2716992} {\bibfield  {journal} {\bibinfo
  {journal} {J Math Phys}\ }\textbf {\bibinfo {volume} {48}},\ \bibinfo {pages}
  {052104} (\bibinfo {year} {2007})}\BibitemShut {NoStop}%
\bibitem [{\citenamefont {Zhu}\ \emph {et~al.}(2016)\citenamefont {Zhu},
  \citenamefont {Kueng}, \citenamefont {Grassl},\ and\ \citenamefont
  {Gross}}]{Zhu_2016}%
  \BibitemOpen
  \bibfield  {author} {\bibinfo {author} {\bibfnamefont {H.}~\bibnamefont
  {Zhu}}, \bibinfo {author} {\bibfnamefont {R.}~\bibnamefont {Kueng}}, \bibinfo
  {author} {\bibfnamefont {M.}~\bibnamefont {Grassl}},\ and\ \bibinfo {author}
  {\bibfnamefont {D.}~\bibnamefont {Gross}},\ }\href@noop {} {} (\bibinfo
  {year} {2016}),\ \Eprint {https://arxiv.org/abs/1609.08172} {arXiv:1609.08172
  [quant-ph]} \BibitemShut {NoStop}%
\bibitem [{\citenamefont {Zimborás}\ \emph {et~al.}(2022)\citenamefont
  {Zimborás}, \citenamefont {Farrelly}, \citenamefont {Farkas},\ and\
  \citenamefont {Masanes}}]{Zimboras_2020}%
  \BibitemOpen
  \bibfield  {author} {\bibinfo {author} {\bibfnamefont {Z.}~\bibnamefont
  {Zimborás}}, \bibinfo {author} {\bibfnamefont {T.}~\bibnamefont {Farrelly}},
  \bibinfo {author} {\bibfnamefont {S.}~\bibnamefont {Farkas}},\ and\ \bibinfo
  {author} {\bibfnamefont {L.}~\bibnamefont {Masanes}},\ }\href
  {https://doi.org/https://doi.org/10.22331/q-2022-06-29-748} {\bibfield
  {journal} {\bibinfo  {journal} {Quantum}\ }\textbf {\bibinfo {volume} {6}},\
  \bibinfo {pages} {748} (\bibinfo {year} {2022})}\BibitemShut {NoStop}%
\bibitem [{\citenamefont {Bratteli}\ and\ \citenamefont
  {Robinson}(2002)}]{BR_2}%
  \BibitemOpen
  \bibfield  {author} {\bibinfo {author} {\bibfnamefont {O.}~\bibnamefont
  {Bratteli}}\ and\ \bibinfo {author} {\bibfnamefont {D.}~\bibnamefont
  {Robinson}},\ }\href@noop {} {\emph {\bibinfo {title} {Operator Algebras and
  Quantum Statistical Mechanics 2}}},\ \bibinfo {edition} {2nd}\ ed.\ (\bibinfo
   {publisher} {Springer},\ \bibinfo {year} {2002})\BibitemShut {NoStop}%
\bibitem [{\citenamefont {Dierckx}\ \emph {et~al.}(2008)\citenamefont
  {Dierckx}, \citenamefont {Fannes},\ and\ \citenamefont
  {Pogorzelska}}]{Dierckx_2008}%
  \BibitemOpen
  \bibfield  {author} {\bibinfo {author} {\bibfnamefont {B.}~\bibnamefont
  {Dierckx}}, \bibinfo {author} {\bibfnamefont {M.}~\bibnamefont {Fannes}},\
  and\ \bibinfo {author} {\bibfnamefont {M.}~\bibnamefont {Pogorzelska}},\
  }\href {https://doi.org/10.1063/1.2841326} {\bibfield  {journal} {\bibinfo
  {journal} {J Math Phys}\ }\textbf {\bibinfo {volume} {49}},\ \bibinfo {pages}
  {032109} (\bibinfo {year} {2008})}\BibitemShut {NoStop}%
\bibitem [{\citenamefont {Lee}\ and\ \citenamefont {Fisher}(1981)}]{Lee_1981}%
  \BibitemOpen
  \bibfield  {author} {\bibinfo {author} {\bibfnamefont {P.~A.}\ \bibnamefont
  {Lee}}\ and\ \bibinfo {author} {\bibfnamefont {D.~S.}\ \bibnamefont
  {Fisher}},\ }\href {https://doi.org/10.1103/PhysRevLett.47.882} {\bibfield
  {journal} {\bibinfo  {journal} {Phys. Rev. Lett.}\ }\textbf {\bibinfo
  {volume} {47}},\ \bibinfo {pages} {882} (\bibinfo {year} {1981})}\BibitemShut
  {NoStop}%
\bibitem [{\citenamefont {Harrow}\ and\ \citenamefont
  {Mehraban}(2023)}]{Harrow_2018}%
  \BibitemOpen
  \bibfield  {author} {\bibinfo {author} {\bibfnamefont {A.~W.}\ \bibnamefont
  {Harrow}}\ and\ \bibinfo {author} {\bibfnamefont {S.}~\bibnamefont
  {Mehraban}},\ }\bibfield  {journal} {\bibinfo  {journal} {Commun. Math.
  Phys.}\ }\href {https://doi.org/10.1007/s00220-023-04675-z}
  {10.1007/s00220-023-04675-z} (\bibinfo {year} {2023})\BibitemShut {NoStop}%
\bibitem [{\citenamefont {Zhou}\ \emph {et~al.}(2020)\citenamefont {Zhou},
  \citenamefont {Yang}, \citenamefont {Hamma},\ and\ \citenamefont
  {Chamon}}]{Zhou_2019_2}%
  \BibitemOpen
  \bibfield  {author} {\bibinfo {author} {\bibfnamefont {S.}~\bibnamefont
  {Zhou}}, \bibinfo {author} {\bibfnamefont {Z.-C.}\ \bibnamefont {Yang}},
  \bibinfo {author} {\bibfnamefont {A.}~\bibnamefont {Hamma}},\ and\ \bibinfo
  {author} {\bibfnamefont {C.}~\bibnamefont {Chamon}},\ }\href
  {https://doi.org/10.21468/SciPostPhys.9.6.087} {\bibfield  {journal}
  {\bibinfo  {journal} {SciPost Phys.}\ }\textbf {\bibinfo {volume} {9}},\
  \bibinfo {pages} {87} (\bibinfo {year} {2020})}\BibitemShut {NoStop}%
\bibitem [{\citenamefont {Haferkamp}\ \emph {et~al.}(2023)\citenamefont
  {Haferkamp}, \citenamefont {Montealegre-Mora}, \citenamefont {Heinrich},
  \citenamefont {Eisert}, \citenamefont {Gross},\ and\ \citenamefont
  {Roth}}]{Eisert_2020}%
  \BibitemOpen
  \bibfield  {author} {\bibinfo {author} {\bibfnamefont {J.}~\bibnamefont
  {Haferkamp}}, \bibinfo {author} {\bibfnamefont {F.}~\bibnamefont
  {Montealegre-Mora}}, \bibinfo {author} {\bibfnamefont {M.}~\bibnamefont
  {Heinrich}}, \bibinfo {author} {\bibfnamefont {J.}~\bibnamefont {Eisert}},
  \bibinfo {author} {\bibfnamefont {D.}~\bibnamefont {Gross}},\ and\ \bibinfo
  {author} {\bibfnamefont {I.}~\bibnamefont {Roth}},\ }\href
  {https://doi.org/10.1007/s00220-022-04507-6} {\bibfield  {journal} {\bibinfo
  {journal} {Commun. Math. Phys.}\ }\textbf {\bibinfo {volume} {397}},\
  \bibinfo {pages} {995} (\bibinfo {year} {2023})}\BibitemShut {NoStop}%
\bibitem [{\citenamefont {Leone}\ \emph {et~al.}(2021)\citenamefont {Leone},
  \citenamefont {Oliviero}, \citenamefont {Zhou},\ and\ \citenamefont
  {Hamma}}]{Leone_2021}%
  \BibitemOpen
  \bibfield  {author} {\bibinfo {author} {\bibfnamefont {L.}~\bibnamefont
  {Leone}}, \bibinfo {author} {\bibfnamefont {S.~F.~E.}\ \bibnamefont
  {Oliviero}}, \bibinfo {author} {\bibfnamefont {Y.}~\bibnamefont {Zhou}},\
  and\ \bibinfo {author} {\bibfnamefont {A.}~\bibnamefont {Hamma}},\ }\href
  {https://doi.org/10.22331/q-2021-05-04-453} {\bibfield  {journal} {\bibinfo
  {journal} {{Quantum}}\ }\textbf {\bibinfo {volume} {5}},\ \bibinfo {pages}
  {453} (\bibinfo {year} {2021})}\BibitemShut {NoStop}%
\bibitem [{\citenamefont {Taylor}(2022)}]{Taylor_2022}%
  \BibitemOpen
  \bibfield  {author} {\bibinfo {author} {\bibfnamefont {J.}~\bibnamefont
  {Taylor}},\ }\href@noop {} {\bibfield  {journal} {\bibinfo  {journal} {{\it
  The Stability of Localisation in Random Time-Periodic Clifford Circuits},
  Master of Science dissertation, UCL, London}\ } (\bibinfo {year}
  {2022})}\BibitemShut {NoStop}%
\bibitem [{\citenamefont {Saberi}(2015)}]{Saberi_2015}%
  \BibitemOpen
  \bibfield  {author} {\bibinfo {author} {\bibfnamefont {A.~A.}\ \bibnamefont
  {Saberi}},\ }\href
  {https://doi.org/https://doi.org/10.1016/j.physrep.2015.03.003} {\bibfield
  {journal} {\bibinfo  {journal} {Phys. Rep.}\ }\textbf {\bibinfo {volume}
  {578}},\ \bibinfo {pages} {1} (\bibinfo {year} {2015})}\BibitemShut {NoStop}%
\bibitem [{\citenamefont {Durrett}(1984)}]{Durrett_1984}%
  \BibitemOpen
  \bibfield  {author} {\bibinfo {author} {\bibfnamefont {R.}~\bibnamefont
  {Durrett}},\ }\href {https://doi.org/10.1214/aop/1176993140} {\bibfield
  {journal} {\bibinfo  {journal} {Ann. Probab.}\ }\textbf {\bibinfo {volume}
  {12}},\ \bibinfo {pages} {999} (\bibinfo {year} {1984})}\BibitemShut
  {NoStop}%
\end{thebibliography}%

\end{document}